\newcommand{\Do}{{\small\bf do}\ }
\newcommand{\Return}{{\small\bf return\ }}
\newcommand{\If}{{\small\bf if}\ }
\newcommand{\Then}{{\small\bf then}\ }
\newcommand{\Else}{{\small\bf else}\ }
\newcommand{\For}{{\small\bf for}\ }
\newcommand{\xbeginlgox}{\begin{minipage}{1in}\begin{tabbing}
           \quad\=\qquad\=\qquad\=\qquad\=\qquad\=\qquad\=\qquad\=\kill}
        \newcommand{\xendlgox}{\end{tabbing}\end{minipage}}
\newcommand{\xbeginlgoxx}{\begin{minipage}{1in}\begin{tabbing}
           \=\quad\=\quad\=\quad\=\quad\=\quad\=\quad\=\kill}
        \newcommand{\xendlgoxx}{\end{tabbing}\end{minipage}}
\newenvironment{algorithmEnv}{
   \begin{tabular}{|l|}\hline\xbeginlgox\bigskip}
    {\xendlgox\\\hline\end{tabular}}
\newenvironment{algorithmEnvX}{
   \begin{tabular}{|l|}\hline\xbeginlgoxx\bigskip}
    {\xendlgoxx\\\hline\end{tabular}}
\newenvironment{program}{
   \begin{minipage}{4.0in}
   \begin{tabbing}
       \ \ \ \ \= \ \ \ \= \ \ \ \ \= \ \ \ \ \= \ \ \ \ \=
      \ \ \ \ \= \ \ \ \ \= \ \ \ \ \= \ \ \ \ \=
      \ \ \ \ \= \ \ \ \ \= \ \ \ \ \= \ \ \ \ \= \kill
}{
   \end{tabbing}
   \end{minipage}
}
\newcommand{\AlgorithmI}[1]{{\textcolor[named]{RedViolet}{\texttt{\bf{#1}}}}}
\newcommand{\Algorithm}[1]{{\AlgorithmI{#1}\index{algorithm!#1@{\AlgorithmI{#1}}}}}
\newcommand{\CodeComment}[1]{\textcolor{blue}{\texttt{#1}}}
\numberwithin{figure}{section}
\numberwithin{table}{section}
\numberwithin{equation}{section}%
\renewcommand{\th}{th\xspace}
\newcommand{\HLinkShort}[2]{\hyperref[#2]{#1\ref*{#2}}}
\newcommand{\HLink}[2]{\hyperref[#2]{#1~\ref*{#2}}}
\newcommand{\HLinkPage}[2]{\hyperref[#2]{#1~\ref*{#2}%
      $_\text{p\pageref{#2}}$}}
\newcommand{\HLinkPageOnly}[1]{\hyperref[#1]{Page~\refpage*{#1}%
      $_\text{p\pageref{#1}}$}}
\newcommand{\HLinkSuffix}[3]{\hyperref[#2]{#1\ref*{#2}{#3}}}
\newcommand{\HLinkPageSuffix}[3]{\hyperref[#2]{#1\ref*{#2}%
      #3$_\text{p\pageref{#2}}$}}
\newcommand{\figlab}[1]{\label{fig:#1}}
\newcommand{\figref}[1]{\HLink{Figure}{fig:#1}}
\newcommand{\seclab}[1]{\label{sec:#1}}
\newcommand{\secref}[1]{\HLink{Section}{sec:#1}}
\newcommand{\corlab}[1]{\label{cor:#1}}
\newcommand{\corref}[1]{\HLink{Corollary}{cor:#1}}%
\providecommand{\deflab}[1]{\label{def:#1}}
\newcommand{\defref}[1]{\HLink{Definition}{def:#1}}
\newcommand{\lemlab}[1]{\label{lemma:#1}}
\newcommand{\lemref}[1]{\HLink{Lemma}{lemma:#1}}%
\newcommand{\obslab}[1]{\label{observation:#1}}
\newcommand{\obsref}[1]{\HLink{Observation}{observation:#1}}
\newcommand{\thmlab}[1]{{\label{theo:#1}}}
\newcommand{\thmref}[1]{\HLink{Theorem}{theo:#1}}
\providecommand{\eqlab}[1]{}%
\renewcommand{\eqlab}[1]{\label{equation:#1}}
\theoremstyle{plain}%
\newtheorem{theorem}{Theorem}[section]
\newtheorem{lemma}[theorem]{Lemma}
\newtheorem{corollary}[theorem]{Corollary}
\newtheorem{observation}[theorem]{Observation}
\theoremstyle{plain}%
\newtheorem*{remark:unnumbered}[theorem]{Remark}%
\newtheorem{remark}[theorem]{Remark}%
\newtheorem{definition}[theorem]{Definition}
\theoremstyle{nonumberplain}%
\newtheorem{proof}{Proof:}%
\newcommand{\myqedsymbol}{\rule{2mm}{2mm}}
\newtheorem{proofof}{Proof of\!}%
\def\compactify{\itemsep=0pt \topsep=0pt \partopsep=0pt \parsep=0pt}
\let\latexusecounter=\usecounter
\newcommand{\eps}{\varepsilon}%
\def\bar{\overline}
\def\script#1{\mathcal{#1}}
\def\opt{\textsc{OPT}}
\def\etal{\text{et al.}\xspace}
\def\sep{\;|\;}
\def\card#1{|#1|}
\def\set#1{\{#1\}}
\newcommand{\sol}{\ensuremath{\mathtt{sol}}}%
\newcommand{\solS}{\mathtt{sol}^{}_\GSample}%
\def\path{\mathrm{path}}
\newcommand{\polylog}{\mathop{\mathrm{polylog}}}%
\newcommand{\probSetCover}{\prob{Set\,Cover}{}\xspace}
\newcommand{\pbrcx}[1]{\left[ {#1} \right]}%
\newcommand{\Prob}[1]{\mathop{\mathbf{Pr}}\!\pbrcx{#1}}
\newcommand{\pth}[1]{\mleft({#1}\mright)}%
\newcommand{\brc}[1]{\left\{ {#1} \right\}}
\newcommand{\Set}[2]{\left\{ #1 \;\middle\vert\; #2 \right\}}
\def\pr{\texttt{Pr}}
\def\total{\texttt{total}}
\def\mypar#1{\medskip\noindent\textbf{#1}}
\newcommand{\tldO}{\scalerel*{\widetilde{O}}{j^2}}%
\newcommand{\tldOmega}{\scalerel*{\widetilde{\Omega}}{j^2}}%
\newcommand{\tldTheta}{\scalerel*{\widetilde{\Theta}}{j^2}}%
\newcommand{\prob}[1]{\textup{\fontencoding{T1}\fontfamily{ppl}\fontseries{m
}\fontshape{n}\selectfont #1}\xspace}
\def\sI{\script{I}}
\def\sC{\script{C}}
\def\sM{\script{M}}
\newcommand{\GSet}{\mathsf{U}}%
\newcommand{\GSetA}{\mathsf{V}}%
\newcommand{\GSample}{\mathsf{S}}%
\newcommand{\GLeftover}{\mathsf{L}}%
\newcommand{\sS}{\mathcal{F}}%{math\script{S}}
\newcommand{\sSA}{\mathcal{H}}%{math\script{S}}
\newcommand{\Cover}{\mathcal{C}}%{math\script{S}}
\newcommand{\CoverA}{\mathcal{D}}%{math\script{S}}
\newcommand{\sSSample}{\mathcal{F}_\GSample}%{math\script{S}}
\def\NP{\ensuremath{\mathrm{\mathbf{NP}}}}
\newcommand{\SarielThanks}[1]{\thanks{Department of Computer Science;
      University of Illinois; 201 N. Goodwin Avenue; Urbana, IL,
      61801, USA; %
      { \url{http://sarielhp.org/}; %
         \href{mailto:sariel@illinois.edu}{sariel@illinois.edu}%
         .} #1}}
\newcommand{\PiotrThanks}[1]{%
   \thanks{CSAIL; EECS;
      MIT; 32 Vassar Street; Cambridge, MA,
      02139, USA; %
       \href{mailto:indyk@mit.edu}{indyk@mit.edu}. #1}}%
\newcommand{\SepidehThanks}[1]{%
   \thanks{CSAIL; EECS;
      MIT; 32 Vassar Street; Cambridge, MA,
      02139, USA; %
       \href{mailto:mahabadi@mit.edu}{mahabadi@mit.edu}.% 
      #1}%
}
\newcommand{\AliThanks}[1]{%
   \thanks{CSAIL; EECS;
      MIT; 32 Vassar Street; Cambridge, MA,
      02139, USA; %
      \href{mailto:vakilian@mit.edu}{vakilian@mit.edu}.% 
      #1}%
}
\newcommand{\algIterSC}{\Algorithm{iterSetCover}\xspace}
\renewcommand{\Algorithm}[1]{{\AlgorithmI{#1}}}
\newcommand{\ds}{\displaystyle}%
\newcommand{\offlineSC}{\Algorithm{algOfflineSC}\xspace}%
\newcommand{\recoverRanBit}{\Algorithm{algRecoverBit}\xspace}
\newcommand{\existsDisj}{\Algorithm{algExistsDisj}\xspace}
\newcommand{\range}{\mathsf{r}}%
\renewcommand{\Re}{\mathbb{R}}
\newcommand{\INP}[1]{#1}% In not pods...
\newcommand{\InPODS}[1]{}% In not pods...
\newcommand{\si}[1]{#1}%
\begin{document}

\title{Towards Tight Bounds for the Streaming Set Cover Problem%
   \footnote{%
      A preliminary version of this paper is to appear in PODS 2016. %
      Work on this paper by SHP was partially supported by NSF AF
      awards CCF-1421231, and CCF-1217462.  Other authors were
      supported by NSF, Simons Foundation and MADALGO --- Center for
      Massive Data Algorithmics --- a Center of the Danish National
      Research Foundation.%
   }%
}

\author{%
   Sariel Har-Peled%
   \SarielThanks{}%
   \and%
   Piotr Indyk%
   \PiotrThanks{}%
   \and%
   Sepideh Mahabadi%
   \SepidehThanks{}%
   \and%
   Ali Vakilian%
   \AliThanks{}%
}

\date{}

\maketitle

\begin{abstract}
    We consider the classic \probSetCover problem in the data stream
    model.  For $n$ elements and $m$ sets ($m\geq n$) we give a
    $O(1/\delta)$-pass algorithm with a strongly sub-linear
    $\tldO(mn^{\delta})$ space and logarithmic approximation
    factor\footnote{%
       The notations $\tldO(f(n,m))$ and $\tldOmega(f(n,m))$ hide
       polylogarithmic factors. Formally $\tldO(f(n,m))$ and
       $\tldOmega(f(n,m))$ are short form for
       $O\pth{\bigl. f(n,m) \polylog(n,m)}$ and
       $\Omega\pth{ \bigl. f(n,m)/ \polylog(n,m)}$, respectively.}.
    This yields a significant improvement over the earlier algorithm
    of Demaine \etal \cite{dimv-sccsc-14} that uses exponentially
    larger number of passes.  We complement this result by showing
    that the tradeoff between the number of passes and space exhibited
    by our algorithm is tight, at least when the approximation factor
    is equal to $1$.  Specifically, we show that any algorithm that
    computes set cover exactly using $({1 \over 2\delta}-1)$ passes
    must use $\tldOmega(mn^{\delta})$ space in the regime of
    $m=O(n)$. Furthermore, we consider the problem in the geometric
    setting where the elements are points in $\Re^2$ and sets are
    either discs, axis-parallel rectangles, or fat triangles in the
    plane, and show that our algorithm (with a slight modification)
    uses the optimal $\tldO(n)$ space to find a logarithmic
    approximation in $O(1/\delta)$ passes.

     Finally, we show that any randomized one-pass algorithm that
    distinguishes between covers of size 2 and 3 must use a linear
    (i.e., $\Omega(mn)$) amount of space. 
    This is the first result showing that a randomized, approximate 
    algorithm cannot achieve a space bound that is sublinear in the input size. 

	This indicates that using multiple passes might be necessary in order
    to achieve sub-linear space bounds for this problem while
    guaranteeing small approximation factors.
\end{abstract}

\section{Introduction}

The \probSetCover problem is a classic combinatorial optimization
task.  Given a ground set of $n$ elements
$\GSet=\set{e_1, \cdots, e_n}$, and a family of $m$ sets
$\sS = \set{\range_1, \dots, \range_m}$ where $m\geq n$, the goal is to select a subset
$\sI \subseteq \sS$ such that $\sI$ {\em covers} $\GSet$, i.e.,
$\GSet \subseteq \bigcup_{S\in\sI} S$, and the number of the sets in
$\sI$ is as small as possible.  \probSetCover is a well-studied
problem with applications in many areas, including operations
research~\cite{gw-ceaas-97}, information retrieval and data
mining~\cite{sg-mcsma-09}, web host analysis \cite{ckt-mcmr-10}, and
many others.

Although the problem of finding an optimal solution is NP-complete, a natural greedy
algorithm which iteratively picks the ``best'' remaining set is widely
used.  The algorithm often finds solutions that are very close to
optimal. Unfortunately, due to its sequential nature, this algorithm
does not scale very well to massive data sets (e.g., see Cormode
\etal~\cite{ckw-scavl-10} for an experimental evaluation). This
difficulty has motivated a considerable research effort whose goal was
to design algorithms that are capable of handling large data
efficiently on modern architectures. Of particular interest are
\emph{data stream} algorithms, which compute the solution using only a
small number of sequential passes over the data using a limited
memory.  In the \emph{streaming} \probSetCover
problem~\cite{sg-mcsma-09}, the set of elements $\GSet$ is stored in
the memory in advance; the sets $\range_1, \cdots, \range_m$ are stored
consecutively in a read-only repository and an algorithm can access
the sets only by performing sequential scans of the
repository. However, the amount of read-write memory available to the
algorithm is limited, and is smaller than the input size (which could
be as large as $mn$). The objective is to design an efficient
approximation algorithm for the \probSetCover problem that performs
few passes over the data, and uses as little memory as possible.

The last few years have witnessed a rapid development of new streaming
algorithms for the \probSetCover problem, in both theory and applied
communities, see \cite{sg-mcsma-09, ckw-scavl-10, kmvv-fgams-13,
   er-sssc-14, dimv-sccsc-14, cw-igpcs-16}. \figref{f:table} presents
the approximation and space bounds achieved by those algorithms, as
well as the lower bounds\footnote{Note that the simple greedy
   algorithm can be implemented by either storing the whole input (in
   one pass), or by iteratively updating the set of yet-uncovered
   elements (in at most $n$ passes).}.

\begin{figure}[t]%[h]
    \begin{center}
        % \centering \tabcolsep=0.5\tabcolsep
        \begin{tabular}{|c|c|c|c|c|c|}
          \hline
          {\bf Result} & {\bf Approx\INP{imation}}& {\bf Passes} & {\bf Space} & {\bf R}\\
          \hline\hline
          Greedy
                       & $\ln n$ & $1$ & $O(m n)$ & \\% 
          algorithm%
                       & $\ln n$ & $n$ & $O(n)$ & \\
          \hline
          \cite{sg-mcsma-09} & $O(\log n)$ & $O(\log n)$ & $O(n^2 \ln n)$\!\!\! & \\
          \hline
          \cite{er-sssc-14} & $O(\sqrt{n})$ & 1 & $\tldTheta(n)$ & R\\
          \hline
          \cite{cw-igpcs-16} & $O({n^{\delta}/ \delta})$ & ${1 /
                                                           \delta}-1$
                                                                 & $\tldTheta(n)$ & R\\
          \hline
          \cite{n-ccasp-02} & $\frac{1}{2} \log n$ & $O(\log n)$ & $\tldOmega(m)$ & R\\
          \hline
          \cite{dimv-sccsc-14}& $O(4^{1/\delta}\rho)$ & $O(4^{1/\delta})$ & $\tldO(m n^{\delta})$ & R\\
          \hline
          \cite{dimv-sccsc-14} & $O(1)$ & $O(\log n)$ & $\tldOmega(m n)$ & \\
          \hline \hline
          \thmref{unweight} & $O(\rho/\delta)$ & $2/\delta$ &  $\tldO(m n^{\delta})$ & R\\
          \hline
          \thmref{lower-bound-set-cover} & $3/2$ & 1 & $\Omega(m n)$ & R\\
          
          \hline
          \thmref{reduction}%
                 &%
                   1 & ${1 / 2\delta}-1$%
          &  
                                           $\tldOmega(mn^{\delta})$ & R \\
          \hline \hline
          %\begin{minipage}{2.5cm}
          \small{\prob{Geometric Set Cover}}
          \scriptsize{(\thmref{geometric-set-cover})}
          %\end{minipage}%
        & $O(\rho_g)$ & $O(1)$ &  $\tldO(n)$ & R\\
          \hline
          %\begin{minipage}{2.5cm}
              \small{\prob{$s$-Sparse Set Cover}}
              \scriptsize{(\thmref{sparse-set-cover})}
          %\end{minipage}%
          & 1 & ${1 / 2\delta}-1$ &  $\tldOmega(ms)$ & R \\
          \hline%
        \end{tabular}
    \end{center}
    \vspace{-0.5cm}%
    \caption{{Summary of past work and our results.  The
          last column indicates if the scheme is randomized, $\rho$
          denotes the approximation factor of an off-line algorithm
          solving \probSetCover, which is $\ln n$ for the greedy, and
          $1$ for exponential algorithm. Similarly, $\rho_g$ denotes
          the approximation factor of an off-line algorithm solving
          geometric \prob{Set
             Cover}. %We allow any $\delta>0$ except in the
          %geometric \prob{Set Cover} in which $\delta\leq {1\over 4}$. 
          Finally,
          in the \prob{$s$-Sparse Set Cover} problem, $s \leq n^{\delta}$ denotes 
          an upper bound on the sizes of the input sets. Our lower bounds for 
          \probSetCover and \prob{$s$-Sparse Set Cover} hold for $m=O(n)$.
           Moreover,~\cite{er-sssc-14}
          and~\cite{cw-igpcs-16} proved that their algorithms are
          tight. 
          Here, and in the rest of the paper, all $\log$ are in
          base two.}}
    \figlab{f:table}
    \InPODS{\vspace{-0.7cm}}%
\end{figure}

\mypar{Related work.} %
The semi-streaming \probSetCover problem was first studied by Saha and
Getoor \cite{sg-mcsma-09}. Their result for \prob{Max $k$-Cover}
problem implies a $O(\log n)$-pass $O(\log n)$-approximation algorithm
for the \probSetCover problem that uses $\tldO(n^2)$ space.  
Adopting the standard greedy algorithm of \prob{Set Cover} with a thresholding
technique leads to $O(\log n)$-pass $O(\log n)$-approximation using $\tldO(n)$ space.
In $\tldO(n)$ space regime, Emek and Rosen studied designing one-pass
streaming algorithms for the \probSetCover problem \cite{er-sssc-14}
and gave a \emph{deterministic} greedy based
$O(\sqrt{n})$-approximation for the problem. Moreover they proved that
their algorithm is tight, even for randomized algorithms. The lower/upper bound results
of \cite{er-sssc-14} applied also to a generalization of the \probSetCover 
problem, the \prob{$\eps$-Partial Set Cover($\GSet, \sS$)}
problem in which the goal is to cover $(1-\eps)$ fraction of elements
$\GSet$ and the size of the solution is compared to the size of an optimal 
cover of \prob{Set Cover($\GSet, \sS$)}. 
Very recently, Chakrabarti and Wirth extended the result of
\cite{er-sssc-14} and gave a trade-off streaming algorithm for the
\probSetCover problem in \emph{multiple passes} \cite{cw-igpcs-16}.
They gave a \emph{deterministic} algorithm with $p$ passes over the
data stream that returns a $(p+1)n^{1/(p+1)}$-approximate solution of
the \probSetCover problem in $\tldO(n)$ space.  Moreover they proved
that achieving $0.99 n^{1/(p+1)}/(p+1)^2$ in $p$ passes using
$\tldO(n)$ space is not possible even for randomized protocols which
shows that their algorithm is tight up to a factor of $(p+1)^3$. Their
result also works for the \prob{$\eps$-Partial Set Cover} problem.

In a different regime which was first studied by Demaine \etal, the
goal is to design a ``low" approximation algorithms (depending on 
the computational model, it could be $O(\log n)$ or $O(1)$) in the smallest 
possible space \cite{dimv-sccsc-14}. They proved that any constant 
pass \emph{deterministic} $(\log n/2)$-approximation algorithm for 
the \probSetCover problem requires $\tldOmega(mn)$ space. It shows
that unlike the results in $\tldO(n)$-space regime, to obtain a
sublinear ``low" approximation streaming algorithm for the \probSetCover
problem in a constant number of passes, using \emph{randomness} 
is necessary. Moreover, \cite{dimv-sccsc-14} presented a 
$O(4^{1/\delta})$-approximation algorithm that makes $O(4^{1/\delta})$
passes and uses $\tldO(mn^{\delta})$ memory space.

The \prob{Set Cover} problem is not polynomially solvable even in the restricted instances 
with points in $\Re^2$ as elements, and geometric objects (either all disks or 
axis parallel rectangles or fat triangles) in plane as sets 
\cite{fg-oafac-88, fpt-opcpn-81,
   hq-aapel-15}.
As a result, there has been a large body of work on designing approximation algorithms 
for the geometric \prob{Set Cover} problems. See for example 
\cite{mrr-sahsgsc-14,ap-nlagh-14,
   aes-ssena-10, cv-iaagsc-07} and references therein.

\subsection{Our results}
Despite the progress outlined above, however, some basic questions
still remained open. In particular:
% \begin{singlespacing}
\INP{\smallskip}%
\begin{compactenum}[\INP{\quad}(A)]
    \item Is it possible to design a {\em single} pass streaming
    algorithm with a ``low'' approximation factor\footnote{Note that
       the lower bound in~\cite{dimv-sccsc-14} excluded this
       possibility only for deterministic algorithms, while the upper
       bound in~\cite{er-sssc-14, cw-igpcs-16} suffered from a
       polynomial approximation factor. } that uses sublinear (i.e., $o(mn)$) space?
    \item If such single pass algorithms are not possible, what are
    the achievable trade-offs between the number of passes and space
    usage?
    \item Are there special instances of the problem for which more efficient algorithms can be designed?
\end{compactenum}
\INP{\smallskip}%
In this paper, we make a significant progress on each of these
questions. Our upper and lower bounds are depicted in
\figref{f:table}.

On the algorithmic side, we give a $O(1/\delta)$-pass algorithm with a
strongly sub-linear $\tldO(mn^{\delta})$
space
%\footnote{$\tldO(f(n,m))$ is defined as
   %$O(f(n,m) \cdot \log^{k} f(n,m))$ and $\tldOmega(f(n,m))$ is
   %defined as $\Omega(f(n,m)/ \log^{k} f(n,m))$} 
    and logarithmic
approximation factor.  This yields a significant improvement over the
earlier algorithm of Demaine \etal~\cite{dimv-sccsc-14} which used
exponentially larger number of passes.  The trade-off offered by our
algorithm matches the lower bound of Nisan~\cite{n-ccasp-02} that
holds at the endpoint of the trade-off curve, i.e., for
$\delta=\Theta(1/\log n)$, up to poly-logarithmic factors in
space\footnote{Note that to achieve a logarithmic approximation ratio
   we can use an off-line algorithm with the approximation ratio
   $\rho=1$, i.e., one that runs in exponential time (see
   \thmref{algorithm-tightness}).}. Furthermore, our algorithm is
very simple and succinct, and therefore easy to implement and deploy.

Our algorithm exhibits a natural tradeoff between the number of passes
and space, which resembles tradeoffs achieved for other
problems~\cite{gm-lbqer-07,gm-tlbsp-08,go-slbmg-13}.  It is thus
natural to conjecture that this tradeoff might be tight, at least for
``low enough'' approximation factors.  We present the first step in
this direction by showing a lower bound for the case when the
approximation factor is equal to $1$, i.e., the goal is to compute the
optimal set cover.  In particular, by an information theoretic lower
bound, we show that any \emph{streaming} algorithm that computes set
cover using $({1 \over 2\delta}-1)$ passes must use
$\tldOmega(mn^{\delta})$ space (even assuming exponential
computational power) in the regime of $m=O(n)$. Furthermore, we show that a stronger lower bound 
holds if all the input sets are sparse, that is if their cardinality is at most
$s$. We prove a lower bound of $\tldOmega(ms)$ for
$s=O(n^{\delta})$ and $m=O(n)$.

We also consider the problem in the geometric setting in which the
elements are points in $\Re^2$ and sets are either discs,
axis-parallel rectangles, or fat triangles in the plane. We show that
a slightly modified version of our algorithm achieves the \emph{optimal}
$\tldO(n)$ space to find an $O(\rho)$-approximation in
$O(1)$ passes.

Finally, we show that any randomized one-pass algorithm that
distinguishes between covers of size 2 and 3 must use a linear (i.e.,
$\Omega(mn)$) amount of space. 
 This is the first result showing that a randomized, approximate algorithm cannot achieve a sub-linear space bound. 

Recently Assadi \etal~\cite{aky-tbspscscp-16} generalized this lower bound
to any approximation ratio $\alpha=O(\sqrt{n})$. More precisely
they showed that approximating \prob{Set Cover} within any factor
$\alpha = O(\sqrt{n})$ in a single pass requires 
$\Omega({mn \over \alpha})$ space.
  %The only previous randomized lower bound for \probSetCover, due to~\cite{n-ccasp-02}, provided only a sub-linear lower bound of $\Omega(m)$. 
 
%This indicates that using multiple passes might be necessary in order to achieve sub-linear space bounds while guaranteeing small approximation factors.

% \vspace{-0.3in}
%\subsubsection{Our techniques}  

%\medskip%
\noindent%
\textbf{Our techniques: Basic idea.}  %
Our algorithm is based on the idea that whenever a large enough set is
encountered, we can immediately add it to the cover. Specifically, we
guess (up to factor two) the size of the optimal cover $k$. Thus, a
set is ``large'' if it covers at least $1/k$ fraction of the remaining
elements. A small set, on the other hand, can cover only a ``few''
elements, and we can store (approximately) what elements it covers by
storing (in memory) an appropriate random sample. At the end of the
pass, we have (in memory) the projections of ``small'' sets onto the
random sample, and we compute the optimal set cover for this projected
instance using an offline solver. By carefully choosing the size of
the random sample, this guarantees that only a small fraction of the
set system remains uncovered. The algorithm then makes an additional
pass to find the residual set system (i.e., the yet uncovered
elements), making two passes in each iteration, and continuing to the
next iteration.

Thus, one can think about the algorithm as being based on a simple iterative
``dimensionality reduction'' approach. Specifically, in two passes
over the data, the algorithm selects a ``small'' number of sets that
cover all but $n^{-\delta}$ fraction of the uncovered elements, while
using only $\tldO(m n^{\delta})$ space.  By performing the reduction
step $1/\delta$ times we obtain a complete cover.  The dimensionality
reduction step is implemented by computing a small cover for a
\emph{random subset} of the elements, which also covers the vast
majority of the elements in the ground set.  This ensures that the
remaining sets, when restricted to the random subset of the elements,
occupy only $\tldO(m n^{\delta})$ space.  As a result the procedure
avoids a complex set of recursive calls as presented in Demaine
\etal~\cite{dimv-sccsc-14}, which leads to a simpler and more
efficient algorithm.

% \smallskip%
\noindent%
\textbf{Geometric results.} %
Further using techniques and results from computational geometry we
show how to modify our algorithm so that it achieves \emph{almost} optimal bounds for 
the \probSetCover problem on geometric instances.  In
particular, we show that it gives a $O(1)$-pass $O(\rho)$-approximation algorithm 
using $\tldO(n)$ space when
the elements are points in $\Re^2$ and the sets are either discs, axis
parallel rectangles, or fat triangles in the plane.  In particular, we
use the following surprising property of the set systems that arise out of
points and disks: the the number of sets is nearly linear as long
as one considers only sets that contain ``a few'' points.

More surprisingly, this property extends, with a twist, to certain
geometric range spaces that might have quadratic number of shallow
ranges. Indeed, it is easy to show an example of $n$ points in the
plane, where there are $\Omega(n^2)$ distinct rectangles, each one
containing exactly two points, see \figref{many:rectangles}. However,
one can ``split'' such ranges into a small number of canonical sets,
such that the number of shallow sets in the canonical set system is
near linear. This enables us to store the small canonical sets encountered 
during the scan explicitly in memory, and still use only near linear space.

\noindent
\begin{figure}[htb]
    \centering
    \includegraphics{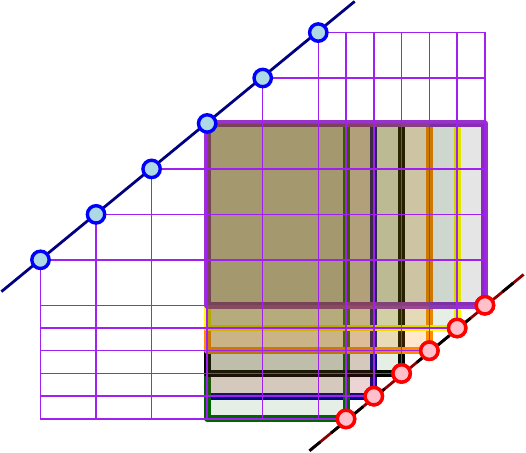}
    \captionof{figure}{Consider two parallel lines in the plane with positive
       slope.  Place $n/2$ points on each line such that all the
       points on the top line lie above and to the left of all the
       points on the bottom line.  Let the set of rectangles for this
       instance be all the rectangles which have a point on the top
       line as their upper left corner and a point on the bottom line
       as their lower right corner. Clearly, we have $n^2/4$ distinct
       rectangles (i.e., sets), each containing two points. As such,
       we cannot afford to store explicitly in memory the set system,
       since it requires too much space. }
    \figlab{many:rectangles}
\end{figure}

We note that the idea of splitting ranges into small canonical ranges is an old
idea in orthogonal range searching. It was used by Aronov
\etal~\cite{aes-ssena-10} for computing small $\eps$-nets for these
range spaces. The idea in the form we use, was further formalized by
Ene \etal~\cite{ehr-gpnuc-12}.

% \medskip%
\noindent%
\textbf{Lower bounds.} %
The lower bounds for multi-pass algorithms for the \probSetCover problem 
are obtained via a careful reduction from \prob{Intersection Set Chasing}. The
latter problem is a communication complexity problem where $n$ players
need to solve a certain ``set-disjointness-like" problem in $p$
rounds. A recent paper \cite{go-slbmg-13} showed that this problem
requires $n^{1+\Omega(1/p)}\over p^{O(1)}$ bits of communication
complexity for $p$ rounds. This yields our desired trade-off 
of $\tldOmega(mn^{\delta})$ space in $1/2\delta$ passes for 
exact protocols for \prob{Set Cover} in 
the communication model and hence in the streaming model for $m=O(n)$.
Furthermore, we show a stronger lower
bound on memory space of sparse instances of \probSetCover in which
all input sets have cardinality at most $s$. By a reduction from
a variant of \prob{Equal Pointer Chasing} which maps the problem to a 
\emph{sparse} instance of \prob{Set Cover}, we show that in order to have an 
exact streaming algorithm for \prob{$s$-Sparse Set Cover} with $o(ms)$ space, 
$\Omega(\log n)$ passes is necessary. More precisely, 
any $({1\over 2\delta}-1)$-pass \emph{exact} randomized algorithm for 
\prob{$s$-Sparse Set Cover} requires $\tldOmega(ms)$ memory space, if
$s\leq n^{\delta}$ and $m=O(n)$.

Our single pass lower bound proceeds by showing a lower bound for a
one-way communication complexity problem in which one party (Alice)
has a collection of sets, and the other party (Bob) needs to determine
whether the complement of his set is covered by one of the Alice's
sets. We show that if Alice's sets are chosen at random, then Bob can
decode Alice's input by employing a small collection of ``query''
sets. This implies that the amount of communication needed to solve
the problem is linear in the description size of Alice's sets, which
is $\Omega(mn)$. %
\begin{figure}%[!h]
    \begin{center}
        \begin{algorithmEnvX}
            \underline{\textbf{\algIterSC}%
               $\pth{ \pth{\GSet, \sS}, \delta}$}: $\Bigl.$ \qquad
            \+\\ %
            % $\Bigr.$
            \INP{%
               \CodeComment{// Try in parallel all possible
                  ($2$-approx) sizes of optimal cover}\\%
            }%
            \InPODS{ \CodeComment{// $n =\card{\GSet}$}\\}%
            \For $k \in \set{2^{i}\sep 0\leq i \leq \log n}$ \Do
            \textbf{in parallel}: \INP{\qquad%
               \CodeComment{// $n =\card{\GSet}$}}%
            \\
            \> $\sol \leftarrow \emptyset$ \\%
            \> {\bf Repeat} for $1/\delta$ times \\%
            \>\> \INP{\bf Let} $\GSample$ be a sample of $\GSet$ of
            size $c\rho kn^{\delta} \log m\log n$ %
            \\%
            \>\> $\GLeftover\leftarrow \GSample$, \quad
            $\sSSample \leftarrow \emptyset$ \\%
            \>\> \For $\range \in \sS$ \Do \quad%
            \INP{\qquad\qquad\quad} \CodeComment{// By doing one
               pass} \\%
            \>\>\> \If
            $\card{\GLeftover \cap \range} \geq \card{\GSample}/k$
            \Then \quad\CodeComment{// Size Test} \\%
            \>\>\>\> $\sol \leftarrow \sol \cup \brc{\range}$%
            \InPODS{, }%
            \INP{\\%
               \>\>\>\>}
            $\GLeftover \leftarrow \GLeftover\setminus \range$ \\%
            \>\>\>\Else \\%
            % \>\>\>\> \\%
            \>\>\>\>
            $\sSSample\leftarrow \sSSample\cup \brc{ \range \cap
               \GLeftover}$ %
            \CodeComment{// Store the set $\range \cap \GLeftover$ explicitly in memory} \\[0.1cm]%
            \>\>
            $\CoverA \leftarrow \offlineSC{}(\GLeftover,\sSSample,k)$,
            \qquad $\sol \leftarrow \sol \bigcup\CoverA$ \\%
            \>\>
            $\GSet \leftarrow \GSet \setminus \bigcup_{\range \in
               \sol}\range$ %
            \INP{ \quad}%
            \INP{%
               \InPODS{\\ \>\>}%
               \CodeComment{// By doing additional pass over
                  data}\\[0.1cm]%
            }%
            \InPODS{\\}%
            \Return best $\sol$ computed in all parallel executions.
            % over different guesses of $k$
        \end{algorithmEnvX}%
    \end{center}
    \vspace{-0.5 cm}%0.5
    \caption{{A tight streaming algorithm for the (unweighted)
       \probSetCover problem.  Here, \offlineSC{} is an offline solver
       for \probSetCover that provides $\rho$-approximation, and $c$
       is some appropriate constant.}}
 \figlab{unweighted-alg} %\vspace{-0.70 cm}%1
\end{figure}%

% -----------------------------------------------------------------
% -----------------------------------------------------------------
% -----------------------------------------------------------------

\section{Streaming Algorithm\INP{ for Set Cover}}
\seclab{unweighted-set-cover}%

\subsection{Algorithm}

In this section, we design an efficient streaming algorithm for the
\probSetCover problem that matches the lower bound results we already
know about the problem.  In the \probSetCover problem, for a given set
system $(\GSet,\sS)$, the goal is to find a subset
$\sI \subseteq \sS$, such that $\sI$ covers $\GSet$ and its
cardinality is minimum.  In the following, we sketch the \algIterSC
algorithm (see also \figref{unweighted-alg}).

In the \algIterSC algorithm, we have access to the \offlineSC{}
subroutine that solves the given \probSetCover instance offline (using
linear space) and returns a $\rho$-approximate solution where $\rho$
could be anywhere between $1$ and $\Theta(\log n)$ depending on the
computational model one assumes. Under exponential computational
power, we can achieve the optimal cover of the given instance of the
\probSetCover ($\rho=1$); however, under ${\bf P}\neq\NP$ assumption,
$\rho$ cannot be better than $c\cdot \ln n$ where $c$ is a
constant~\cite{f-tasc-98,rs-sbepl-97,ams-acskr-06,m-pgcnsc-12,ds-aapr-14} given polynomial computational
power.
% In the \algIterSC algorithm, we apply a sampling technique of
% Har-Peled and Sharir \cite{hs-rag-11}.

Let $n= \card{\GSet}$ be the initial number of elements in the given
ground set. The \algIterSC algorithm, needs to guess (up to a factor
of two) the size of the optimal cover of $(\GSet,\sS)$.  To this end,
the algorithm tries, in parallel, all values $k$ in
$\set{2^{i}\sep 0\leq i\leq \log n}$. This step will only increase the
memory space requirement by a factor of $\log n$.

Consider the run of the \algIterSC algorithm, in which the guess $k$
is correct (i.e., $\card{\opt} \leq k < 2\card{\opt}$, where $\opt$ is
an optimal solution). The idea is to go through $O(1/\delta)$
iterations such that each iteration only makes two passes and at the
end of each iteration the number of uncovered elements reduces by a
factor of $n^{\delta}$. Moreover, the algorithm is allowed to use
$\tldO(mn^{\delta})$ space.

In each iteration, the algorithm starts with the current ground set of
uncovered elements $\GSet$, and copies it to a \emph{leftover} set
$\GLeftover$. Let $\GSample$ be a large enough uniform sample of
elements $\GSet$.  In a single pass, using $\GSample$, we estimate the
size of all large sets in $\sS$ and add $\range \in \sS$ to the
solution $\sol$ immediately (thus avoiding the need to store it in
memory). Formally, if $\range$ covers at least
$\Omega(\card{\GSet}/k)$ yet-uncovered elements of $\GLeftover$ then
it is a heavy set, and the algorithm immediately adds it to the output
cover. Otherwise, if a set is small, i.e., its covers less than
$\card{\GSet}/k$ uncovered elements of $\GLeftover$, the
algorithm stores the set $\range$ in memory. Fortunately, it is enough
to store its projection over the sampled elements explicitly (i.e.,
$\range \cap \GLeftover$) -- this requires remembering only the
$O(\card{\GSample}/k)$ indices of the elements of
$\range \cap \GLeftover$.
% . We will show that this implies that we only keep the projection of
% sets of size $O(\card{\GSet}/k)$.

In order to show that a solution of the \probSetCover problem over
the sampled elements is a good cover of the initial \probSetCover
instance, we apply the \emph{relative ($p,\eps$)-approximation}
sampling result of \cite{hs-rag-11} (see
\defref{def:relative-approximation}) and it is enough for $\GSample$
to be of size $\tldO(\rho k n^{\delta})$.
% The relative ($p,\eps$)-approximation sampling is similar to the
% \prob{Element Sampling} technique used in a previous work on the
% streaming \probSetCover problem \cite{dimv-sccsc-14}. However, the
% result of relative $(p,\eps)$-approximation sampling technique
% of \cite{hs-rag-11} is stronger than the \prob{Element Sampling}
% module.
Using relative $(p,\eps)$-approximation sampling, we show that
after two passes the number of uncovered elements is reduced by a factor
of $n^{\delta}$. Note that the relative $(p,\eps)$-approximation sampling improves 
over the \emph{Element Sampling} technique used in \cite{dimv-sccsc-14} with respect to the number of passes. 

Since in each iteration we pick $O(\rho k)$ sets and the number of
uncovered elements decreases by a factor of $n^{\delta}$, after
$1/\delta$ iterations the algorithm picks $O(\rho k/\delta)$ sets and
covers all elements. Moreover, the memory space of the whole algorithm
is $\tldO(\rho mn^{\delta})$ (see
\lemref{unweight-space}).

\subsection{Analysis}

In the rest of this section we prove that the
\algIterSC algorithm with high probability returns a
$O(\rho/\delta)$-approximate solution of \prob{Set Cover($\GSet,\sS$)}
in $2/\delta$ passes using $\tldO(mn^{\delta})$ memory space.
% ...............................................................Pass
% Lemma
\begin{lemma}
    \lemlab{unweight-pass}%
    The number of passes the \algIterSC algorithm
    makes is $2/\delta$.
\end{lemma}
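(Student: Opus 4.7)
The plan is to simply count the passes performed by the algorithm, separating the effect of the ``parallel'' outer loop from the sequential inner work. The outer \textbf{for} loop over $k \in \{2^i : 0 \le i \le \log n\}$ is executed \emph{in parallel}, meaning that all $O(\log n)$ guesses share the same stream scan; hence this loop contributes only to the space cost (a $\log n$ factor) and not to the number of passes. So it suffices to bound the passes incurred by a single guess $k$.

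Fix such a guess. The \textbf{Repeat} loop runs $1/\delta$ iterations, so the total pass count is $1/\delta$ times the number of passes per iteration. First I would identify the two stream scans in each iteration. The first scan is the inner loop \textbf{for} $\range \in \sS$, where every set $\range$ is read once in order to perform the Size Test $\card{\GLeftover \cap \range} \geq \card{\GSample}/k$: either $\range$ is added to $\sol$ and $\GLeftover$ is updated, or its projection $\range \cap \GLeftover$ onto the sample is stored in $\sSSample$. This accounts for one pass over the set repository. The second scan is needed to execute $\GSet \leftarrow \GSet \setminus \bigcup_{\range \in \sol} \range$: since the sets added to $\sol$ were not stored explicitly in memory (only their indices were recorded), removing the elements they cover from $\GSet$ requires reading those sets from the repository once more. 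This accounts for a second pass, as indicated by the ``additional pass over data'' comment in the pseudocode.

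Putting these together, each iteration of the \textbf{Repeat} loop performs exactly two passes, and there are $1/\delta$ iterations, giving $2/\delta$ passes in total. Since all parallel guesses of $k$ share these same scans, the overall number of passes of \algIterSC is $2/\delta$. There is no subtle obstacle here; the only thing worth double-checking is that the offline computations (the call to \offlineSC{} and the sampling of $\GSample$) require no additional passes over the stream, which is immediate since they operate entirely on the data already held in memory ($\GLeftover$ and $\sSSample$).
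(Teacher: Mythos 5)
Your proof is correct and follows essentially the same reasoning as the paper: each of the $1/\delta$ iterations consists of one pass for the Size Test and one pass to recompute the uncovered elements, the offline solver needs no stream access, and the parallel guesses of $k$ share the same scans. No discrepancy.
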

\begin{proof}
    In each of the $1/\delta$ iterations of the
    \algIterSC algorithm, the algorithm makes two
    passes. In the first pass, based on the set of sampled elements
    $\GSample$, it decides whether to pick a set or keep its
    projection over $\GSample$ (i.e., $\range \cap \GLeftover$) in the memory. 
    Then the algorithm calls \offlineSC{} which does not require any passes
    over $\sS$. The second pass is for computing the set of uncovered
    elements at the end of the iteration.  We need this pass because
    we only know the projection of the sets we picked in the current
    iteration over $\GSample$ and not over the original set of
    uncovered elements. Thus, in total we make $2/\delta$ passes. Also
    note that for different guesses for the value of $k$, we run the
    algorithm in parallel and hence the total number of passes remains
    $2/\delta$.
\end{proof}

% ...............................................................Space
% Lemma
\begin{lemma}%
    \lemlab{unweight-space}%
    The memory space used by the \algIterSC algorithm
    is $\tldO(mn^{\delta})$.
\end{lemma}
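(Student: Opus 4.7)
The plan is to account, for a single guess of $k$, for all quantities the algorithm holds in read-write memory during one iteration, and then add the overhead of running $O(\log n)$ guesses in parallel. Since iterations execute one after the other (each iteration overwrites the working memory of the previous one), it suffices to bound the space of a single iteration.

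First I would inspect the three non-trivial data structures that occupy memory inside one iteration. (i) The sample $\GSample \subseteq \GSet$, whose size is prescribed as $c\rho k n^{\delta} \log m \log n$, i.e.\ $\tldO(kn^{\delta})$ machine words, and the current leftover set $\GLeftover \subseteq \GSample$, which has the same bound. (ii) The collection $\sSSample$ of the projections of ``small'' sets onto $\GLeftover$. By the size test in the algorithm, every projection we store satisfies $|\range \cap \GLeftover| < |\GSample|/k = \tldO(n^{\delta})$, so each projection can be written down as a list of that many indices into $\GSample$. Since $|\sSSample| \leq m$, the total space for $\sSSample$ is at most $m \cdot \tldO(n^{\delta}) = \tldO(m n^{\delta})$. (iii) The current solution $\sol$, which is just a set of indices of chosen input sets, of size at most $m$, and hence $\tldO(m)$ words.

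Next I would combine these bounds. Because $k$ only takes values in $\{2^i : 0 \leq i \leq \log n\}$ we always have $k \leq n$, and because the problem assumes $m \geq n$, the sample bound $\tldO(kn^{\delta})$ is absorbed into $\tldO(m n^{\delta})$. I would then note that the offline subroutine \offlineSC{} is invoked on $(\GLeftover,\sSSample,k)$, an instance whose description already sits in memory and has size $\tldO(mn^{\delta})$; since \offlineSC{} uses space linear in its input, it contributes only $\tldO(mn^{\delta})$ more. Thus a single iteration for a single value of $k$ uses $\tldO(mn^{\delta})$ memory; because iterations are sequential, this is also the bound for the full execution of one value of $k$.

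Finally, the algorithm runs in parallel over the $O(\log n)$ powers-of-two candidates for $k$, multiplying the space by another $\log n$ factor, which is again absorbed into the $\tldO$ notation. The main (and only) subtle point to double-check is item (ii): that the size test $|\GLeftover \cap \range| \geq |\GSample|/k$ firing is exactly what bounds every stored projection by $\tldO(n^{\delta})$, so that the per-set storage is controlled even before we have a bound on how many sets end up being ``small''. With that observation in place, summing the three contributions and multiplying by the $\log n$ parallelism overhead yields the claimed $\tldO(m n^{\delta})$ space bound.
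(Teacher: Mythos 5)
Your proposal is correct and takes essentially the same route as the paper: bound the sample/leftover sets, bound the $m$ stored projections each of size $\tldO(n^\delta)$ via the size test, note iterations reuse memory, and absorb the $O(\log n)$ factor from the parallel guesses of $k$. You are slightly more explicit than the paper (e.g., you spell out the absorption of $\tldO(kn^{\delta})$ via $k \leq n \leq m$ and account for the space of the \offlineSC{} call), but the core argument and decomposition are the same.
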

\begin{proof}
    In each iteration of the algorithm, it picks during the first pass at
    most $m$ sets (more precisely at most $k$ sets) which requires $O(m\log m)$
    memory.  Moreover, in the first pass we keep the projection of the
    sets whose projection over the uncovered sampled elements has size
    at most ${\card{\GSample} / k}$. Since there are at most $m$ such
    sets, the total required space for storing the projections is
    bounded by
    \begin{math}
        O\pth{\rho mn^{\delta} \log m \log n}.
    \end{math}

    Since in the second pass the algorithm only updates the set of
    uncovered elements, the amount of space required in the second
    pass is $O(n)$. Thus, the total required space to perform each
    iteration of the \algIterSC algorithm is
    $\tldO(mn^{\delta})$. Moreover, note that the algorithm does not
    need to keep the memory space used by the earlier iterations;
    thus, the total space consumed by the algorithm is
    $\tldO(mn^{\delta})$.
\end{proof}
% ...............................................................Approximation
% Lemma
Next we show the sets we picked before calling \offlineSC{} has large
size on $\GSet$.
\begin{lemma} %
    \lemlab{large-set}%
    With probability at least $1-m^{-c}$ all sets that pass the ``Size
    Test" in the \algIterSC algorithm have size at least
    $\card{\GSet} / ck$.
\end{lemma}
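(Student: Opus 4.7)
The plan is to bound, for each individual set $\range$, the probability that $\range$ passes the Size Test yet has small intersection with $\GSet$, and then take a union bound over the at most $m$ sets seen in the pass.

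First I would observe that throughout the inner loop $\GLeftover \subseteq \GSample$, so the Size Test condition $|\GLeftover \cap \range| \geq |\GSample|/k$ implies the stronger (in the sense of being easier to control) bound $|\GSample \cap \range| \geq |\GSample|/k$. This reduces the question to a statement purely about the uniform random sample $\GSample$, independent of the adaptive choices made by the algorithm during the pass.

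Fix a set $\range \in \sS$ and condition on the event $|\range \cap \GSet| < |\GSet|/(ck)$ (the "bad" sets whose inclusion we wish to rule out). Since $\GSample$ is a uniform sample of $\GSet$ of size $|\GSample| = c\rho k n^{\delta} \log m \log n$, the random variable $|\GSample \cap \range|$ is a sum of (essentially) independent indicators with expectation at most $|\GSample|/(ck)$. I would then apply a multiplicative Chernoff bound to show that $\Prob{|\GSample \cap \range| \geq |\GSample|/k} \le \exp\!\bigl(-\Omega(|\GSample|/k)\bigr)$. Plugging in $|\GSample|/k = c\rho n^{\delta}\log m \log n$, the exponent is $\Omega(\log m)$ times a large constant, so by tuning constants the per-set failure probability is at most $m^{-(c+1)}$.

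Finally, taking a union bound over all $m$ sets in $\sS$ gives a total failure probability at most $m^{-c}$, as desired. The main subtlety — which I would address up front — is that $\GLeftover$ evolves adaptively depending on the order of the stream and on the random sample itself, so one cannot directly apply Chernoff to $|\GLeftover \cap \range|$. Replacing $\GLeftover$ by $\GSample$ via the containment $\GLeftover \subseteq \GSample$ decouples this dependence and makes the Chernoff + union-bound argument rigorous; everything else is routine.
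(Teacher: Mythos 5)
Your proposal is correct and follows essentially the same route as the paper's proof: for a "small" set $\range$ with $|\range \cap \GSet| < |\GSet|/(ck)$, bound $\mathbf{E}[|\range \cap \GSample|]$, apply a multiplicative Chernoff bound to get per-set failure probability $m^{-(c+1)}$, and union-bound over the $m$ sets. The one genuinely useful thing you add is making explicit the observation that $\GLeftover \subseteq \GSample$, so passing the Size Test on the adaptively evolving $\GLeftover$ forces the larger intersection $|\GSample \cap \range| \geq |\GSample|/k$ on the fixed random sample — this is exactly what lets the Chernoff/union-bound argument go through without worrying about adaptivity, and the paper uses it implicitly (it directly bounds $\pr(|\range \cap \GSample| \geq |\GSample|/k)$) without spelling out the reduction.
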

\begin{proof}
    Let $\range$ be a set of size less than $\card{\GSet} / ck$. In
    expectation, $\card{\range \cap \GSample}$ is less than
    \begin{math}
        \pth{\card{\GSet} / ck} \cdot {\pth{\card{\GSample} / \card{\GSet}}}%
        =%
        {\rho n^{\delta} \log m \log n}.
    \end{math}
    By Chernoff bound for large enough $c$,
    \begin{align*}
        \pr(\card{\range \cap \GSample} \geq c{\rho n^{\delta} \log m \log n})%
        %\leq%
        %\exp\pth{ {-{(c-1)^2 \rho n^{\delta}\log m \log n \over 4}}}%
        \leq%
        m^{-(c+1)}.
    \end{align*}
    Applying the union bound, with probability at least $1-m^{-c}$,
    all sets passing ``Size Test'' have size at least
    ${\card{\GSet}/(ck)}$.
\end{proof}

In what follows we define the \emph{relative
   $(p,\eps)$-approximation} sample of a set system and mention
the result of Har-Peled and Sharir \cite{hs-rag-11} on the minimum
required number of sampled elements to get a relative
$(p,\eps)$-approximation of the given set system.
\begin{definition} %
    \deflab{def:relative-approximation}%
    Let $(\GSetA,\sSA)$ be a set system, i.e., $\GSetA$ is a set of
    elements and $\sSA \subseteq 2^\GSetA$ is a family of subsets of the
    ground set $\GSetA$. For given parameters $0< \eps , p < 1$, a
    subset $Z \subseteq \GSetA$ is a \emph{relative
       $(p,\eps)$-approximation} for $(\GSetA,\sSA)$, if for each
    $\range\in \sSA$, we have that if $\card{\range} \geq p \card{\GSetA}$ then
    \begin{align*}
        \ds%
        (1-\eps) {\card{\range}\over \card{\GSetA}} \leq {\card{\range\cap Z}
           \over \card{Z}} \leq (1+\eps){\card{\range} \over\card{\GSetA}}.%
    \end{align*} 
    % That is, a multiplicative $(1\pm \eps)$ error in the size of the
    % set $r$.
    If the range is light (i.e., $\card{\range} < p \card{\GSetA}$) then it
    is required that
    \begin{align*}
        \ds%
        {\card{\range}\over \card{\GSetA}}-\eps p%
        \leq%
        {\card{\range\cap Z} \over \card{Z}}%
        \leq%
        {\card{\range} \over\card{\GSetA}}+\eps p.%
    \end{align*}
    Namely, $Z$ is $(1\pm\eps)$-multiplicative good estimator for the
    size of ranges that are at least $p$-fraction of the ground set.
\end{definition}

The following lemma is a simplified variant of a result in Har-Peled
and Sharir \cite{hs-rag-11} -- indeed, a set system with $M$ sets, can
have VC dimension at most $\log M$. This simplified form also follows
by a somewhat careful but straightforward application of Chernoff's
inequality.
\begin{lemma}%
    \lemlab{relative-approx}%
    Let $(\GSet,\sS)$ be a finite set system, and $p,\eps, q$ be
    parameters.  Then, a random sample of $\GSet$ such that 
    \begin{math}
        \card{\GSet} = {c'\over \eps^2 p}\pth{\log \card{\sS} \log {1\over p} +
           \log{1\over q}},
    \end{math}
    for an absolute constant $c'$ is a relative
    $(p,\eps)$-approximation, for all ranges in $\sS$, with
    probability at least $(1-q)$.
\end{lemma}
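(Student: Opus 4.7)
The bound has the shape of the classical VC-dimensional relative-approximation theorem with VC-dimension $d$ replaced by $\log|\sS|$, so the first route I would try is the direct one: reduce to a union bound over the $|\sS|$ ranges and apply Chernoff separately to heavy and light ranges. Fix a random sample $Z\subseteq \GSet$ (drawn uniformly, with replacement for convenience) of size $N=(c'/\eps^2 p)(\log|\sS|\log(1/p)+\log(1/q))$. For each $\range\in\sS$, let $X_\range=|Z\cap \range|$, so $\mu_\range:=\mathbb{E}[X_\range]=N\cdot |\range|/|\GSet|$. The goal is to show that the bad event
\[
  B_\range=\Bigl\{\bigl|X_\range/N-|\range|/|\GSet|\bigr|>\max\bigl(\eps\, |\range|/|\GSet|,\ \eps p\bigr)\Bigr\}
\]
holds with probability at most $q/|\sS|$ for each $\range$, and then union-bound over $\sS$.

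\textbf{Heavy ranges ($|\range|\ge p|\GSet|$).} Here $\mu_\range\ge pN$, so I would apply the standard multiplicative Chernoff bound
\[
  \Pr\bigl[|X_\range-\mu_\range|\ge \eps \mu_\range\bigr]\le 2\exp\!\bigl(-\eps^2\mu_\range/3\bigr)\le 2\exp\!\bigl(-\eps^2 pN/3\bigr).
\]
Plugging in $N$ makes this at most $q/|\sS|$ for $c'$ a large enough absolute constant.

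\textbf{Light ranges ($|\range|<p|\GSet|$).} Here $\mu_\range<pN$, and I want the additive error $\eps pN$. I would use the Chernoff tail
\[
  \Pr\bigl[|X_\range-\mu_\range|\ge t\bigr]\le 2\exp\!\Bigl(-\tfrac{t^2}{2(\mu_\range+t/3)}\Bigr)
\]
with $t=\eps pN$. Since $\mu_\range\le pN$, the denominator is $O(pN)$, and a short calculation yields a bound of $2\exp(-\Omega(\eps^2 pN))$, again $\le q/|\sS|$ for large $c'$. Union-bounding over all $\range\in\sS$ gives total failure probability at most $q$, proving the lemma.

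\textbf{Main obstacle.} The bookkeeping is routine; the only subtle point is where the extra $\log(1/p)$ factor in the sample-size formula comes from. A plain union bound over $|\sS|$ ranges, as sketched above, only needs $\log|\sS|+\log(1/q)$ and no $\log(1/p)$. The $\log(1/p)$ is the signature of the VC-dimensional proof (or equivalently of a chaining/dyadic argument at scales $p,2p,4p,\dots,1$), which one would invoke if one wanted the approximation to hold for a potentially infinite family of ranges of VC-dimension $d=\log|\sS|$. Thus the cleanest path, as the authors also suggest by citing Har-Peled--Sharir, is to bound the VC dimension by $\log|\sS|$ (since a set system with $|\sS|$ distinct ranges cannot shatter a set of size $>\log|\sS|$) and then quote the relative-approximation theorem of \cite{hs-rag-11} verbatim. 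In the write-up I would present both options: note that the direct Chernoff-plus-union-bound already suffices (giving a slightly stronger statement without $\log(1/p)$), and observe that the stated form is exactly what the general VC bound yields when specialized to $d\le\log|\sS|$.
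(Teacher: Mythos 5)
Your proposal matches the paper's own (sketchy) justification: the authors state the lemma without a detailed proof, remarking only that it follows from the Har-Peled--Sharir relative $(p,\eps)$-approximation theorem after bounding the VC dimension by $\log|\sS|$, and that it also follows by a ``somewhat careful but straightforward application of Chernoff's inequality.'' You flesh out both of those routes correctly, including the accurate observation that the direct Chernoff-plus-union-bound argument does not need the $\log(1/p)$ factor and therefore gives a marginally stronger statement than what the lemma asserts.
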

% However, \cite{hs-rag-11} showed that \lemref{relative-approx}
% holds if size of the sampled set is at least
% ${c\over \eps^2 p}(d \log {1\over p} + \log{1\over q})$ where
% $d$ is the \emph{VC-dimension} of $R$ ($d \leq \log \card{R}$).

\begin{lemma}
    \lemlab{completeness} %
    Assuming $\card{\opt}\leq k \leq 2\card{\opt}$, after any
    iteration, with probability at least $1-m^{1-c/4}$ the number of
    uncovered elements decreases by a factor of $n^{\delta}$, and this
    iteration adds $O(\rho\card{\opt})$ sets to the output cover.
\end{lemma}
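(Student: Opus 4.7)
I will prove the lemma in two parts: a bound on the number of sets added in the iteration, then a probabilistic bound on the remaining uncovered elements.

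\textbf{Part 1 (cardinality of $\sol$).} At most $k$ sets can pass the size test, since each such $\range$ removes at least $|\GSample|/k$ elements from $\GLeftover$, and $|\GLeftover|\le|\GSample|$ throughout. I will then show that $\sSSample$ contains a cover of $\GLeftover_{\text{final}}$ of size at most $|\opt|\le k$. Partition $\opt = A \sqcup B$ depending on whether a set passes the size test. For each $\range\in A$, its elements are stripped from $\GLeftover$ upon processing, so $\bigcup A$ is disjoint from $\GLeftover_{\text{final}}$, hence $B$ alone covers $\GLeftover_{\text{final}}$. For each $\range\in B$, monotonicity of $\GLeftover$ implies that the projection $\range\cap \GLeftover_{t(\range)}\in\sSSample$ is a superset of $\range\cap \GLeftover_{\text{final}}$, so $\sSSample$ admits a cover of $\GLeftover_{\text{final}}$ of size $\le|\opt|\le k$. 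The offline $\rho$-approximate solver therefore returns $|\CoverA|\le\rho k$; adding the $\le k$ size-test sets gives $|\sol|\le(1+\rho)k = O(\rho|\opt|)$.

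\textbf{Part 2 (uncovered elements).} By construction, $\sol$ deterministically covers all of $\GSample$: the size-test sets cover $\GSample\setminus\GLeftover_{\text{final}}$, and $\CoverA$ covers $\GLeftover_{\text{final}}$. I will bound $|\GSet\setminus\bigcup_{\range\in\sol}\range|$ via a union bound over candidate outputs. For any fixed $\sol'\subseteq\sS$ with $|\sol'|\le(1+\rho)k$ and $|\GSet\setminus\bigcup\sol'|>|\GSet|/n^\delta$, the probability that every element of the random sample lies in $\bigcup\sol'$ is at most
\[
\bigl(1-n^{-\delta}\bigr)^{|\GSample|}\le e^{-|\GSample|/n^\delta} = e^{-c\rho k\log m\log n}.
\]
There are at most $m^{(1+\rho)k}$ such $\sol'$, and $\{\sol = \sol'\}\subseteq\{\GSample\subseteq\bigcup\sol'\}$, so the union bound yields a total failure probability of at most $m^{(1+\rho)k}\cdot e^{-c\rho k\log m\log n}\le m^{1-c/4}$ for $c$ sufficiently large.

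\textbf{Main subtleties.} The two points requiring care are (a) aligning the time-of-processing projections stored in $\sSSample$ with the final $\GLeftover$, handled by the monotonicity of $\GLeftover$; and (b) justifying the union bound when $\sol$ is itself a random variable, which follows from the inclusion $\{\sol = \sol'\}\subseteq\{\GSample\subseteq\bigcup\sol'\}$. Equivalently, the bound in Part 2 can be phrased through \lemref{relative-approx} applied to the set system of $(1+\rho)k$-fold unions from $\sS$ with parameters $p = n^{-\delta}$ and $\eps = \tfrac{1}{2}$, which is precisely how the sample size $|\GSample| = \tldO(\rho k n^\delta)$ in the algorithm is calibrated.
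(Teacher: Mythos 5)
Your proof is correct and follows essentially the same strategy as the paper's: bound the number of chosen sets by $(1+\rho)k$, then take a union bound over the at most $m^{(1+\rho)k}$ candidate covers, each of which fails (covers $\GSample$ but leaves a $> n^{-\delta}$ fraction of $\GSet$ uncovered) with probability decaying exponentially in $\card{\GSample}/n^\delta$. The one real difference is in how the probabilistic step is packaged: the paper builds the family $\sSA$ of all possible residual sets and invokes the relative $(p,\eps)$-approximation result (\lemref{relative-approx} with $p=2/n^\delta$, $\eps=1/2$, $q=m^{-c}$) as a black box, whereas you unroll it into a direct one-sided estimate $(1-n^{-\delta})^{\card{\GSample}} \le e^{-c\rho k\log m\log n}$. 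Your version is arguably cleaner for this specific lemma, since the only event you care about is ``the sample hits none of the residual,'' so the full two-sided relative-approximation guarantee (and its light/heavy case split) is more than you need; the paper's modular phrasing buys reuse of \lemref{relative-approx} in the geometric argument of \thmref{geometric-set-cover}. You also make explicit two facts the paper states tersely or pushes to \lemref{large-set}: that at most $k$ sets can pass the size test (since $\card{\GLeftover}\le\card{\GSample}$ and each passing set strips off $\ge\card{\GSample}/k$ elements), and that $\sSSample$ admits a cover of $\GLeftover_{\text{final}}$ of size at most $\card{\opt}$ via the $\opt = A\sqcup B$ partition together with monotonicity of $\GLeftover$. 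Both are correct and worth spelling out.
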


\newcommand{\FF}{\mathcal{G}}%

\begin{proof}
    Let $\GSetA \subseteq \GSet$ be the set of uncovered elements at the
    beginning of the iteration and note that the total number of sets
    that is picked during the iteration is at most $(1+\rho)k$ (see~\lemref{large-set}).
    Consider all possible such covers, that is
    $\FF = \{ \sS' \subseteq \sS|$ $\bigl.\card{\sS'}\leq
       (1+\rho)k\}$,
    and observe that $\card{\FF} \leq m^{(1+\rho)k}$.  Let $\sSA$ be the
    collection that contains all possible sets of uncovered elements
    at the end of the iteration, defined as
    \begin{math}
        \sSA=\Set{\GSetA \setminus \bigcup_{\range \in \Cover}
           \range}{ \Cover \in \FF}.
    \end{math}
    Moreover, set $p = {2/ n^{\delta}}$, $\eps = 1/2$ and $q=m^{-c}$
    and note that $\card{\sSA} \leq \card{\FF} \leq m^{(1+\rho)k}$.
    Since
    \begin{math}
        {c'\over \eps^2 p}(\log \card{ \sSA }%
        \log {1\over p} + \log{1\over q})%
        \leq%
        c\rho kn^\delta\log m \log n%
        =%
        \card{\GSample}
    \end{math}    
    for large enough $c$, by \lemref{relative-approx}, $\GSample$ is a
    relative $(p,\eps)$-approximation of $(\GSetA,\sSA)$ with $(1-q)$ probability.  Let
    $\CoverA \subseteq \sS$ be the collection of sets picked during
    the iteration which covers all elements in $\GSample$.  Since
    $\GSample$ is a relative $(p,\eps)$-approximation sample of
    $(\GSetA,\sSA)$ with probability at least $1-m^{-c}$, the number
    of uncovered elements of $\GSetA$ (or $\GSet$) by $\CoverA$ is at
    most $\eps p\card{\GSetA} = {\card{\GSet}/n^{\delta}}$.

    Hence, in each iteration we pick $O(\rho k)$ sets and at the end
    of iteration the number of uncovered elements reduces by
    $n^{\delta}$.
\end{proof}

% ...........................................................
\begin{lemma}
    \lemlab{unweight-approx} %
    The \algIterSC algorithm computes a set cover of $(\GSet,\sS)$,
    whose size is within a $O(\rho/\delta)$ factor of the size of an
    optimal cover with probability at least $1-m^{1-c/4}$.
\end{lemma}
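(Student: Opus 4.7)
The plan is to focus on the particular parallel execution in which the guess $k$ satisfies $\card{\opt} \leq k \leq 2\card{\opt}$; such a $k$ exists in $\set{2^i \sep 0\le i\le \log n}$ since the algorithm tries all powers of two up to $n$. Since the final step returns the best cover across all guesses, it suffices to show that this distinguished execution produces a cover of size $O(\rho\card{\opt}/\delta)$ with the claimed probability.

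For this distinguished execution, I would apply \lemref{completeness} iteratively to the $1/\delta$ iterations. Let $U_t$ denote the number of uncovered elements at the start of iteration $t$, with $U_1 = n$. Conditioning on the success of the preceding iterations, \lemref{completeness} guarantees that $U_{t+1} \leq U_t / n^{\delta}$, and that iteration $t$ contributes $O(\rho k) = O(\rho \card{\opt})$ sets to $\sol$. Telescoping, after all $1/\delta$ iterations the number of uncovered elements is at most $n \cdot n^{-\delta \cdot (1/\delta)} = 1$, i.e., zero since $U_t$ is an integer. Summing the per-iteration contributions yields $\card{\sol} = O(\rho\card{\opt}/\delta)$, which gives the claimed $O(\rho/\delta)$ approximation.

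For the failure probability, I would take a union bound over the (at most) $1/\delta$ iterations of the distinguished run. Each iteration fails with probability at most $m^{1-c/4}$ by \lemref{completeness}, and the $1/\delta$ factor (which is at most a polylogarithmic overhead, and in any case is absorbed by choosing the constant $c$ slightly larger) gives an overall failure probability of at most $m^{1-c/4}$ with the constant renamed. Note that there is no need to union-bound over the $O(\log n)$ parallel runs: incorrect guesses can only produce covers of larger or equal size, so returning the best cover among all executions cannot hurt.

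The main subtle point (rather than a genuine obstacle) is properly handling the conditional structure of the union bound: \lemref{completeness} applies to a fixed input set system $(\GSetA,\sSA)$ with $\GSetA$ the uncovered elements at the start of the iteration, but $\GSetA$ itself is a random quantity depending on the outcomes of previous iterations. Since the sampling in each iteration is independent of earlier iterations, conditioning on success of previous iterations preserves the $1-m^{-c}$ bound for the current one, and the union bound goes through. Everything else is routine bookkeeping.
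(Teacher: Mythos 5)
Your proof is correct and follows essentially the same route as the paper: fix the parallel run where $k$ is within a factor two of $\card{\opt}$, apply \lemref{completeness} to each of the $1/\delta$ iterations to get geometric decay of the uncovered set and a per-iteration contribution of $O(\rho k)$ sets, and union-bound the failure probabilities. The paper's own proof is shorter and slightly less careful about stating the conditioning structure, but there is no substantive difference; your remark that incorrect guesses of $k$ need no union bound (since the algorithm returns the best cover) is also implicit in the paper's phrasing.
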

\begin{proof}
    Consider the run of \algIterSC for which the value of $k$ is
    between $\card{\opt}$ and $2\card{\opt}$.  In each of the
    $(1/\delta)$ iterations made by the algorithm, by
    \lemref{completeness}, the number of uncovered elements decreases
    by a factor of $n^{\delta}$ where $n$ is the number of initial
    elements to be covered by the sets.  Moreover, the number of sets
    picked in each iteration is $O(\rho k)$.  Thus after $(1/\delta)$
    iterations, all elements would be covered and the total number of
    sets in the solution is $O(\rho\card{\opt} /\delta)$. Moreover by
    \lemref{completeness}, the success probability of all the
    iterations, is at least
    $1-{1\over \delta m^{c/4}}\geq 1-(1/m)^{{c\over 4}-1}$.
\end{proof}

% ............................................................... Theorem
\begin{theorem} %
    \thmlab{unweight}%
    \thmlab{algorithm-tightness}%
    The $\algIterSC(\GSet, \sS, \delta)$ algorithm
    makes $2/ \delta$ passes, uses $\tldO(mn^{\delta})$ memory space,
    and finds a $O({\rho/\delta})$-approximate solution of the
    \probSetCover problem with high probability.

    Furthermore, given enough number of passes the \algIterSC algorithm matches the known lower
    bound on the memory space of the streaming \probSetCover
    problem up to a $\polylog(m)$ factor where $m$ is the number of
    sets in the input.
\end{theorem}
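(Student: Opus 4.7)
The plan is to assemble the theorem from the three preceding lemmas and then verify the tightness claim by instantiating parameters correctly.

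First, I would observe that the three bullets of the main claim are immediate consequences of the previously proved lemmas: the pass count comes from \lemref{unweight-pass}, the space bound from \lemref{unweight-space}, and the approximation guarantee from \lemref{unweight-approx}. The only thing to check is that the parallel executions over the guesses $k \in \{2^i : 0 \le i \le \log n\}$ do not blow up any of the three measures beyond $\tldO(\cdot)$: the pass count is unchanged since the parallel runs share the stream, the space is multiplied only by an $O(\log n)$ factor which is absorbed in $\tldO(\cdot)$, and the approximation guarantee holds in the one branch where $\card{\opt} \le k < 2\card{\opt}$, with the \Return step taking the best solution across branches. A union bound over the $O(\log n)$ guesses keeps the overall failure probability at $1-m^{1-c/4+o(1)}$, which is still high probability for a suitable choice of the constant $c$ in the sample size.

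For the tightness statement, I would set $\rho = 1$ (achievable with exponential computation by an exact offline solver) and choose $\delta = \Theta(1/\log n)$. Then $n^{\delta} = 2^{\delta \log n} = O(1)$, so the space becomes $\tldO(m \cdot n^{\delta}) = \tldO(m)$, while the approximation factor is $O(\rho/\delta) = O(\log n)$ and the number of passes is $2/\delta = O(\log n)$. This matches Nisan's lower bound of $\tldOmega(m)$ for $O(\log n)$-approximation protocols \cite{n-ccasp-02} up to a $\polylog(m)$ factor, which is precisely the claim.

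The main (mild) obstacle is bookkeeping: specifically confirming that the $\log n$ multiplicative overhead from running all guesses in parallel is indeed absorbed into $\tldO(\cdot)$, and that the failure probabilities across iterations and guesses sum to a $1/\poly(m)$ quantity. Neither requires substantive new work — both follow by standard union bounds over the $O(\log n / \delta)$ iteration-guess pairs — but they need to be stated to justify the ``with high probability'' clause in the theorem. Beyond that, the proof of this theorem is purely a matter of collecting and restating what has already been established.
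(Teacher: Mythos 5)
Your proposal is correct and follows essentially the same route as the paper: the first part is assembled directly from \lemref{unweight-pass}, \lemref{unweight-space}, and \lemref{unweight-approx}, and the tightness claim is obtained by setting $\rho=1$ and $\delta=\Theta(1/\log n)$ so that $n^\delta=O(1)$, space becomes $\tldO(m)$, and the resulting $O(\log n)$-approximation, $O(\log n)$-pass algorithm is matched against Nisan's $\tldOmega(m)$ lower bound. The extra bookkeeping you add about the union bound over the $O(\log n)$ guesses is already implicitly covered by the failure probability computed in \lemref{unweight-approx}, so your proof is, if anything, slightly more explicit but not different in substance.
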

\begin{proof}
    The first part of the proof implied by \lemref{unweight-pass},
    \lemref{unweight-space}, and \lemref{unweight-approx}.

    As for the lower bound, note that by a result of Nisan
    \cite{n-ccasp-02}, any randomized
    ($\log n \over 2$)-\si{approxima}\-\si{tion} protocol for
    \prob{Set Cover($\GSet,\sS$)} in the one-way communication model
    requires $\Omega(m)$ bits of communication, no matter how many
    number of rounds it makes. This implies that any randomized
    $O(\log n)$-pass, ($\log n \over 2$)-approximation algorithm for
    \prob{Set Cover($\GSet,\sS$)} requires $\tldOmega(m)$ space, even
    under the exponential computational power assumption.

    By the above, the \algIterSC algorithm makes $O(1/\delta)$ passes
    and uses $\tldO(mn^{\delta})$ space to return a
    $O({1\over\delta})$-approximate solution under the exponential
    computational power assumption ($\rho=1$). Thus by letting
    $\delta=c/\log n$, we will have a
    $({\log n\over 2})$-approximation streaming algorithm using
    $\tldO(m)$ space which is optimal up to a factor of $\polylog(m)$.
\end{proof}

\thmref{algorithm-tightness} provides a strong indication that our
trade-off algorithm is optimal.

%%%%%%%%%%%%%%%%%%%%%%%%%%%%%%%%%%%%%%%%%%%%
% .................................................. Lower Bounds
% ...............................................%
%%%%%%%%%%%%%%%%%%%%%%%%%%%%%%%%%%%%%%%%%%%%

\section{Lower Bound for Single Pass\INP{ Algorithms}}
\seclab{lower-bound}%

In this section, we study the \probSetCover problem in the
two-party communication model and give a tight lower bound on the
communication complexity of the randomized protocols solving the
problem in a single round.  In the two-party \probSetCover, we are
given a set of elements $\GSet$ and there are two players Alice and Bob
where each of them has a collection of subsets of $\GSet$, $\sS_{A}$ and
$\sS_{B}$. The goal for them is to find a minimum size cover
$\sC \subseteq \sS_{A} \cup \sS_{B}$ covering $\GSet$ while communicating
the fewest number of bits from Alice to Bob (In this model Alice communicates 
to Bob and then Bob should report a solution).

Our main lower bound result for the single pass protocols for \prob{Set Cover} is the following theorem which implies
that the naive approach in which one party sends all of its sets to
the the other one is optimal.
\begin{theorem}%
    \thmlab{main}%
    Any single round randomized protocol that approximates \prob{Set
       Cover$(\GSet,\sS)$} within a factor better than $3/2$ and error
    probability $O(m^{-c})$ requires $\Omega(mn)$ bits of
    communication where $n = \card{\GSet}$ and $m=\card{\sS}$ and $c$ is
    a sufficiently large constant.
\end{theorem}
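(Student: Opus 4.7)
The plan is to prove the lower bound via a one-way communication complexity reduction. Fix a hard input distribution in which Alice holds $m$ sets $A'_1,\dots,A'_m \subseteq [n]$ drawn by placing each element in each set independently with probability $1/2$, so that her input $X \in \{0,1\}^{mn}$ has entropy $mn$. Bob holds a pair $(i^*,j^*) \in [m] \times [n]$ that determines the single bit $X_{i^*j^*}$ he wishes to learn (whether $j^* \in A'_{i^*}$). To embed this decoding task into \prob{Set Cover}, I augment the ground set with special elements $\star,\tau_1,\dots,\tau_m$ so that $\GSet := \{\star\} \cup \{\tau_1,\dots,\tau_m\} \cup [n]$, let Alice contribute the $m$ sets $A_i := \{\star,\tau_i\} \cup A'_i$, and let Bob contribute the single set $B := (\{\tau_1,\dots,\tau_m\} \setminus \{\tau_{i^*}\}) \cup ([n] \setminus \{j^*\})$.

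The combinatorial core of the proof is that, conditional on a high-probability event over Alice's input, the optimum cover of this instance equals $2$ iff $j^* \in A'_{i^*}$ and is at least $3$ otherwise. Every cover must include a set containing $\star$, so it must use some $A_i$; since $\tau_{i^*}$ appears only in $A_{i^*}$ and $B$ was built to omit $\tau_{i^*}$, any 2-cover must use $A_{i^*}$. The pair $\{A_{i^*},B\}$ then covers $\GSet$ exactly when $[n] \setminus A'_{i^*} \subseteq [n] \setminus \{j^*\}$, i.e.\ when $j^* \in A'_{i^*}$. The only spurious 2-cover would require some pair $\{A_i,A_j\}$ with $A'_i \cup A'_j = [n]$; under the i.i.d.\ model this occurs with probability at most $\binom{m}{2}(3/4)^n = O(m^{-c})$ and is absorbed into the error budget, together with the similarly negligible event $j^* \notin \bigcup_i A'_i$ (which would leave the instance infeasible).

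Any protocol with approximation ratio strictly less than $3/2$ must then output a cover of size exactly $2$ when $\mathrm{OPT}=2$ (since $(3/2-\varepsilon) \cdot 2 < 3$ forces the output to the integer $2$) and of size at least $3$ when $\mathrm{OPT} \geq 3$ (as every output is a feasible cover). Bob can therefore decode $X_{i^*j^*}$ from the returned cover size with total error $O(m^{-c})$. Varying $(i^*,j^*)$ over all $mn$ possibilities---each ``virtual Bob'' receives the same Alice message $\pi$---a union bound shows that for $c$ taken sufficiently large, $\pi$ simultaneously determines every bit of $X$ with probability $1 - o(1)$. Since $H(X) = mn$, this forces $|\pi| = \Omega(mn)$ by Fano's inequality.

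The main obstacle I anticipate is balancing the several error sources---per-query protocol error $O(m^{-c})$, the ``no spurious 2-cover'' event over Alice's random input, the ``$j^*$ is covered'' event, and finally the union bound over all $mn$ queries---so that the total failure probability stays $o(1)$ while Alice's entropy remains $\Omega(mn)$. Fortunately the i.i.d.\ $1/2$-product distribution handles all of these simultaneously once $c$ is taken large enough relative to $\log_m n$, so no more delicate structural argument beyond the construction above should be required.
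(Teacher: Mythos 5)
Your proof takes a genuinely different route from the paper's, although both proofs encode $mn$ random bits into Alice's sets and argue that the one-way message essentially determines them. The paper reduces to a problem it calls \prob{(Many vs One)-Set Disjointness} and proves the $\Omega(mn)$ bound for that problem by an \emph{adaptive-query recovery} argument: Bob repeatedly queries random subsets of size $\Theta(\log m)$, each of which is disjoint from exactly one of Alice's sets with decent probability (Lemma~4.3), and by following up with $n$ single-element extensions recovers that set; after $m^{O(1)}$ such queries plus a pruning step he reconstructs $\sS_A$ entirely, and since $\sS_A$ has $\Omega(2^{mn})$ distinguishable values the message length follows. You instead embed a direct INDEX-style task: auxiliary elements $\star,\tau_1,\dots,\tau_m$ pin down $A_{i^*}$ as the only candidate partner for Bob's set $B$, so a $(<3/2)$-approximation's output size immediately reveals the bit $X_{i^*j^*}$; decoding all $mn$ bits then gives the bound via Fano. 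Your route is more elementary --- no adaptive querying, no pruning, no intersecting-family lemma --- and in fact the combinatorial core is even tighter than you state: since any pair $\{A_i,A_j\}$ misses every $\tau_\ell$ with $\ell\notin\{i,j\}$, there are \emph{no} spurious two-covers at all for $m\geq 3$; the $\binom{m}{2}(3/4)^n$ term is vacuous and can be dropped from the error budget.

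There is one substantive gap to flag. Your augmented ground set has $n+m+1$ elements and $m+1$ sets, so the $\Omega(mn)$ entropy you extract matches $\Omega(\card{\sS}\cdot\card{\GSet})$ for the constructed instance only when $m=O(n)$. In the paper's regime $m\geq n$ (and more generally when $m\gg n$), the constructed instance has $\card{\GSet}=\Theta(m)$, so the theorem would demand $\Omega(m^2)$ but you deliver only $\Omega(mn)$; equivalently, if you shrink the core $[n_0]$ to make the total ground set have size exactly $n$, you need $n_0=n-m-1$, and the bound degrades to $\Omega(m(n-m-1))$, which is $\Omega(mn)$ only for $m\leq (1-\Omega(1))n$. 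The paper's construction never augments --- Alice's Set Cover sets are exactly the complements of her random sets, and the hard Set Disjointness instance lives on the original ground set $[n]$ --- so it gets $\Omega(mn)$ on an instance with $\card{\GSet}=n$ and $\card{\sS}=m+1$ for any $n\geq c_1\log m$. That cleanness is exactly what the more elaborate recovery argument buys. Your approach as written is correct and cleaner for $m=O(n)$, but to recover the full claimed parameter range you would need to eliminate the $m$ auxiliary $\tau$-elements, and the natural way to do so leads back to something like the paper's reconstruction argument.
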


We consider the case in which the parties want to decide whether there
exists a cover of size $2$ for $\GSet$ in $\sS_{A} \cup \sS_{B}$ or not. If any
of the parties has a cover of size at most $2$ for $\GSet$, then it
becomes
trivial. %Thus we consider the case in which Alice and Bob need to check whether there exists $S_{A} \in \sS_{A}$ and $S_{B} \in \sS_{B}$ such that $\GSet \subseteq S_A \cup S_B$.
Thus the question is whether there exist $\range_a \in \sS_{A}$ and
$\range_b \in \sS_{B}$ such that $\GSet \subseteq \range_a \cup \range_b$.

A key observation is that to decide whether there exist
$\range_{a} \in \sS_{A}$ and $\range_{b} \in \sS_{B}$ such that
$\GSet \subseteq \range_{a} \cup \range_{a}$, one can instead check whether there
exists $\range_{a} \in \sS_{A}$ and $\range_{b} \in \sS_{B}$ such that
$\bar{\range_{a}} \cap \bar{\range_{b}} = \emptyset$.  In other words we need to
solve \prob{OR} of a series of two-party \prob{Set Disjointness}
problems.  In two-party \prob{Set Disjointness} problem, Alice and Bob are given
subsets of $\GSet$, $\range_{a}$ and $\range_{b}$ and the goal is to decide whether
$\range_{a} \cap \range_{b}$ is empty or not with the fewest possible bits of
communication.  \prob{Set Disjointness} is a well-studied problem in
the communication complexity and it has been shown that any randomized
protocol for \prob{Set Disjointness} with $O(1)$ error probability
requires $\Omega(n)$ bits of communication where $n=\card{\GSet}$
\cite{bjks-isads-04,ks-pccsi-92,r-dcd-92}.

We can think of the following extensions of the \prob{Set
   Disjointness} problem.

\noindent%
\begin{compactenum}
    \item[{\it Many vs One:\INP{\\}}] In this variant, Alice has $m$
    subsets of $\GSet$, $\sS_A$ and Bob is given a single set
    $\range_{b}$. The goal is to determine whether there exists a set
    $\range_{a} \in \sS_A$ such that
    $\range_{a} \cap \range_{b} = \emptyset$.

    \item[{\it Many vs Many:\INP{\\}}] In this variant, each of Alice
    and Bob are given a collection of subsets of $\GSet$ and the goal
    for them is to determine whether there exist $\range_{a}\in \sS_A$
    and $\range_{b}\in \sS_B$ such that
    $\range_{a} \cap \range_{b} = \emptyset$.
\end{compactenum}

\noindent%
Note that deciding whether two-party \probSetCover has a cover of
size $2$ is equivalent to solving the \prob{(Many vs Many)-Set
   Disjointness} problem. Moreover, any lower bound for \prob{(Many vs
   One)-Set Disjointness} clearly implies the same lower bound for the
\prob{(Many vs Many)-Set Disjointness} problem.  In the following
theorem we show that any single-round randomized protocol that solves
\prob{(Many vs One)-Set Disjointness($m,n$)} with $O(m^{-c})$ error
probability requires $\Omega(mn)$ bits of communication.
\begin{theorem}%
    \thmlab{rand-all-one-set-disjointness}%

    Any randomized protocol for \prob{(Many vs One)-Set
       Disjointness$(m,n)$} with error probability that is
    $O( m^{-c})$ requires $\Omega(mn)$ bits of communication if
    $n\geq c_1\log m$ where $c$ and $c_1$ are large enough constants.
\end{theorem}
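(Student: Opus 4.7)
The plan is an information-theoretic argument showing that any one-way protocol must essentially transmit Alice's entire input: I exhibit a decoder that recovers $\sS_A$ from the transcript $M$ with high probability, hence $|M|\geq H(\sS_A)-o(mn)=\Omega(mn)$.

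I would first fix the hard distribution: let each $\range_i$ be a uniformly random subset of $\GSet$ (each element included independently with probability $1/2$), so $H(\sS_A)=mn$. The hypothesis $n\geq c_1\log m$ enters as follows. Since $\Pr[\range_i\subseteq\range_{i'}]=(3/4)^n$ for independent random subsets, a union bound over the $m^2$ ordered pairs shows that with probability $1-o(1)$ no $\range_i$ is contained in another; call such an input \emph{generic}. Next I would reformulate the problem as evaluating a monotone Boolean function. For a query $\range_b$, letting $T=\GSet\setminus\range_b$, the correct answer to (Many vs One)-Disjointness equals
\[
A(T)\;=\;\bigvee_{i=1}^{m}\bigl[\range_i\subseteq T\bigr],
\]
which is monotone in $T$. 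For a generic $\sS_A$, the minimal $1$-inputs of $A$ are exactly $\range_1,\dots,\range_m$, so the function $A$ determines $\sS_A$ completely (up to ordering, which contributes only $O(m\log m)=o(mn)$ bits).

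The decoder then proceeds as follows. Given the message $M$ and the protocol's random tape $R$, Bob can simulate the protocol to compute $\widehat A(\range_b)$ for any query he wishes. I would run a standard monotone-DNF identification procedure (an Angluin-style routine that repeatedly extracts a new minimal $1$-input by binary-searching from a known $1$-input down to a minimal one, then searches for one outside the span of those already found), using $N=\operatorname{poly}(m,n)$ membership queries on $A$. Whenever all these queries are answered correctly, the procedure returns the complete list of minterms, i.e.\ $\sS_A$ itself. By the worst-case error guarantee, $\widehat A(\range_b)=A(\range_b)$ with probability $\geq 1-m^{-c}$ for any fixed $\range_b$ (over protocol randomness); choosing $c$ large enough that $N\cdot m^{-c}=o(1)$ and taking a union bound over the decoder's $N$ queries yields $\widehat A\equiv A$ on all probed inputs with probability $1-o(1)$. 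Combined with genericity of $\sS_A$, the decoder therefore recovers $\sS_A$ from $(M,R)$ with probability $1-o(1)$.

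Finally, since $\sS_A$ is a deterministic function of $(M,R)$ up to failure probability $o(1)$, Fano's inequality yields $H(M\mid R)\geq H(\sS_A)-o(mn)=(1-o(1))mn$, so $\mathbb{E}|M|\geq H(M)\geq H(M\mid R)=\Omega(mn)$; Newman's theorem reduces the public-coin case to the private-coin one without loss. The main obstacle is the decoder step: one bit is extracted per simulated query, yet the collective answers must pin down an object of entropy $mn$, so we need $\Omega(mn)$ correctly answered queries but only $\operatorname{poly}(m,n)$ of them. Exhibiting a monotone-DNF learner with this query bound, and arguing that its output unambiguously recovers $\sS_A$, is exactly where the super-polynomially small error probability $m^{-c}$ and the genericity threshold $n\geq c_1\log m$ come into play.
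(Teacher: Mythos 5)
Your high-level skeleton coincides with the paper's: you use the same hard distribution (each $\range_i$ an i.i.d.\ uniform subset), prove the same genericity fact (with probability $1-o(1)$ no set is contained in another; the paper calls this an \emph{intersecting} family), and close with essentially the same information-theoretic counting step (the paper counts $\Omega(2^{mn})$ distinguishable inputs rather than invoking Fano, but these are interchangeable). The reformulation as a monotone Boolean function $A(T)=\bigvee_i[\range_i\subseteq T]$ whose minterms are Alice's sets is a clean restatement of the same observation the paper makes implicitly.

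The gap is exactly where you flag it, and it is a real gap, not a routine step. You invoke an ``Angluin-style routine'' for monotone DNF identification with $\operatorname{poly}(m,n)$ \emph{membership} queries. No such general-purpose routine exists: Angluin's algorithm for monotone DNF requires equivalence queries, and learning monotone DNF from membership queries alone is not known to be possible in polynomial query complexity. The step ``searches for a new minimal $1$-input outside the span of those already found'' is the whole difficulty --- with only membership access there is no generic way to locate a fresh $1$-input without exhaustively probing. The paper's proof does precisely the work you are missing: it does \emph{not} use a generic learner but rather exploits the randomness of $\sS_A$. Specifically, it repeatedly samples a random query $\range_b$ of size exactly $c_1\log m$ and shows (Lemma~\ref*{lemma:uniqueness}) that with probability at least $m^{-(c_1+1)}$ such a $\range_b$ is disjoint from \emph{exactly one} $\range\in\sS_A$; conditioned on that event, querying $\range_b\cup\{e\}$ for each $e$ recovers $\range$. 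The size parameter $c_1\log m$ is tuned so that the hit probability per sampled seed is inverse-polynomial in $m$, which is why $n\ge c_1\log m$ appears as a hypothesis and why $O(m^{c})$ queries suffice; and the pruning step uses the intersecting-family property to discard accidental unions of two sets. Your $m^{-c}$ union bound over the decoder's queries then applies. So the framing is correct, but you should replace the appeal to a general monotone-DNF learner with the specific small-random-seed sampling strategy, which is what actually uses the randomness of $\sS_A$ and the size threshold $n\geq c_1\log m$.
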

The idea is to show that if there exists a single-round randomized
protocol for the problem with $o(mn)$ bits of communication and error
probability $O(m^{-c})$, then with constant probability one can
distinguish $\Omega(2^{mn})$ distinct inputs using $o(mn)$ bits which is
a contradiction. %{\bf (more details are required here)}.

Suppose that Alice has a collection of $m$ uniformly and independently
random subsets of $\GSet$ (in each of her subsets the probability that
$e\in\GSet$ is in the subset is $1/2$).  Lets assume that there exists a
{\it single round} protocol $\mathbf{I}$ for \prob{(Many vs One)-Set
   Disjointness($n, m$)} with error probability $O(m^{-c})$ using
$o(mn)$ bits of communication.  Let \existsDisj be Bob's
algorithm in protocol
$\mathbf{I}$. %of \prob{(Many vs One)-Set Disjointness} problem.
% We show that
Then we show that one can recover $mn$ random bits with constant
probability using \existsDisj subroutine and the message
$s$ sent by the first party in protocol $\mathbf{I}$. The
\recoverRanBit which is shown in
\figref{recover-random-set-alg}, is the algorithm to recover random
bits using protocol $\mathbf{I}$ and \existsDisj.

To this end, Bob gets the message $s$ communicated by protocol
$\mathbf{I}$ from Alice and considers all subsets of size $c_1\log m$
and $c_1\log m +1$ of $\GSet$. Note that $s$ is communicated only once
and thus the same $s$ is used for all queries that Bob makes.  Then at
each step Bob picks a random subset $\range_{b}$ of size $c_1\log m$ of $\GSet$
and solve the \prob{(Many vs One)-Set Disjointness} problem with input
$(\sS_A,\range_{b})$ by running $\existsDisj(s,\range_{b})$.  Next we
show that if $\range_{b}$ is disjoint from a set in $\sS_A$, then with high
probability there is \emph{exactly} one set in $\sS_A$ which is
disjoint from $\range_{b}$ (see \lemref{uniqueness}). Thus once Bob finds
out that his query, $\range_{b}$, is disjoint from a set in $\sS_A$, he can
query all sets $\range_{b}^{+} \in \set{\range_{b}\cup e| e\in \GSet\setminus \range_{b}}$
and recover the set (or union of sets) in $\sS_A$ that is disjoint
from $\range_{b}$. By a simple \emph{pruning step} we can detect the ones
that are union of more than one set in $\sS_A$ and only keep the sets
in $\sS_A$.

In \lemref{recovery-success}, we show that the number of queries
that Bob is required to make to recover $\sS_A$ is $O(m^{c})$ where
$c$ is a constant.
% Thus any protocol $\mathbf{I}$ with error probability
% $O(1/2^{c\log m})$ provides an algorithm to recover all sets of
% $\sS_A$ with constant probability.  To this end we apply the
% following lemma.

\begin{figure}%[!h]
    \centerline{%
    \begin{algorithmEnv}
        \underline{\recoverRanBit{}$\Bigl.\pth{\bigl. \GSet, s }
           $:}\+\\[0.1cm]%
        $\sS_a \leftarrow \emptyset$ \\
        \For $i=1$ to $m^{c}\log m$ \Do
        \\
        \> {\bf Let} $\range_{b}$ be a random subset of $\GSet$ of
        size $c_1\log m$
        \\
        \>\If $\existsDisj{}(s,\range_b)= {\tt true}$ %
        \\%
        \INP{%
           \>\CodeComment{// Discovering the set (or union of sets)}\\
           \>\CodeComment{// in $\sS_A$ disjoint from $\range_b$} \\}
        \>\> $\range \leftarrow \emptyset$
        \\
        \>\> \For $e \in \GSet\setminus \range_b$
        \\
        \>\>\> \If $\existsDisj(\range_b \cup e,s)={\tt false}$
        \\
        \>\>\>\>\; $\range \leftarrow \range \cup e$
        \\
        \>\> \If $\exists \range'\in \sS_a$ s.t.
        $\range \subset \range'$ \quad \CodeComment{// Pruning step}
        \\
        \>\>\> $\sS_a \leftarrow \sS_a \setminus \brc{\range'}$,
        \qquad $\sS_a \leftarrow \sS_a \cup \brc{\range}$
        \\
        \>\> \Else \If $\nexists \range'\in \sS_a$ s.t.
        $\range' \subset \range$
        \\
        \>\>\> $\sS_a \leftarrow \sS_a \cup \brc{\range}$
        \\
        \Return $\sS_a$
    \end{algorithmEnv}%
    }
    %\vspace{-0.25 cm}
    \caption{{\recoverRanBit uses a protocol for
          \prob{(Many vs One)-Set Disjointness$(m,n)$} to recover
          Alice's sets, $\sS_A$ in Bob's side.}}
    \figlab{recover-random-set-alg}
\end{figure}

\begin{lemma}
    \lemlab{uniqueness} %
    Let $\range_{b}$ be a random subset of $\GSet$ of size $c\log m$ and let
    $\sS_A$ be a collection of m random subsets of $\GSet$. The
    probability that there exists exactly one set in $\sS_A$ that is
    disjoint from $\range_{b}$ is at least ${1 \over m^{c+1}}$.
\end{lemma}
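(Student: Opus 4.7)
The plan is a direct binomial calculation, isolating the two random choices in the right order. First, I would condition on the random query $\range_b$ (which is just a fixed subset of size $c\log m$) and use the fact that each set in $\sS_A$ is chosen independently, with each element included with probability $1/2$. Under this conditioning, for any single $R \in \sS_A$ the event $R \cap \range_b = \emptyset$ says that none of the $c\log m$ elements of $\range_b$ lies in $R$, which has probability exactly
\[
\pth{\tfrac{1}{2}}^{c\log m} \;=\; m^{-c}.
\]
Moreover, these events are independent across the $m$ sets of $\sS_A$.

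Next, let $X$ denote the number of sets in $\sS_A$ disjoint from $\range_b$. Then $X$ is Binomial with parameters $m$ and $p = m^{-c}$, and the exact single-success formula gives
\[
\Pr[X = 1] \;=\; m\,p\,(1-p)^{m-1} \;=\; m^{1-c}\,(1 - m^{-c})^{m-1}.
\]

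The only remaining step is to lower bound the factor $(1-m^{-c})^{m-1}$. By Bernoulli's inequality this is at least $1 - (m-1)m^{-c} \geq 1 - m^{1-c}$, which for $c \geq 2$ is at least $1/2$. Combining, $\Pr[X=1] \geq m^{1-c}/2 \geq m^{-(c+1)}$ whenever $m^2 \geq 2$, which holds in our regime. Since the bound is the same for every fixed $\range_b$, it also holds when we average over the random $\range_b$, giving the claim.

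The main ``obstacle'' is essentially bookkeeping: one must verify that the two sources of randomness (the random query $\range_b$ and the random family $\sS_A$) really can be decoupled so that the binomial formula applies, which is exactly what the initial conditioning step accomplishes. After that, the argument is just Bernoulli's inequality.
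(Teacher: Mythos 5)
Your proof is correct, and it takes a cleaner route than the paper's. The paper bounds the probability of "exactly one" as $\Pr(\geq 1) - \Pr(\geq 2)$, lower-bounding the first term by the probability that a single fixed set is disjoint from $\range_b$ (giving $m^{-c}$) and upper-bounding the second by a union bound over $\binom{m}{2}$ pairs (giving $\binom{m}{2}m^{-2c}$); the difference is then bounded below by $m^{-(c+1)}$. You instead observe, after conditioning on $\range_b$, that the number of disjoint sets $X$ is exactly $\mathrm{Bin}(m, m^{-c})$ by independence of the $m$ random sets, and compute $\Pr[X=1] = m^{1-c}(1-m^{-c})^{m-1}$ exactly, finishing with Bernoulli's inequality. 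Both arguments rely on the same two facts (each set misses a fixed $\range_b$ of size $c\log m$ with probability $m^{-c}$, and the sets are independent given $\range_b$), but your binomial computation is tighter — it actually yields a bound of order $m^{1-c}$ rather than $m^{-c}$ — and packages the independence more transparently. The paper's version is slightly more robust in the sense that its lower bound on $\Pr(\geq 1)$ uses only the marginal distribution of a single set, though it still needs pairwise independence for the $\Pr(\geq 2)$ step, so in this setting neither approach is strictly more general than the other.
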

\begin{proof}
    The probability that $\range_{b}$ is disjoint from exactly one set in
    $\sS_A$ is
    \begin{align*}
       \pr(\range_{b} \text{ is disjoint from $\geq 1$ set in } \sS_A)
      - \pr(\range_{b} \text{ is disjoint from $\geq 2$ sets in }
        \sS_A)  &\geq ({1 \over 2})^{c\log m} - {m \choose 2} ({1 \over
           2})^{2c\log m}\\ 
           &\geq {1 \over m^{c+1}}.
        % &\geq {1 \over m^c} - {1 \over m^{2c-2}} \geq {1 \over
        % m^{c+1}}.
    \end{align*}
    First we prove the first term in the above inequality. For an arbitrary set $\range \in \sS_A$,
    since any element is contained in $\range$ with probability
    ${1\over 2}$, the probability that $\range$ is disjoint from $\range_{b}$ is
    $(1/2)^{c\log m}$.  
    \begin{align*}
        \pr(\range_{b} \text{ is disjoint from at least {\bf one} set in }
        \sS_A) %&\geq \pr(\range_{b} \text{ is disjoint from } \range) 
        \geq 2^{-c\log m}.
    \end{align*}
    Moreover since there exist $m \choose 2$ pairs
    of sets in $\sS_A$, and for each $\range_1,\range_2 \in \sS_A$,
    the probability that $\range_{1}$ and $\range_{2}$ are disjoint from $\range_{b}$ is
    $m^{-2c}$,
    \begin{align*}
      \pr(\range_{b} \text{ \INP{is  }disjoint from at least 
      \INP{\bf two}\InPODS{2} sets in }
      \sS_A) %&\leq {m^2} \times {1 \over m^{2c}}\\ 
              &\leq m^{-(2c-2)}.
    \end{align*}
    
\end{proof}

A family of sets $\sM$ is called \emph{intersecting} if and only if
for any sets $A,B \in \sM$ either both $A\setminus B$ and
$B\setminus A$ are non-empty or both $A\setminus B$ and $B\setminus A$
are empty; in other words, there exists no $A, B \in \sM$ such that
$A \subseteq B$.  Let $\sS_A$ be a collection of subsets of
$\GSet$. We show that with high probability after testing $O(m^{c})$
queries for sufficiently large constant $c$, the \recoverRanBit
algorithm recovers $\sS_A$ completely if $\sS_A$ is
intersecting. First we show that with high probability the collection
$\sS_A$ is intersecting.
\begin{observation}%
    \obslab{obr:different-sets}%
    Let $\sS_A$ be a collection of $m$ uniformly random subsets of
    $\GSet$ where $\card{\GSet} \geq c\log m$. With probability at least
    $1-m^{-c/4+2}$, $\sS_A$ is an intersecting family.
\end{observation}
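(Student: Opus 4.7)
The plan is to carry out a direct union bound over ordered pairs of sets in $\sS_A$. First, for any two independent uniformly random subsets $A,B \subseteq \GSet$, I observe that the containment $A \subseteq B$ fails iff there exists $e \in \GSet$ with $e \in A$ and $e \notin B$. Since the per-element events $\{e \in A,\, e \notin B\}$ are independent across $e$ and each has probability $1/4$, this gives $\Pr[A \subseteq B] = (3/4)^n$, where $n = \card{\GSet}$.

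Next, I would apply this to each of the fewer than $m^2$ ordered pairs of distinct indices $(i,j)$ with $R_i, R_j \in \sS_A$ and take a union bound, yielding
\begin{equation*}
    \Pr\bigl[\exists\, i \neq j : R_i \subseteq R_j\bigr] \;\leq\; m^2 \left(\tfrac{3}{4}\right)^n.
\end{equation*}
If no such containment occurs, then for every distinct pair $R_i, R_j$ in $\sS_A$ either both set-differences $R_i \setminus R_j$ and $R_j \setminus R_i$ are nonempty, or (in the degenerate case $R_i = R_j$) both are empty -- which is exactly the condition for $\sS_A$ to be intersecting as defined just before the observation.

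Finally, I would substitute the hypothesis $\card{\GSet} \geq c\log m$ and invoke the elementary numerical inequality $\log_2(4/3) \geq 1/4$, which holds since $(4/3)^4 = 256/81 > 2$. This yields $m^2 (3/4)^n \leq m^{2 - c \log_2(4/3)} \leq m^{2 - c/4}$, matching the claimed failure probability of $m^{-c/4 + 2}$. The argument is essentially a one-line calculation once the per-pair containment probability is written down; the only place needing a moment's thought is the arithmetic bound $\log_2(4/3) \geq 1/4$. No concentration or anticoncentration machinery is required, so I do not anticipate any real obstacle.
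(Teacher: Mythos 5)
Your proof is correct and follows essentially the same route as the paper's: compute $\Pr[A\subseteq B]=(3/4)^n$ for a single ordered pair, union bound over the fewer than $m^2$ ordered pairs, and then substitute $n\geq c\log m$ together with $\log_2(4/3)\geq 1/4$ to land on $m^{2-c/4}$. The only difference is that you spell out the per-element calculation and the arithmetic bound on $\log_2(4/3)$, which the paper leaves implicit.
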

\begin{proof}
    The probability that $\range_1 \subseteq \range_2 $ is $({3\over 4})^n$ and
    there are at most $m(m-1)$ pairs of sets in $\sS_A$. Thus with probability at least
    $1- m^2({3 \over 4})^n \geq 1- 1/m^{{c\over 4}-2}$, $\sS_A$
    is intersecting.
\end{proof}
\begin{observation}%
    \obslab{obr:distinct-input}%
    The number of distinct inputs of Alice (collections of random subsets of $\GSet$), that is distinguishable by \recoverRanBit is 	$\Omega(2^{mn})$.
\end{observation}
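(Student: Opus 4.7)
The plan is to combine a counting argument with the correctness analysis of \recoverRanBit. Alice's input is a tuple of $m$ independent uniformly random subsets of $\GSet$, so there are exactly $2^{mn}$ equally likely inputs, and I will argue that \recoverRanBit reconstructs $\sS_A$ exactly with probability $1 - o(1)$ over the joint randomness of Alice's input and Bob's internal coins. Standard averaging will then yield $\Omega(2^{mn})$ Alice-inputs for which, under some fixing of Bob's coins, the algorithm produces the unique output $\sS_A$, and these are exactly the distinguishable inputs.

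First, I condition on two high-probability events. Event (a): $\sS_A$ is intersecting; by \obsref{obr:different-sets} this holds with probability at least $1-m^{-c/4+2}$. Event (b): every $\range^{\ast}\in\sS_A$ is ``uniquely hit'' by at least one of the $m^{c}\log m$ random queries $\range_{b}$ of size $c_{1}\log m$, meaning some $\range_{b}$ is disjoint from $\range^{\ast}$ and from no other set of $\sS_A$. By symmetry on top of \lemref{uniqueness}, the per-query probability of uniquely hitting a fixed $\range^{\ast}$ is $\Omega(m^{-c_{1}-2})$; taking $c$ sufficiently large relative to $c_{1}$ and using a standard Chernoff/Markov bound gives a per-set miss probability of $m^{-\Omega(1)}$, and a union bound over the $m$ sets still leaves failure probability $o(1)$.

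Next, I verify that conditional on (a) and (b), \recoverRanBit outputs $\sS_A$. When a query $\range_{b}$ is uniquely disjoint from $\range^{\ast}\in\sS_A$, the inner loop reconstructs $\range^{\ast}$ exactly: for $e\in\range^{\ast}$ the only surviving disjoint set $\range^{\ast}$ is killed by adjoining $e$, so \existsDisj returns \texttt{false} and $e$ is added; for $e\notin\range^{\ast}$, $\range^{\ast}$ remains disjoint from $\range_{b}\cup\{e\}$, so \existsDisj returns \texttt{true} and $e$ is not added. Any spurious recovery arises only from a query disjoint from at least two sets of $\sS_A$, and the resulting $\range$ is strictly contained in some true member of $\sS_A$; intersectness ensures the pruning step behaves consistently so that every true $\range^{\ast}\in\sS_A$ appears in the final $\sS_{a}$ and nothing else does.

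The main obstacle I anticipate is the bookkeeping of the pruning step across many iterations: although each individual unique hit recovers a true set and each multiple hit produces a strict subset of some true set, one must argue that add/remove operations do not oscillate and that the final $\sS_{a}$ equals $\sS_A$. I would handle this by maintaining the invariant that at all times every element of $\sS_{a}$ is either a true set of $\sS_A$ or a strict subset of some true set not yet encountered, and then observing under event (b) that every true set is eventually processed, at which point the pruning rule forces the final configuration to be exactly $\sS_A$. Combining the bounds, the algorithm succeeds with probability $1-o(1)$ over the joint randomness, so by averaging at least a constant fraction of the $2^{mn}$ Alice inputs admit a fixing of Bob's coins on which the reconstruction is perfect; these give the desired $\Omega(2^{mn})$ distinguishable inputs.
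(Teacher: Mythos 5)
Your proof takes a substantially different and heavier route than the paper: the paper's proof is a pure counting argument that simply counts intersecting families and then normalizes, while yours re-derives the correctness of \recoverRanBit{} (conditioning on an intersecting family and a unique-hit event) and applies an averaging argument. In the paper's structure, that correctness work lives in \lemref{recovery-success} and \corref{cor:constant-pass-protocol}; \obsref{obr:distinct-input} is deliberately isolated as the combinatorial ingredient, and your approach collapses the two back together.

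More importantly, there is a concrete gap in your argument. You treat Alice's input as an ordered tuple (there are $2^{mn}$ of these), yet \recoverRanBit{} returns an \emph{unordered} collection $\sS_a$. Two tuples that are permutations of one another produce the same output under perfect reconstruction, so they are \emph{not} distinguishable; your averaging step therefore establishes only that $\Omega(2^{mn})$ tuples admit a coin-fixing under which reconstruction is perfect, which is not the same as $\Omega(2^{mn})$ pairwise-distinguishable inputs. The paper handles this explicitly: after restricting to intersecting families (so that the unordered collection uniquely determines the input up to permutation), it divides by $m!$ to count equivalence classes, obtaining $\Omega(2^{mn}/m!)$, and then observes that for $n \geq c\log m$ one has $m! \leq 2^{m\log m}$ and hence $2^{mn}/m! \geq 2^{m(n-\log m)} = 2^{\Omega(mn)}$. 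Your write-up never performs this division nor invokes the $n \geq c\log m$ hypothesis, so the claimed bound of $\Omega(2^{mn})$ distinguishable inputs does not follow from the argument as written. To repair it, you would need to pass to unordered collections (or equivalence classes of tuples) and then carry out the $m!$ bookkeeping exactly as the paper does.
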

\begin{proof}
    There are $2^{mn}$ collections of $m$ random subsets of $\GSet$. By \obsref{obr:different-sets}, $\Omega(2^{mn})$
    of them are intersecting. Since we can only recover the sets in the input collection and not their order, the distinct number of 
    input collection that are distinguished by \recoverRanBit is $\Omega({2^{mn} \over m!})$ which is $\Omega(2^{mn})$ 	for $n\geq c\log m$.  
\end{proof}

\noindent
By \obsref{obr:different-sets} and only considering the case such that
$\sS_A$ is intersecting, we have the following lemma.
\begin{lemma}
    \lemlab{recovery-success} %
    Let $\sS_A$ be a collection of $m$ uniformly random subsets of
    $\GSet$ and suppose that $\card{\GSet}\geq c\log m$. After testing at
    most $m^{c}$ queries, with probability at least
    $(1-{1\over m})p^{m^{c}}$, $\sS_A$ is fully recovered, where $p$
    is the success rate of protocol $\mathbf{I}$ for the \prob{(Many vs
       One)-Set Disjointness} problem.
\end{lemma}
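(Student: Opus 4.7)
The plan is to decompose the failure probability into two independent contributions: the chance that some invocation of the underlying protocol $\mathbf{I}$ returns an incorrect answer, and, conditional on all invocations being correct, the chance that the random sampling in \recoverRanBit fails to produce a witness for some set of $\sS_A$. Since the at most $m^c$ invocations of \existsDisj use independent random coins of the one-way protocol, the probability that all of them answer correctly is $p^{m^c}$, which accounts for that factor in the lemma; I condition on this event for the remainder of the analysis.

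For a fixed $\range \in \sS_A$, I would bound below the probability, over a single outer iteration, that the sampled $\range_b$ is disjoint from $\range$ and from no other set in $\sS_A$. Following the argument of \lemref{uniqueness}, $\pr(\range_b \cap \range = \emptyset) \geq (1/2)^{c_1 \log m} = m^{-c_1}$, while a union bound over the other $m-1$ sets shows that the probability some other set of $\sS_A$ is also disjoint from $\range_b$ is at most $(m-1)m^{-c_1} \leq 1/2$ for $c_1$ sufficiently large, leaving probability at least $m^{-c_1}/2$ that $\range_b$ singles out exactly $\range$. Conditional on such a $\range_b$, the inner loop of \recoverRanBit reconstructs $\range$ exactly, since an element $e \in \GSet \setminus \range_b$ triggers the update $\range \leftarrow \range \cup \{e\}$ iff no set in $\sS_A$ is disjoint from $\range_b \cup \{e\}$, which (with a perfect oracle) holds iff the unique disjoint set $\range$ now intersects $\range_b \cup \{e\}$, i.e., iff $e \in \range$. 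Choosing the constant so that the outer loop makes $T \geq m^{c_1+2}\log m$ independent attempts (and absorbing the inner loop's $O(n)$ queries per successful outer iteration into $c$ so that the total query budget stays within $m^c$), the probability that no iteration produces a good $\range_b$ for a given $\range$ is at most $(1-m^{-c_1}/2)^T \leq \exp(-T m^{-c_1}/2) \leq 1/m^2$, and a union bound over the $m$ sets in $\sS_A$ yields the $1-1/m$ factor in the lemma.

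The main obstacle is arguing that the pruning step does not corrupt the output. Iterations for which $\range_b$ happens to be disjoint from several sets of $\sS_A$ simultaneously produce a $\range$ that is not an atomic member of $\sS_A$, and I must show that such spurious discoveries are either filtered out by the subset tests or absorbed into the overall failure bound. The intersecting property supplied by \obsref{obr:different-sets} is essential here: since the atomic members of $\sS_A$ form an antichain with high probability, the subset comparisons in the pruning step can reliably distinguish true members of $\sS_A$ from artifacts, and the small additional probability contributed by bad outcomes on these rare iterations can be folded into the $1-1/m$ failure estimate without loss.
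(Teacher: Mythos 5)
Your proof takes essentially the same route as the paper: bound the per-query probability that a random $\range_b$ is disjoint from exactly one specified set of $\sS_A$, conclude that $\Theta(m^{c_1+2}\log m)$ outer queries isolate every set with probability $1-1/m$, and absorb the protocol's per-invocation error into the $p^{m^c}$ factor. Your one small departure is cosmetic: you lower-bound $\Pr[\range_b \text{ is disjoint from a given } \range \text{ and no other set}]$ directly, multiplying $\Pr[\range_b\cap\range=\emptyset]\ge m^{-c_1}$ by a union-bound estimate over the remaining $m-1$ sets, rather than routing through \lemref{uniqueness} and a symmetry argument as the paper does; this yields a slightly tighter per-query bound ($m^{-c_1}/2$ versus $m^{-c_1-2}$) but changes nothing structural. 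One caveat you should be aware of: the hand-wave you flag around the pruning step is also present in the paper, and the paper's own description of the multi-disjoint case is actually backwards --- when $\range_b$ is disjoint from several members of $\sS_A$, the inner loop adds $e$ precisely when \emph{every} disjoint set contains $e$, so it recovers their \emph{intersection}, not their union as the paper asserts --- so the assertion that the subset-based pruning ``reliably distinguishes'' atoms from artifacts is not fully justified in either your argument or the paper's, and closing it would require more care (e.g.\ arguing that multi-disjoint queries are rare enough to fold into the $1-1/m$ bound, as you hint at in your final sentence).
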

\begin{proof}
    By \lemref{uniqueness}, for each $\range_{b}\subset\GSet$ of size
    $c_1\log m$ the probability that $\range_{b}$ is disjoint from exactly
    one set in a random collection of sets $\sS_A$ is at least
    $1/m^{c_1+1}$. Given $\range_{b}$ is disjoint from exactly one set in
    $\sS_A$, due to symmetry of the problem, the chance that $\range_{b}$ is
    disjoint from a specific set $\range \in \sS_A$ is at least
    ${1 \over m^{c_1+2}}$.  After $\alpha m^{c_1+2}\log m$
    queries where $\alpha$ is a large enough constant, for any
    $\range \in \sS_A$, the probability that there is not a query $\range_{b}$
    that is only disjoint from $\range$ is at most
    $(1-{1\over m^{c_1+2}})^{\alpha
           m^{c_1+2} \log m} 
           \leq e^{-\alpha \log m} =
        {1\over m^\alpha}$.

    Thus after trying $\alpha m^{c_1+2}\log m$ queries, with
    probability at least
    $(1- {1 \over 2m^{\alpha-1}}) \geq (1-{1\over m})$, for each
    $\range \in \sS_{A}$ we have at least one query that is only disjoint
    from $\range$ (and not any other sets in $\sS_A\setminus \range$).
    
    Once we have a query subset $\range_{b}$ which is only disjoint from a
    single set $\range \in \sS_A$, we can ask $n-c\log m$ queries of size
    $c_1\log m +1$ and recover $\range$.  Note that if $\range_{b}$ is disjoint
    from more than one sets in $\sS_A$ simultaneously, the process
    (asking $n-c\log m$ queries of size $c_1\log m +1$) will end up in
    recovering the union of those sets. Since $\sS_A$ is an
    intersecting family with high probability
    (\obsref{obr:different-sets}), by \emph{pruning step} in the
    \recoverRanBit algorithm we are guaranteed that at the end of
    the algorithm, what we returned is exactly $\sS_A$.  Moreover the
    total number of queries the algorithm makes is at most
    \begin{align*}
    n\times (\alpha m^{c_1+2}\log m) \leq \alpha 
    m^{c_1+3}\log m \leq m^c
    \end{align*} 
    for $c\geq c_1+4$.

    Thus after testing $m^c$ queries, $\sS_A$ will be recovered with
    probability at least $(1-{1\over m})p^{m^c}$ where $p$ is the
    success probability of the protocol $\mathbf{I}$ for \prob{(Many
       vs One)-Set Disjointness($m,n$)}.
\end{proof}

\begin{corollary}%
    \corlab{cor:constant-pass-protocol}%
    Let $\textbf{I}$ be a protocol for \prob{(Many vs One)-Set
       Disjointness($m,n$)} with error probability $O(m^{-c})$ and $s$
    bits of communication such that $n \geq c\log m$ for large enough
    $c$. Then \recoverRanBit recovers $\sS_A$ with
    constant success probability using $s$ bits of communication.
\end{corollary}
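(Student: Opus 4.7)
The plan is to observe that the corollary is essentially a repackaging of Lemma~\ref{lemma:recovery-success}, with the single point to verify being that Bob can run many copies of \existsDisj without any additional communication from Alice beyond the original $s$-bit message.

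First, I would unpack the communication accounting. Since $\mathbf{I}$ is a single-round protocol, Alice transmits exactly one message $s$ of length at most $s$ bits, and then Bob's decoder \existsDisj$(s,\cdot)$ is a purely local randomized algorithm. In \recoverRanBit, every invocation of \existsDisj reuses the same transcript $s$: Bob only supplies different local queries $\range_b$. Therefore the total communication cost of \recoverRanBit is still just the $s$ bits of the original message, with no interaction needed during the $m^c$ calls to \existsDisj.

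Next, I would control the error. Lemma~\ref{lemma:recovery-success} already says that if \emph{every} call to \existsDisj returns the correct answer, then \recoverRanBit reconstructs $\sS_A$ with probability at least $(1-1/m)$ (the $1/m$ loss coming from the probabilistic argument that every $\range\in\sS_A$ is isolated by some query). Thus it suffices to show that with constant probability all $m^c$ queries to \existsDisj succeed simultaneously. I would apply a union bound: each query fails with probability $O(m^{-c'})$ for the constant $c'$ hidden in the error guarantee of $\mathbf{I}$, so the probability that some query in the batch of $m^c$ fails is at most $m^{c}\cdot O(m^{-c'}) = O(m^{c-c'})$, which is $o(1)$ provided $c'$ is chosen sufficiently larger than $c$ (which is what the statement ``for large enough $c$'' affords us).

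Combining the two contributions, \recoverRanBit succeeds with probability at least $(1-1/m)(1-O(m^{c-c'})) = 1 - o(1)$, i.e., a constant, while using only the $s$ bits of communication inherited from $\mathbf{I}$. The one subtlety I would double-check at the end is that the queries made by \recoverRanBit, though adaptive, are valid inputs for the distributional guarantee of Lemma~\ref{lemma:uniqueness}: since Bob chooses each $\range_b$ uniformly at random of size $c_1\log m$ independently of $\sS_A$, and since \existsDisj's error is bounded for every fixed input pair, the union bound over queries is legitimate. This is the only place where a careful step is needed; the rest is bookkeeping.
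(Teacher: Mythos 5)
Your proposal takes essentially the same route the paper (implicitly) intends: communication is just the single message $s$ since Bob's decoding runs locally, and the success probability comes from plugging the error guarantee of $\mathbf{I}$ into \lemref{recovery-success}. The paper's lemma already packages the query-success probability as the factor $p^{m^c}$ in the bound $(1-\tfrac{1}{m})p^{m^c}$; with $p \geq 1 - \alpha m^{-c}$ this factor is $\geq (1-\alpha m^{-c})^{m^c} = \Omega(1)$, which directly gives the corollary. You re-derive that factor via a union bound rather than reading it off the lemma, which is a fine and arguably more conservative route; note, though, that with a union bound and error exactly $\Theta(m^{-c})$ you only get failure probability $O(1)$, not $o(1)$, so your claim of ``$1-o(1)$'' slightly overstates things -- either the implied constant must be small or the exponent in the error bound must strictly exceed the query-count exponent (which is what ``large enough $c$'' is being used for). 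Constant success probability is all the corollary needs, so this does not affect correctness.

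One small conceptual slip in your final paragraph: you invoke \lemref{uniqueness} to justify the legitimacy of the union bound over adaptive queries, but that lemma concerns the combinatorial event that a random query isolates a unique Alice set; it has nothing to do with whether \existsDisj answers correctly. The adaptive-query concern (the follow-up queries $\range_b \cup e$ depend on earlier outcomes, and all calls share Alice's randomness embedded in $s$) is a genuine subtlety, but it is handled by the union bound itself, which needs no independence -- you just need the per-query worst-case error guarantee. The paper glosses over this in the same way, so your treatment is consistent with it, but the appeal to \lemref{uniqueness} is misplaced and should be removed.
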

% \begin{proof}
%     \xxx{We need to discuss the pruning step}
% \end{proof}

By \obsref{obr:distinct-input}, since \recoverRanBit distinguishes 
$\Omega(2^{mn})$ distinct inputs with
constant probability of success (by
\corref{cor:constant-pass-protocol}), the size of message sent by
Alice, should be $\Omega(mn)$. This proves
\thmref{rand-all-one-set-disjointness}.

\begin{proofof}{\thmref{main}:}
    As we showed earlier, the communication complexity of \prob{(Many
    vs One)-Set Disjointness} is a lower bound for the
    communication complexity of \prob{Set Cover}. %where
    %$\card{\GSet} = n$ and $\card{\sS} = m$.
    \thmref{rand-all-one-set-disjointness} showed that any
    protocol for \prob{(Many vs One)-Set Disjointness$(n,\card{\sS_A)}$} with
    error probability less than $O(m^{-c})$ requires $\Omega(mn)$ bits
    of communication.  Thus any single-round randomized protocol for
    \probSetCover with error probability $O(m^{-c})$ requires
    $\Omega(mn)$ bits of communication.
\end{proofof}

Since any $p$-pass streaming $\alpha$-approximation algorithm for
problem \prob{P} that uses $O(s)$ memory space, is a $p$-round
two-party $\alpha$-approximation protocol for problem \prob{P} using
$O(sp)$ bits of communication \cite{gm-tlbsp-08}, and by
\thmref{main}, we have the following
lower bound for \probSetCover problem in the streaming model.

\begin{theorem}%
    \thmlab{lower-bound-set-cover}%
    Any single-pass randomized streaming algorithm for \prob{Set
       Cover($\GSet,\sS$)} that computes a $(3/2)$-approximate
    solution with probability $\Omega(1-m^{-c})$ requires $\Omega(mn)$
    memory space (assuming $n\geq c_1\log m$).
\end{theorem}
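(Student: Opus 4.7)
The plan is to reduce from the two-party single-round communication lower bound established in \thmref{main} via the standard streaming-to-communication simulation. Concretely, I would argue by contrapositive: suppose there were a single-pass randomized streaming algorithm $\mathcal{A}$ for \prob{Set Cover$(\GSet,\sS)$} using $S$ bits of memory space that achieves a $(3/2)$-approximation with success probability $1-O(m^{-c})$. I would then build from $\mathcal{A}$ a one-round communication protocol for the same problem in the two-party model of \thmref{main}.

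The simulation is the standard one (see \cite{gm-tlbsp-08}). Given an instance of the two-party \prob{Set Cover} problem with Alice holding $\sS_A$ and Bob holding $\sS_B$ (so that $\sS = \sS_A \cup \sS_B$), Alice fixes a canonical ordering of the sets in $\sS_A$ and feeds them one by one into $\mathcal{A}$, obtaining the contents of $\mathcal{A}$'s memory after processing all of her sets. This memory snapshot has length $S$ and constitutes the single message Alice sends to Bob. Bob then resumes $\mathcal{A}$ from this state, feeds his own sets $\sS_B$ into the algorithm, and outputs whatever cover $\mathcal{A}$ returns. The total communication is $S$ bits and the protocol's success probability matches that of $\mathcal{A}$; both players additionally require only shared/public randomness, which does not affect the lower bound in \thmref{main}.

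By \thmref{main}, any such single-round randomized protocol that distinguishes covers of size $2$ from covers of size $3$ (i.e.\ beats a $3/2$-approximation) with error probability $O(m^{-c})$ must use $\Omega(mn)$ bits of communication, provided $n \geq c_1 \log m$. Therefore $S = \Omega(mn)$, which is exactly the stated space lower bound.

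I do not anticipate any real obstacle here: the theorem is essentially a restatement of \thmref{main} via the textbook streaming-to-communication reduction. The only detail worth being careful about is making sure that the success probability and the approximation factor are preserved exactly by the simulation, and that the $n \geq c_1 \log m$ hypothesis of \thmref{main} is carried through unchanged — both of which are immediate since the simulation just runs $\mathcal{A}$ verbatim on the union of the two parties' inputs.
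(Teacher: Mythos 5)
Your proof is correct and takes essentially the same route as the paper: the paper also invokes the standard streaming-to-communication simulation (citing \cite{gm-tlbsp-08}, stating that a $p$-pass, $O(s)$-space streaming algorithm yields a $p$-round protocol with $O(sp)$ communication) and then applies \thmref{main} directly. The only cosmetic difference is that you spell out the single-pass simulation explicitly, while the paper states the general $p$-pass fact and lets $p=1$ be implicit.
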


%%%%%%%%%%%%%%%%%%%%%%%%%%%%%%%%
% .................................... Geometric Set  Cover............................%
%%%%%%%%%%%%%%%%%%%%%%%%%%%%%%%% 
\section{Geometric Set Cover}
\seclab{geometric-set-cover}
In this section, we consider the streaming \probSetCover problem
in the geometric settings.  We present an algorithm for the case where
the elements are a set of $n$ points in the plane $\Re^2$ and
the $m$ sets are either all disks, all axis-parallel rectangles, or
all $\alpha$-fat triangles (which for simplicity we call shapes) given
in a data stream. As before, the goal is to find the minimum size
cover of points from the given sets. We call this problem the
\prob{Points-Shapes Set Cover} problem.

Note that, the description of each shape requires $O(1)$ space and
thus the \prob{Points-Shapes Set Cover} problem is trivial to be
solved in $O(m+n)$ space. In this setting the goal is to design an
algorithm whose space is sub-linear in $O(m+n)$.
% A set of points $\GSet$ in the plane is given and the goal is to find
% a minimum size set $C$ from the set $\sS$ of either discs,
% rectangles, or $\alpha$-fat triangles (which for simplicity we call
% shapes) that are given in a data stream so that the union of the
% shapes in $C$ covers all points in $\GSet$. We call this problem the
% \prob{Points-Shapes-Set Cover($\GSet,\sS$)} problem.
%Here we show that almost the same algorithm as
%\algIterSC (with slight modifications) uses
%$\tldO(n)$ space to find an $O(\rho/\delta)$-approximate solution
%of the \prob{Points-Shapes Set Cover} problem in $O(1/\delta)$ passes.
Here we show that almost the same algorithm as
\algIterSC (with slight modifications) uses
$\tldO(n)$ space to find an $O(\rho)$-approximate solution
of the \prob{Points-Shapes Set Cover} problem in \emph{constant} passes.
\subsection{Preliminaries}
A triangle $\triangle$ is called \emph{$\alpha$-fat} (or simply fat) if the ratio
between its longest edge and its height on this edge is bounded by a
constant $\alpha > 1$ (there are several equivalent definitions of
$\alpha$-fat triangles).
\begin{definition}
	\deflab{canonical}
    Let $(\GSet,\sS)$ be a set system such that $\GSet$ is a set of points
    and $\sS$ is a collection of shapes, in the plane
    $\Re^2$. The \emph{canonical representation} of $(\GSet,\sS)$
    is a collection $\sS'$ of regions such that the following conditions
    hold. First, each $\range'\in \sS'$ has $O(1)$ description. Second, for
    each $\range'\in \sS'$, there exists $\range\in \sS$ such that
    $\range'\cap \GSet \subseteq \range\cap \GSet$. Finally, for each $\range\in \sS$,
    there exists $c_1$ sets $\range'_1,\cdots,\range'_{c_1} \in \sS'$ such that
    $\range\cap \GSet = (\range'_1\cup\cdots\cup \range'_{c_1})\cap \GSet$ for some
    constant $c_1$.
\end{definition}

The following two results are from \cite{ehr-gpnuc-12} which are the formalization of the ideas in \cite{aes-ssena-10}.
\begin{lemma}
    \lemlab{rect}%
    {\em(Lemma~4.18 in \cite{ehr-gpnuc-12})} %
    Given a set of points $\GSet$ in the plane $\Re^2$ and a
    parameter $w$, one can compute a set $\sS'_{\total}$ of
    $O(\card{\GSet}w^2\log \card{\GSet})$ axis-parallel rectangles with
    the following property. For an arbitrary axis-parallel rectangle
    $\range$ that contains at most $w$ points of $\GSet$, there exist two
    axis-parallel rectangles $\range'_1,\range'_2\in \sS'_{\total}$ whose union
    has the same intersection with $\GSet$ as $\range$, i.e.,
    $\range\cap \GSet = (\range_1'\cup \range_2') \cap \GSet$.
\end{lemma}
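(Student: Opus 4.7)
The plan is to mimic the canonical-range construction from standard $2$-dimensional orthogonal range searching, but tuned so that \emph{shallow} queries are expressible as the union of only two canonical rectangles. I would build a balanced BST $T$ on the $x$-coordinates of $\GSet$; each internal node $v$ of $T$ has a canonical $x$-interval $X_v$ and a canonical subset $P_v = \GSet \cap (X_v \times \Re)$. Call $v$ \emph{shallow} if $|P_v| \le c_2 w$ for a large enough constant $c_2$. The collection $\sS'_{\total}$ then contains, for every shallow $v$ and every pair of $y$-coordinates $(y_a,y_b)$ drawn from points of $P_v$ (together with $\pm\infty$), the axis-parallel rectangle $X_v \times [y_a,y_b]$. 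A standard geometric-series count over depths $d \ge \log(n/(c_2 w))$ shows that a balanced BST on $n$ leaves has $O(n)$ shallow nodes, each contributing $O(|P_v|^2) = O(w^2)$ rectangles, so the primary construction uses $O(n w^2)$ rectangles; the extra $\log n$ factor in the target bound is absorbed by iterating the same construction across the $O(\log n)$ tiers of a range-tree variant in which each point of $\GSet$ belongs to $O(\log n)$ canonical subsets.

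\textbf{Decomposing a shallow query.} Given a query rectangle $\range = [x_1,x_2] \times [y_1,y_2]$ with $|\range \cap \GSet| \le w$, locate the split node $v^*$ in $T$ for the interval $[x_1,x_2]$ and descend its two spines. On the left spine I would identify a shallow node $v_L$ whose canonical interval $X_{v_L}$ is contained in $[x_1,x_2]$ and whose canonical subset $P_{v_L}$ contains every point of $\range$ with $x \le \text{split}(v^*)$; symmetrically on the right spine I find $v_R$. For each of $v_L, v_R$, I pick the $y$-endpoints to be the minimum and the maximum $y$-coordinate of the corresponding half of $\range \cap \GSet$; these are $y$-coordinates of points in $P_{v_L}$ and $P_{v_R}$ respectively, so the two rectangles $X_{v_L} \times [y_a^L, y_b^L]$ and $X_{v_R} \times [y_a^R, y_b^R]$ both lie in $\sS'_{\total}$ by construction. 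Since $X_{v_L}, X_{v_R} \subseteq [x_1,x_2]$, neither rectangle can contain a $\GSet$-point outside $\range$'s $x$-range, and the choice of $y$-endpoints guarantees that each rectangle captures exactly its half of $\range \cap \GSet$; taking the union recovers $\range \cap \GSet$ exactly.

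\textbf{Main obstacle.} The real work is proving the existence of shallow $v_L, v_R$ simultaneously satisfying (a) shallow, (b) $X_{v_\star} \subseteq [x_1,x_2]$, and (c) $P_{v_\star}$ covers the relevant half of $\range \cap \GSet$. A naive lowest common ancestor of the points on one side typically fails either because its canonical subset $P_v$ is far larger than $\range \cap P_v$ (the thin $y$-strip $[y_1,y_2]$ hides most of $P_v$, so $v$ is not shallow), or because $X_v$ protrudes past $x_1$ or $x_2$ and contaminates the canonical rectangle with extraneous $\GSet$-points. This is precisely where the $\log n$ tiers come in: by storing canonical subsets at $O(\log n)$ scales, the range-tree variant guarantees that, for any query with at most $w$ output points, an appropriate pair $v_L, v_R$ is always available — and this step is what forces the extra $\log n$ factor in the size of $\sS'_{\total}$, giving the final bound of $O(n w^2 \log n)$.
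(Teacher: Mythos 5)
Your writeup openly flags the crux as an unresolved ``main obstacle,'' and unfortunately that obstacle is not one that can be waved away by invoking ``$O(\log n)$ tiers of a range-tree variant''---it is fatal to the shallow-slab construction as you set it up. The problem is that shallowness of a node $v$ is defined by $|P_v| \le c_2 w$, i.e.\ by the total number of points whose $x$-coordinate falls in the slab $X_v$, while the query $\range$ is only constrained to contain at most $w$ points of $\GSet$. A thin query rectangle can have a very wide $x$-range $[x_1,x_2]$ while still containing few points, because the $y$-strip $[y_1,y_2]$ filters out nearly everything. Any node $v_L$ whose slab covers $[x_1,\text{split}(v^*)]$ then has $|P_{v_L}|$ on the order of $n$, not $w$, so it is not shallow, and descending the spine to a shallow node shrinks $X_{v_L}$ so that it no longer covers the left half of $\range$. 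Adding more ``tiers'' of $x$-slabs does not escape this: every tier is still indexed by an $x$-interval, and the obstruction is that $|\range \cap \GSet| \le w$ gives you no control over the number of points in the $x$-projection of $\range$.

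The construction the cited lemma actually relies on anchors canonical rectangles at the \emph{median line} of a node rather than requiring the node itself to be shallow. Build the balanced BST on $x$-coordinates as you do, but at each node $v$ with vertical median line $\ell_v$, generate $3$-sided rectangles abutting $\ell_v$ from the left (resp.\ right), restricted to the left (resp.\ right) half of $X_v$. The crucial counting fact is this: a $3$-sided rectangle $[\alpha,\ell_v] \times [y_a,y_b]$ containing at most $w$ points of $P_{v,L}$ is combinatorially determined by choosing its leftmost point $p$ (which fixes $\alpha = x(p)$), and then, among the points of $P_{v,L}$ with $x \ge x(p)$ sorted by $y$, choosing how many steps up and how many steps down from $p$ the $y$-interval extends---each at most $w$ because the rectangle is $w$-shallow. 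So there are only $O(|P_{v,L}| \cdot w^2)$ such rectangles per side per node; summing over all nodes at a fixed depth gives $O(n w^2)$, and over $O(\log n)$ depths gives $O(n w^2 \log n)$. For a query $\range$, the split node $v^*$ is found as you describe, $\range$ is cut by $\ell_{v^*}$ into two $3$-sided pieces, and the bounding box of each piece's intersection with $\GSet$ is one of the anchored canonical rectangles above (its leftmost/rightmost point and $y$-extremes all lie in $\range \cap \GSet$, which has size at most $w$). Note that $v^*$ itself may be the root and need not be shallow in any sense; what is bounded is the number of points each canonical rectangle contains, not $|P_{v^*}|$. This is the idea your plan is missing, and without it the existence claim in your ``Decomposing a shallow query'' step has no justification.
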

\begin{lemma}
    \lemlab{fat}%
    {\em(Theorem 5.6 in \cite{ehr-gpnuc-12})} %
    Given a set of points $\GSet$ in $\Re^2$, a
    parameter $w$ and a constant $\alpha$, one can compute a set
    $\sS'_{\total}$ of $O(\card{\GSet}w^3\log^2 \card{\GSet})$ regions
    each having $O(1)$ description with the following property. For an
    arbitrary $\alpha$-fat triangle $\range$ that contains at most $w$
    points of $\GSet$, there exist nine regions from $\sS'_{\total}$
    whose union has the same intersection with $\GSet$ as $\range$.
\end{lemma}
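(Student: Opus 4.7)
The plan is to reduce the fat-triangle case to the axis-parallel rectangle case of Lemma \lemref{rect} by exploiting $\alpha$-fatness to restrict the set of relevant orientations and then decomposing each fat triangle into a bounded number of canonical pieces, each expressible via rectangles in a rotated coordinate frame.

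First, I would fix a finite set $D = \{\theta_1,\ldots,\theta_c\}$ of orientations, with $c = c(\alpha) = O(1)$, such that every $\alpha$-fat triangle has at least one edge making an angle at most $\pi/(2c)$ with some $\theta_i \in D$; a simple pigeonhole on the unit circle shows $c = O(1/\alpha)$ suffices. For each $\theta_i$, set up a rotated coordinate system aligned with $\theta_i$ and apply Lemma \lemref{rect} in that frame with parameter $O(w)$, obtaining a family $\sS'_i$ of $O(\card{\GSet} w^2 \log \card{\GSet})$ (rotated) axis-parallel rectangles. In addition, I would include in $\sS'_{\total}$ a family of ``wedges'' bounded by one horizontal edge and one slanted edge whose slope comes from an $O(w \log \card{\GSet})$-size precomputed palette of candidate directions (obtained, e.g., via a cutting / shallow-cutting construction on the dual lines of the points). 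Taking $\sS'_{\total} = \bigcup_i \sS'_i$ together with these wedge families yields the claimed bound $O(\card{\GSet} w^3 \log^2 \card{\GSet})$, one factor of $w\log\card{\GSet}$ coming from the rectangle construction and the extra factor from the wedge palette.

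Second, given an $\alpha$-fat triangle $\range$ with $\card{\range\cap\GSet}\le w$, I would choose the orientation $\theta_i\in D$ best aligned with one of its edges and work in the corresponding rotated frame. Drop a perpendicular from the opposite vertex to that horizontal edge, splitting $\range$ into two right triangles $T_1, T_2$; because $\theta_i$ approximates the edge direction and $\alpha$ bounds aspect ratios, each $T_j$ has its ``hypotenuse'' slope close to one of the precomputed palette directions (after at most one $O(1)$-size refinement). Each right triangle $T_j$ is then decomposed as the disjoint union of an axis-parallel rectangle (captured by $\sS'_i$, since it contains at most $w$ points of $\GSet$) and a small wedge along the slanted side (captured by the wedge family). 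Careful counting shows at most four canonical pieces per right triangle, with one piece shared between $T_1$ and $T_2$, yielding the claimed nine pieces in total; each has $O(1)$ description and the same intersection with $\GSet$ as $\range$.

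The main obstacle is making the ``slanted side'' part fit inside $O(1)$ canonical pieces without blowing up the space bound. Naively, the hypotenuses of $T_1$ and $T_2$ come in infinitely many slopes, so we cannot prelist every possible wedge. The fix is to argue that, because the wedge must contain at most $w$ points, its slanted boundary is determined up to combinatorial equivalence on $\GSet$ by at most $O(w)$ relevant palette directions per anchor point; this is where the extra factor of $w$ (over the rectangle bound) enters, and where the construction of shallow cuttings or their equivalent in \cite{ehr-gpnuc-12} is invoked. Once this ``shallow-wedge palette'' is in place, the nine-piece decomposition is routine and Definition \ref{def:canonical} is satisfied with constants $c_1 = 9$ and $O(1)$ description complexity per region.
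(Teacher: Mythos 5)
The paper does not prove this lemma; it is imported verbatim as Theorem~5.6 of \cite{ehr-gpnuc-12}, so there is no in-paper argument to compare against. Judged on its own, your sketch captures the right high-level plan---finitely many canonical orientations from $\alpha$-fatness, reduction to Lemma~\lemref{rect} in rotated frames, plus a separate canonical family for the slanted edges---which is consistent with the lineage via Aronov, Ezra and Sharir~\cite{aes-ssena-10}. But it has genuine gaps. The central one is the ``wedge palette'': you assert that a shallow-cutting-type construction in the dual yields $O(w\log\card{\GSet})$ slanted directions per anchor such that every $w$-shallow wedge matches one of them combinatorially on $\GSet$, and you explicitly defer this to ``shallow cuttings or their equivalent in \cite{ehr-gpnuc-12}.'' That deferral is precisely the hard part of the theorem: as written it is a placeholder, not an argument, and the cited construction in fact proceeds via a hierarchical (grid-style) decomposition of the point set with cone/tower shapes anchored to cells rather than by dualizing, so you would need to spell out what is being cut, what the cells are, and why $w$-shallowness confines a slanted edge to a bounded palette before the space bound can be believed. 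Relatedly, the accounting of the bound is garbled: Lemma~\lemref{rect} contributes $O(\card{\GSet}\, w^2 \log \card{\GSet})$, i.e.\ a multiplicative $w^2\log\card{\GSet}$, not the ``one factor of $w\log\card{\GSet}$'' you attribute to the rectangles; the wedge palette must then supply only an extra $O(w\log\card{\GSet})$ factor for the stated $O(\card{\GSet}\,w^3\log^2\card{\GSet})$ to emerge.

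There is also a concrete arithmetic error in the piece count. You split the fat triangle along an altitude into two right triangles and claim ``at most four canonical pieces per right triangle, with one piece shared between $T_1$ and $T_2$,'' but $4+4-1=7$, not $9$. The constant nine in the cited theorem comes out of a specific case analysis in \cite{ehr-gpnuc-12} that does not reduce to an altitude split; your decomposition as described neither matches the claimed number nor explains it, so even the combinatorial skeleton of the sketch needs repair. Since the paper only needs the citation, none of this affects the paper's correctness, but your sketch as written does not constitute a proof of the imported result.
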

Using the above lemmas we get the following lemma.
\begin{lemma}%[Proof in \secref{omitted-proofs-geometry}]
    \lemlab{canonical} %
    Let $\GSet$ be a set of points in $\Re^2$ and let $\sS$ be a
    set of shapes (discs, axis-parallel rectangles or fat triangles),
    such that each set in $\sS$ contains at most $w$ points of
    $\GSet$. Then, in a single pass over the stream of sets $\sS$, one
    can compute the canonical representation $\sS'$ of
    $(\GSet,\sS)$. Moreover, the size of the canonical representation is at most
    $O(\card{\GSet}w^3\log^2 \card{\GSet})$ and the space requirement of
    the algorithm is
    $\tldO(\card{\sS'}) = \tldO(\card{\GSet}w^3)$.
    % Given a set of points $\GSet$ in the plane $\Re^2$ and a
    % parameter $w$, one can compute a collection of shapes (discs,
    % rectangles or fat triangles) $\sS'_{\total}$ of size at most
    % $\card{\GSet}\cdot w^2$ such that the following holds. For any
    % collection of of shapes $\sS$, the canonical representation
    % $\sS'$ of the $(\GSet,\sS)$ is a subset of $\sS_{\total}$.
\end{lemma}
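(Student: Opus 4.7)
The plan is to combine the offline canonical constructions of Ene et al.\ recalled in \lemref{rect} and \lemref{fat} with a single pass over the stream. Since $\GSet$ is stored in memory before the stream begins, I would first build, offline, a data structure that represents an implicit canonical universe $\sS'_{\total}$: for axis-parallel rectangles invoke \lemref{rect} to get $|\sS'_{\total}| = O(|\GSet|w^2\log|\GSet|)$, and for $\alpha$-fat triangles invoke \lemref{fat} to get $|\sS'_{\total}| = O(|\GSet|w^3\log^2|\GSet|)$. For discs I would lift $\GSet$ to $\Re^3$ via the standard paraboloid map so that each disc becomes a lower halfspace, and then appeal to the analogous shallow-range construction for halfspaces in $\Re^3$ to produce a canonical universe of comparable size.

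Then, in a single pass, the algorithm processes each $\range \in \sS$ as follows. By the hypothesis $|\range \cap \GSet| \leq w$, \lemref{rect} (respectively \lemref{fat}, or its lifted variant for discs) yields a constant number $c_1 \leq 9$ of canonical regions $\range'_1,\ldots,\range'_{c_1} \in \sS'_{\total}$ satisfying $\range \cap \GSet = (\range'_1 \cup \cdots \cup \range'_{c_1}) \cap \GSet$. These are appended to $\sS'$, using a hash table keyed on the constant-size descriptions of canonical regions to suppress duplicates. The three conditions of \defref{canonical} are then immediate: each canonical region has $O(1)$ description by construction of $\sS'_{\total}$; for every $\range' \in \sS'$ there is a witness $\range \in \sS$ (the one whose decomposition introduced $\range'$) with $\range' \cap \GSet \subseteq \range \cap \GSet$; and the exact-cover condition on $\range$ is precisely the decomposition guarantee of the cited lemmas.

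For the space bound I would argue that $\sS' \subseteq \sS'_{\total}$, so $|\sS'| \leq |\sS'_{\total}| = O(|\GSet|w^3\log^2|\GSet|)$ in the worst (fat-triangle or disc) case, and each canonical region takes $O(1)$ words. The data structure encoding $\sS'_{\total}$ (e.g., a range tree in the rectangle case) fits inside the same $\tldO(|\GSet|w^3)$ budget, and the hash table for $\sS'$ contributes only a polylogarithmic overhead. The main obstacle I anticipate is the disc case: while \lemref{rect} and \lemref{fat} apply directly to the rectangle and fat-triangle cases, discs require either a clean lifting to $\Re^3$ halfspaces paired with a shallow-range canonical decomposition of the correct flavor, or a dedicated geometric construction directly on the disc range space. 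Once that is in place, the streaming algorithm and its analysis amount to a routine translation of the Ene et al.\ framework.
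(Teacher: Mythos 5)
Your handling of the rectangle and fat-triangle cases is essentially the paper's: build $\sS'_{\total}$ offline via \lemref{rect} or \lemref{fat}, then stream through $\sS$ and retain only the canonical pieces certified by some input range. (The paper phrases the selection as keeping every $S' \in \sS'_{\total}$ with $S' \cap \GSet \subseteq S \cap \GSet$ for some $S \in \sS$, rather than only the $c_1$ pieces of each decomposition, but either choice satisfies \defref{canonical} and stays within $|\sS'_{\total}|$.) The space accounting you give also matches.

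The disc case is where you have a genuine gap, and it is one you yourself flagged. Your plan is to lift to $\Re^3$ via the paraboloid map and then appeal to an unspecified shallow-range canonical decomposition for halfspaces. Neither of the cited lemmas covers that, and you would have to construct (or find a citation for) such a decomposition and argue it meets every clause of \defref{canonical}. The paper sidesteps all of this: for discs, \emph{no decomposition is needed at all}. The algorithm simply streams over $\sS$ and keeps a maximal subfamily $\sS' \subseteq \sS$ whose members have pairwise distinct projections onto $\GSet$; each disc is its own canonical region, each already has $O(1)$ description, the witness condition and exact-cover condition of \defref{canonical} hold trivially with $c_1 = 1$, and the key quantitative fact is that the number of distinct at-most-$w$-level discs on an $n$-point set is $O(nw^2)$ by the Clarkson--Shor technique \cite{cs-arscg-89}. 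That bound is what makes the space $\tldO(|\GSet|w^3)$ claim go through, and it is exactly the ingredient missing from your proposal. If you wanted to pursue the lifting route you would need to show that shallow lower halfspaces in $\Re^3$ admit a near-linear canonical decomposition of the right size, which is a nontrivial extra step the paper avoids entirely by observing that discs themselves already form a small enough canonical family.
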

% The proof of above lemma is in \secref{omitted-proofs-geometry}.
%\iffalse
\begin{proof}%{\lemref{canonical}:}
    For the case of axis-parallel rectangles and fat triangles, first
    we use \lemref{rect} and \lemref{fat} to get the set
    $\sS'_{\total}$ offline which require
    $\tldO(\sS'_{\total}) = \tldO(\card{\GSet}w^3\log^2 \card{\GSet})$ memory
    space. Then by making one pass over the stream of sets $\sS$, we can
    find the canonical representation $\sS'$ by picking all the sets
    $S'\in \sS'_{\total}$ such that
    $S'\cap \GSet \subseteq S\cap\GSet$ for some $S\in \sS$.  For
    discs however, we just make one pass over the sets $\sS$ and keep
    a maximal subset $\sS'\subseteq \sS$ such that for each pair of
    sets $S_1',S_2'\in \sS'$ their projection on $\GSet$ are
    different, i.e., $S_1'\cap \GSet \neq S_2'\cap \GSet$.  By a
    standard technique of Clarkson and Shor \cite{cs-arscg-89}, it can
    be proved that the size of the canonical representation, i.e.,
    $\card{S'}$, is bounded by $O(\card{\GSet} w^2)$. Note that
    this is just counting the number of discs that contain at most $w$ 
    points, namely the at most $w$-level discs.
\end{proof}
%\fi
% ............................................................The
% Algorithm
\subsection{Algorithm}
 
The outline of the \prob{Points-Shapes-Set-Cover} algorithm (shown in
\figref{shapes-alg}) is very similar to the \algIterSC algorithm
presented earlier in \secref{unweighted-set-cover}.
 
In the first pass, the algorithm picks all the sets that cover a large
number of yet-uncovered elements.  Next, we sample $\GSample$. Since
we have removed all the ranges that have large size, in the first
pass, the size of the remaining ranges \emph{restricted to} the sample
$\GSample$ is small. Therefore by \lemref{canonical}, the canonical
representation of $(\GSample,\sSSample)$ has small size and we can
afford to store it in the memory. We use \lemref{canonical} to compute
the canonical representation $\sSSample$ in one pass.  The algorithm
then uses the sets in $\sSSample$ to find a cover $\solS$ for the
points of $\GSample$. Next, in one additional pass, the algorithm
replaces each set in $\solS$ by one of its supersets in $\sS$.
 
Finally, note that in the algorithm of \secref{unweighted-set-cover},
we are assuming that the size of the optimal solution is $O(k)$. Thus
it is enough to stop the iterations once the number of uncovered
elements is less than $k$. Then we can pick an arbitrary set for
each of the uncovered elements. This would add only $k$ more sets to
the solution. Using this idea, we can reduce the size of the sampled
elements down to $c\rho k({n\over k})^{\delta}\log m\log n$ which would help
us in getting near-linear space in the geometric setting. Note that the final pass of the
algorithm can be embedded into the previous passes but for the
sake of clarity we write it separately.

\newcommand{\algGSC}{\Algorithm{algGeomSC}\xspace}%

\begin{figure}[t]
    \centerline{%
       \begin{algorithmEnvX}
           \underline{\algGSC{}$\Bigl.\pth{\bigl. \GSet, \sS, \delta }
              $:}\+\\[0.1cm]%
           \For $k \in \set{2^{i}\sep 0\leq i \leq \log n}$ \Do
           \textbf{in parallel}: \INP{\CodeComment{//
                 $n =\card{\GSet}$}}%
           \\%
           \> {\bf Let} $\GLeftover \leftarrow \GSet$ and
           $\sol \leftarrow \emptyset$ \\%
           \> {\bf Repeat} $1/ \delta$ times: \\%
           \>\> \For $\range \in \sS$ \Do \quad\CodeComment{// Pass}
           \\%
           \>\>\> \If
           $\card{\range \cap \GLeftover} \geq \card{\GSet}/k$ \Then
           \\%
           \>\>\>\> $\sol \leftarrow \sol \cup \brc{\range}$%
           \\\>\>\>\>
           $\GLeftover \leftarrow \GLeftover\setminus \range$
           \\[0.1cm]%
           \>\> $\GSample \leftarrow$ sample of $\GLeftover$ of size
           $c\rho k(n/k)^{\delta} \log m \log n$ \\%
           \>\>$\sSSample \leftarrow $%
           \Algorithm{compCanonicalRep}{}%
           $(\GSample,\sS,{\card{\GSample} \over k})$ %\quad
           \CodeComment{// Pass} \\%
           \>\> $\solS \leftarrow $ \offlineSC{}%
           ($\GSample,\sSSample$) \\%
           \>\> \For $\range \in \sS$ \Do \quad\CodeComment{// Pass} \\
           \>\>\> \If $\exists \range' \in \solS$ s.t.
           $\range' \cap \GSample \subseteq \range \cap \GSample$
           \Then\\
           \>\>\>\> $\sol \leftarrow \sol \cup \brc{\range}$%
           \\\>\>\>\>
           $\solS \leftarrow \solS \setminus \brc{\range'}$\\%
           \>\>\>\> $\GLeftover \leftarrow \GLeftover\setminus \range$\\%           \\
           \>\For $\range \in \sS$ \Do \quad\CodeComment{// Final
              Pass} \\%
           \>\> \If $\range \cap \GLeftover \neq \emptyset$ \Then \\%
           \>\>\> $\sol \leftarrow \sol \cup \brc{\range}$%
           \\\>\>\>
           $\GLeftover \leftarrow \GLeftover \setminus \range$ \\
           \Return smallest $\sol$ computed in parallel
       \end{algorithmEnvX}%
    }
    %\vspace{-0.25 cm}
    \caption{{A streaming algorithm for \prob{Points-Shapes Set Cover}
       problem.}}
    \figlab{shapes-alg}
    %\InPODS{\vspace{-0.3cm}}%
\end{figure}
%%%%%%%%%%%%%%%%%%%%%%%%%%%%% 55
\subsection{Analysis}
By a similar approach to what we used in \secref{unweighted-set-cover}
to analyze the pass count and approximation guarantee of \algIterSC
algorithm, we can show that the number of passes of the \algGSC{}
algorithm is ${3/\delta}+1$ (which can be reduced to $3/\delta$ with
minor changes), and the algorithm returns an
$O(\rho/\delta)$-approximate solution. Next, we analyze the space
usage and the correctness of the algorithm. Note that our analysis in
this section only works for $\delta \leq 1/4$.
\begin{lemma}\lemlab{geometric-space}
    The algorithm uses $\tldO(n)$ space.
\end{lemma}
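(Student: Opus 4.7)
The plan is to bound each of the four sources of space used during a single parallel run (the run is then repeated $O(\log n)$ times for different guesses of $k$, which only costs a $\log n$ factor and is absorbed in $\tldO$): the points being tracked ($\GSet$ and $\GLeftover$), the random sample $\GSample$, the canonical representation $\sSSample$, and the accumulated cover $\sol$. The first is obviously $O(n)$; since each geometric shape has an $O(1)$ description, $|\sol|$ contributes $O(|\sol|)$ words, and we will see $|\sol| = O(\rho k / \delta + k) = \tldO(n)$ because each iteration picks $O(\rho k)$ sets (as in Lemmas \lemref{large-set} and \lemref{completeness}) and the Final Pass adds at most $k$ additional sets.

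The real work is bounding $|\sSSample|$. First I would substitute the sample size: $|\GSample| = c\rho k (n/k)^{\delta} \log m \log n = \tldO(k(n/k)^{\delta})$, which is already $\tldO(n)$ since $(n/k)^{\delta} \leq n/k$ for $\delta \leq 1$. Next, the key observation is that by the time \Algorithm{compCanonicalRep} is invoked, the first pass of the iteration has already absorbed every set $\range$ with $|\range \cap \GLeftover| \geq |\GSet|/k$; so every surviving shape $\range$ satisfies $|\range \cap \GLeftover| \leq |\GSet|/k$. Since $\GSample$ is a uniform sample of $\GLeftover$, $\mathbb{E}[|\range \cap \GSample|] \leq |\GSample|/k$, and a Chernoff argument identical to the one in \lemref{large-set} (the threshold $|\GSample|/k = \tldO((n/k)^\delta)$ is $\Omega(\log m)$) shows that simultaneously for all $m$ sets, $|\range \cap \GSample| \leq w$ with $w = O(|\GSample|/k) = \tldO((n/k)^{\delta})$ with high probability.

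Therefore Lemma \lemref{canonical} applies with this value of $w$, giving
\[
|\sSSample| \;=\; \tldO\!\pth{|\GSample| \cdot w^{3}} \;=\; \tldO\!\pth{k (n/k)^{\delta} \cdot (n/k)^{3\delta}} \;=\; \tldO\!\pth{k \cdot (n/k)^{4\delta}}.
\]
The main obstacle — the only place where the restriction $\delta \leq 1/4$ enters — is here: once $4\delta \leq 1$, we have $(n/k)^{4\delta} \leq n/k$ (because $n/k \geq 1$), so $|\sSSample| = \tldO(n)$. Because the algorithm stores the canonical representation \emph{in place of} the raw stream of shapes, the offline subroutine \offlineSC and the book-keeping for $\solS$ likewise cost only $\tldO(n)$. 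Summing the four contributions and multiplying by the $O(\log n)$ parallel guesses of $k$ yields the claimed $\tldO(n)$ bound.
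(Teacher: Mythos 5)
Your proof matches the paper's argument: both identify the bound on $\card{\sSSample}$ as the crux, both use the observation that the first pass prunes every heavy shape so that the surviving shapes are $\tldO\pth{(n/k)^{\delta}}$-shallow in $\GSample$ (via a Chernoff-plus-union-bound step), and both then apply \lemref{canonical} with $w=\tldO\pth{(n/k)^{\delta}}$ to get $\card{\sSSample}=\tldO\pth{k(n/k)^{4\delta}}=\tldO(n)$ once $\delta\le 1/4$. Your write-up is somewhat more systematic in enumerating the other space contributors ($\GSet,\GLeftover,\GSample,\sol$), but those are the routine parts; the key step and the numerology are the same as the paper's.
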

%The proof of \lemref{geometric-space} is in \secref{omitted-proofs-geometry}.
%\iffalse
\begin{proof}%{\lemref{geometric-space}}
    Consider an iteration of the algorithm. The memory space used in
    the first pass of each iteration is $\tldO(n)$. The size of
    $\GSample$ is $c\rho k (n/k)^{\delta} \log m \log n$ and after
    the first pass the size of each set is at most
    $\card{\GSet}/k$. Thus using Chernoff bound for each set
    $\range\in \sS\setminus \sol$,
    \begin{align*}
        \Prob{\card{\range\cap \GSample} > (1+ 2) {\card{\GSet} \over
              k}\times {\card{\GSample} \over \card{\GSet}}}%
        &\leq%
        \exp\pth{-{4\card{\GSample} \over 3k}}%
        \leq %
        ({1\over m})^{c+1}.
    \end{align*}
    Thus, with probability at least $1-m^{-c}$ (by the union bound),
    all the sets that are not picked in the first pass, cover at most
    $3\card{\GSample}/k = c\rho(n/k)^\delta \log m \log n$ elements of
    $\GSample$. Therefore, we can use \lemref{canonical} to show that
    the number of sets in the canonical representation of
    $(\GSample,\sSSample)$ is at most 
    \begin{align*}
    O(\card{\GSample} \pth{{3\card{\GSample}\over k}}^{\!3}
           \log^2 \card{\GSample})= O(\rho^4 n \log^4 m \log^6 n),
	\end{align*}
    as long as $\delta \leq 1/4$.  To store each set in a canonical
    representation of $(\GSample,\sS)$ only constant space is
    required. Moreover, by \lemref{canonical}, the space requirement of the second pass is $\tldO (\card{\sSSample}) = \tldO(n)$. Therefore, the total required space is $\tldO(n)$ and
    the lemma follows.
\end{proof}
%\fi

\begin{theorem}\thmlab{geometric-set-cover}
    Given a set system defined over a set $\GSet$ of $n$ points in the
    plane, and a set of $m$ ranges $\sS$ (which are either all disks,
    axis-parallel rectangles, or fat triangles). Let $\rho$ be the
    quality of approximation to the offline set-cover solver we have,
    and let $0 < \delta <1/4$ be an arbitrary parameter.
    
    Setting $\delta=1/4$, the algorithm \algGSC, depicted in \figref{shapes-alg}, with high
    probability, returns an $O(\rho)$-approximate solution of
    the optimal set cover solution for the instance $(\GSet,
    \sS)$.
    This algorithm uses $\tldO(n)$ space, and performs constant
    passes over the data.
\end{theorem}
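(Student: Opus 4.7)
The plan is to combine the space bound already established in Lemma 4.4 with pass-count and approximation-ratio arguments that parallel the analysis of \algIterSC in \secref{unweighted-set-cover}, modified to handle the smaller sample size $c\rho k(n/k)^{\delta} \log m \log n$, the use of canonical representations instead of raw sets, and the final ``mop-up'' pass on the residual uncovered points.

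First I would bound the passes. Each of the $1/\delta$ iterations of the repeat loop performs three passes over $\sS$: one to greedily pick heavy sets whose intersection with $\GLeftover$ is at least $|\GSet|/k$; one to build the canonical representation $\sSSample$ via \lemref{canonical}; and one to lift the canonical cover $\solS$ returned by \offlineSC back to actual sets from $\sS$. A single additional final pass handles the residual uncovered elements. With $\delta=1/4$ this totals $3/\delta + 1 = 13 = O(1)$ passes, and the $O(\log n)$ parallel runs over guesses of $k$ do not increase this count.

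Next I would establish the approximation guarantee by replaying the argument of \lemref{completeness} and \lemref{unweight-approx} with the parameters adjusted to $p = 2(k/n)^{\delta}$ and $\eps = 1/2$. Fix the parallel run in which $|\opt| \leq k < 2|\opt|$. Each set in $\sS$ is, by \defref{canonical}, covered (restricted to $\GSet$) by at most $c_1$ canonical sets, so an optimal cover of $\GSet$ lifts to a cover of $\GSample$ of size at most $c_1 k$ using sets of $\sSSample$. Consequently \offlineSC returns a canonical cover of size $O(\rho k)$, and the third pass of the iteration converts it into $O(\rho k)$ original sets $\range \in \sS$ (each containing a canonical set from $\solS$, hence by \defref{canonical} covering the same points of $\GSample$ as that canonical set). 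Combined with the $O(k)$ heavy sets picked in the first pass, each iteration adds $O(\rho k)$ sets. Applying \lemref{relative-approx} with the stated $p$ and $\eps$, and taking the union of ``bad events'' over all possible covers of size $O(\rho k)$ — whose count is at most $m^{O(\rho k)}$ — shows that the sample size $c\rho k(n/k)^{\delta}\log m \log n$ makes $\GSample$ a relative $(p,\eps)$-approximation with high probability, hence the number of uncovered elements of $\GLeftover$ decreases by a factor of at least $(n/k)^{\delta}$ per iteration. After $1/\delta$ iterations the residual has fewer than $k$ points; the final pass picks one set per residual point, adding at most $k$ further sets. The total number of sets is thus $O(\rho k/\delta) + k = O(\rho k/\delta)$, which for $\delta=1/4$ is $O(\rho)|\opt|$. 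The $\tldO(n)$ space bound follows immediately from \lemref{geometric-space}, completing the theorem.

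The main technical obstacle is verifying that the reduced sample size indeed satisfies the hypothesis of \lemref{relative-approx} for $p = 2(k/n)^{\delta}$: one must check that $(1/(\eps^2 p))(\log|\sSA|\log(1/p) + \log(1/q))$ remains within $\tldO(k(n/k)^{\delta})$ when $|\sSA| \leq m^{O(\rho k)}$, so that the chosen size suffices. A minor secondary point is ensuring that the third pass (lifting canonical sets to original sets) does not inflate the cover size — this follows from the containment $\range' \cap \GSet \subseteq \range \cap \GSet$ built into \defref{canonical}, which guarantees a one-to-one lifting from $\solS$ into $\sS$.
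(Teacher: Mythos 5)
Your proposal is correct and follows essentially the same route as the paper's proof: citing \lemref{geometric-space} for space, replaying the relative-$(p,\eps)$-approximation argument with $p = \Theta((k/n)^\delta)$ and $\eps = 1/2$ to show uncovered elements shrink by a $(n/k)^\delta$ factor per iteration, using \defref{canonical} to bound the blowup from canonical lifting by a constant $c_1$, and accounting for the final mop-up pass that adds at most $k$ sets. The only cosmetic difference is your choice $p = 2(k/n)^\delta$ versus the paper's $p = (k/n)^\delta$, which is inconsequential.
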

\begin{proof}
    As before consider the run of the algorithm in which
    $\card{\opt} \leq k < 2\card{\opt}$. Let $\GSetA$ be the set of 
	uncovered elements $\GLeftover$ at the
    beginning of the iteration and note that the total number of sets
    that is picked during the iteration is at most $(1+c_1\rho)k$ where $c_1$ is the constant defined in \defref{canonical}. Let $\FF$ denote 
	all possible
	such covers, that is $\FF = \Set{ \sS' \subseteq \sS}{\bigl.\card{\sS'}\leq
       (1+c_1\rho)k}$. 
    Let $\sSA$ be the collection that contains all possible set of uncovered 
	elements at the end of the iteration, defined as
    \begin{math}
		\sSA=\Set{\GSetA\setminus \bigcup_{\range \in \Cover}
			 \range}{ \Cover \in \FF}
	\end{math}.
    Set $p = ({k/n})^{\delta}$,
    $\eps=1/2$ and $q=m^{-c}$. Since for large enough $c$,
    ${c'\over \eps^2 p}(\log \card{\sSA} \log {1\over p} +
    \log{1\over q}) \leq c\rho k(n/k)^\delta\log m \log n =
    \card{\GSample}$
    with probability at least $1-m^{-c}$, by \lemref{relative-approx}, the 
	set of sampled elements $\GSample$ is a relative $(p,\eps)$-approximation 
	sample of $(\GSetA, \sSA)$.
 
    Let $\Cover \subseteq \sS$ be the collection of sets picked in the third
    pass of the algorithm that covers all elements in $\GSample$. By
    \lemref{canonical}, $\card{\Cover} \leq c_1\rho k$ for some
    constant $c_1$.  Since with high probability $\GSample$ is a
    relative $(p,\eps)$-approximation sample of $(\GSetA, \sSA)$, the number of
    uncovered elements of $\GSetA$ (or $\GLeftover$) after adding $\Cover$ to
    $\sol$ is at most $\eps p|\GSetA| \leq {\card{\GSet}(k/n)^{\delta}}$.
    Thus with probability at least $(1-m^{-c})$, in each iteration and
    by adding $O(\rho k)$ sets, the number of uncovered elements
    reduces by a factor of $(n/k)^{\delta}$.  
    
    Therefore, after $4$ iterations (for $\delta =1/4$) the algorithm 
    picks $O(\rho k)$ sets and with high probability the number of uncovered 
    elements is at most $n (k/n)^{\delta/\delta} = k$.  Thus, in the final pass the
    algorithm only adds $k$ sets to the solution $\sol$, and hence the
    approximation factor of the algorithm is $O(\rho)$.
\end{proof}

\begin{remark}
    The result of \thmref{geometric-set-cover} is similar to the
    result of Agarwal and Pan \cite{ap-nlagh-14} -- except that their
    algorithm performs $O( \log n)$ iterations over the data, while
    the algorithm of \thmref{geometric-set-cover} performs only a
    constant number of iterations. In particular, one can use the
    algorithm of Agarwal and Pan \cite{ap-nlagh-14} as the offline
    solver.
\end{remark}

%%%%%%%%%%%%%%%%%%%%%%%%%%
%                 Multi Pass Lower Bound               %%%%
%%%%%%%%%%%%%%%%%%%%%%%%%%

\section{Lower bound for multipass \INP{algorithms}}
\seclab{lower-bound-multipass}

In this section we give lower bound on the memory space of multipass
streaming algorithms for the \probSetCover problem. Our main result
is $\Omega(mn^{\delta})$ space for streaming algorithms that return an
optimal solution of the \probSetCover problem in $O(1/\delta)$
passes for $m=O(n)$. Our approach is to reduce the communication \prob{Intersection
   Set Chasing($n,p$)} problem introduced by Guruswami and Onak
\cite{go-slbmg-13} to the communication \probSetCover problem.

Consider a communication problem $\mathsf{P}$ with $n$ players
$P_1,$ $\cdots, P_n$. The problem $\mathsf{P}$ is a $(n,r)$-communication
problem if players communicate in $r$ rounds and in each round they
speak in order $P_1,\cdots, P_n$. At the end of the $r$\th round $P_n$
should return the solution. Moreover we assume private randomness and
public messages.  In what follows we define the communication
\prob{Set Chasing} and \prob{Intersection Set Chasing} problems.
\begin{definition}[Communication \prob{Set Chasing} Problem] The 
	\prob{Set Chasing($n,p$)} problem
	is a $(p,p-1)$ communication problem in
    which the player $i$ has a function $f_i: [n] \rightarrow 2^{[n]}$
    and the goal is to compute
    $\vec{f_1}(\vec{f_2}(\cdots\vec{f_p}(\set{1})\cdots))$ where
    $\vec{f_i}(S)= \bigcup_{s\in S} f_{i}(s)$.
    \figref{set-chasing}(a) shows an instance of the communication
    \prob{Set Chasing}($4,3$).
\end{definition}
\begin{definition}[Communication \prob{Intersection Set
       Chasing}] The \prob{Intersection
       Set Chasing($n,p$)} is a $(2p,p-1)$ communication problem in
    which the first $p$ players have an instance of the \prob{Set
       Chasing($n,p$)} problem and the other $p$ players have another
    instance of the \prob{Set Chasing($n,p$)} problem. The output of
    the \prob{Intersection Set Chasing($n,p$)} is $1$ if the solutions
    of the two instances of the \prob{Set Chasing($n,p$)} intersect
    and $0$ otherwise.  \figref{set-chasing}(b) shows an instance
    of the \prob{Intersection Set Chasing}($4,3$). The function $f_i$
    of each player $P_i$ is specified by a set of directed edges form
    a copy of vertices labeled $\set{1, \cdots, n}$ to another copy of
    vertices labeled $\set{1, \cdots, n}$.
\end{definition}
\begin{figure}
    %\InPODS{\vspace{-0.3cm}}%
    \centering
    \includegraphics[height=1.5in]{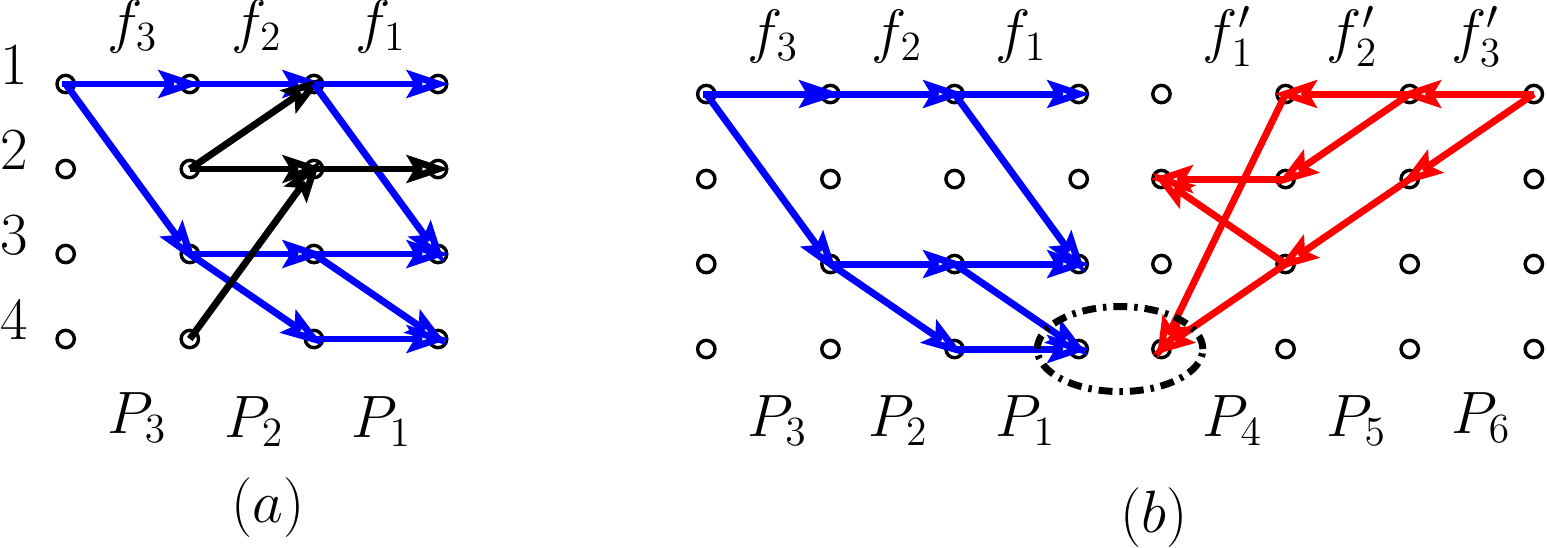}
    %\InPODS{\vspace{-0.3cm}}%
    \caption{{(a) shows an example of the communication
          \prob{Set Chasing($4,3$)} and (b) is an instance of the
          communication \prob{Intersection Set Chasing($4,3$)}.}}
    \figlab{set-chasing}
    \InPODS{\vspace{-0.3cm}}%
\end{figure}

The communication \prob{Set Chasing} problem is a generalization of
the well-known communication \prob{Pointer Chasing} problem in which
player $i$ has a function $f_i:[n] \rightarrow [n]$ and the goal is to
compute $f_1(f_2(\cdots f_p(1)\cdots))$.

\cite{go-slbmg-13} showed that any randomized protocol that solves
\prob{Intersection Set Chasing($n,p$)} with error probability less
than $1/10$, requires
$\Omega({n^{1+1/(2p)} \over p^{16} \log^{3/2}n})$ bits of
communication where $n$ is sufficiently large and
$p \leq {\log n \over \log \log n}$.  In \thmref{reduction}, we
reduce the communication \prob{Intersection Set Chasing} problem to
the communication \probSetCover problem and then give the first
superlinear memory lower bound for the streaming \probSetCover
problem.

\begin{figure}[t]
    \centering
    \includegraphics[height=3.6in]{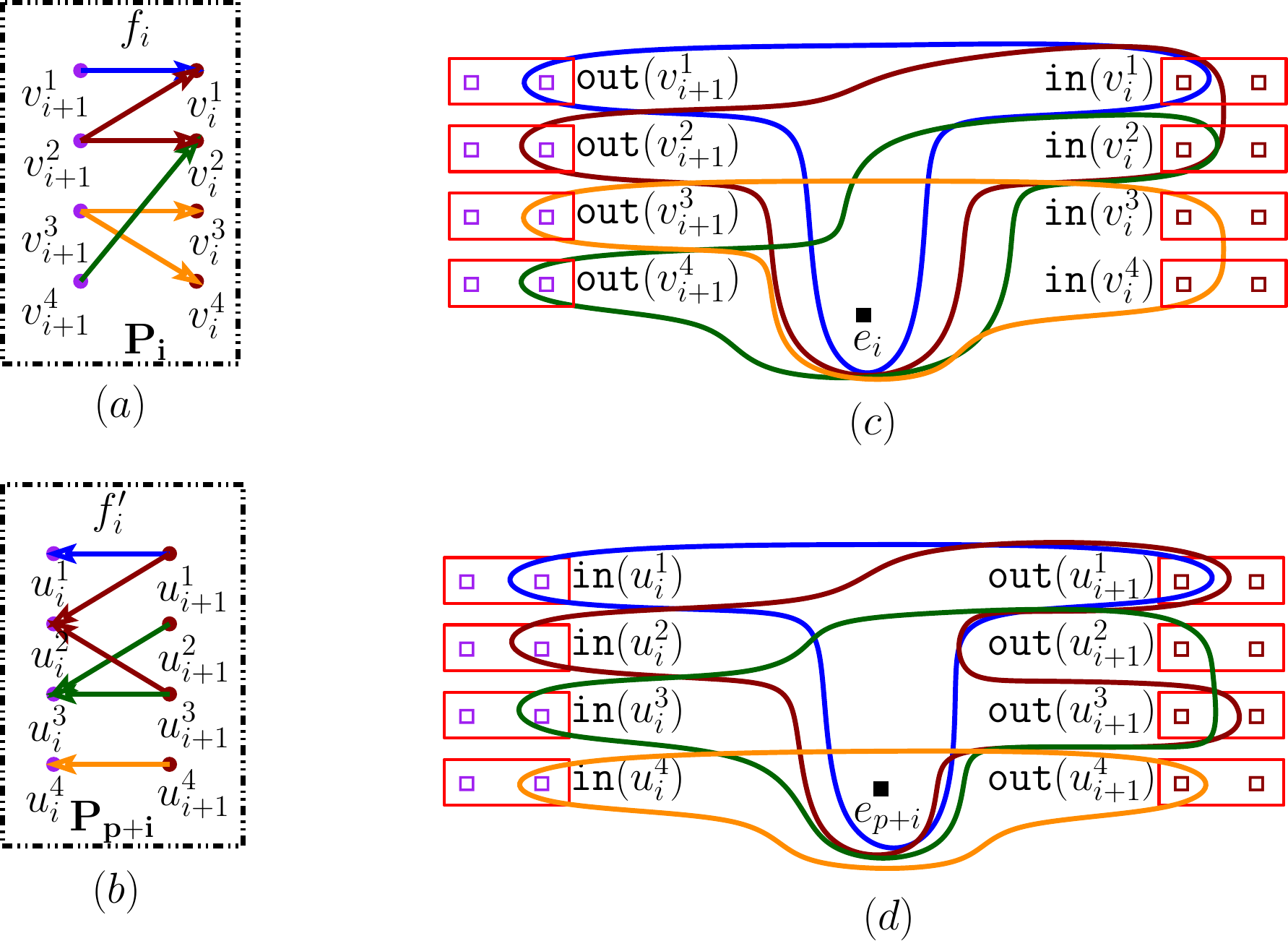}
    %\InPODS{\vspace{-0.3cm}}%
    \caption{{The gadgets used in the reduction of the
          communication \prob{Intersection Set Chasing} problem to the
          communication \probSetCover problem. (a) and (c) shows the
          construction of the gadget for players $1$ to $p$ and
          (c) and (d) shows the construction of the gadget for players
          $p+1$ to $2p$.}}
    \figlab{reduction}%
    \InPODS{\vspace{-0.6cm}}%
\end{figure}

\begin{definition}[Communication \prob{Set Cover($\GSet, \sS, p$)}
    Problem]
     
    The communication \prob{Set Cover($n,p$)} is a $(p,p-1)$
    communication problem in which a collection of elements $\GSet$ is
    given to all players and each player $i$ has a collection of
    subsets of $\GSet$, $\sS_i$. The goal is to solve \prob{Set
       Cover($\GSet, \sS_1\cup \cdots \cup \sS_p$)} using the minimum
    number of communication bits.
\end{definition}

\begin{theorem}
    \thmlab{reduction}%
    Any $({1 / 2\delta}-1)$ passes streaming algorithm that   
    solves the \prob{Set Cover($\GSet,\sS$)}
    optimally with constant probability of error
      requires $\tldOmega(mn^{\delta})$
    memory space where $\delta \geq {\log \log n \over \log n}$ and $m=O(n)$.
\end{theorem}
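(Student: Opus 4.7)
\medskip\noindent\textbf{Proof plan.}
The plan is to reduce \prob{Intersection Set Chasing($n,p$)} to the communication version of \prob{Set Cover} with $2p$ players and $p-1$ rounds, and then lift the lower bound to the streaming model by the standard simulation argument. Setting $p = 1/(2\delta)$ (which is legal because $\delta \geq \log\log n/\log n$) gives precisely $1/(2\delta)-1$ passes, and the Guruswami--Onak communication lower bound of $\tldOmega(n^{1+1/(2p)}/p^{O(1)})$ on \prob{Intersection Set Chasing} will turn into the desired $\tldOmega(mn^\delta)$ space bound, since the reduction will produce a set cover instance with $m = O(pn) = \tldO(n)$ sets over a ground set of size $\tldO(n)$.

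\medskip\noindent\textbf{The gadget.}
Given chain-$A$ functions $f_1,\ldots,f_p$ held by $P_1,\ldots,P_p$ and chain-$B$ functions $g_1,\ldots,g_p$ held by $P_{p+1},\ldots,P_{2p}$, I will build the instance sketched in \figref{reduction}. For each chain the ground set contains layered copies $a_{0,v},\ldots,a_{p,v}$ (respectively $b_{0,v},\ldots,b_{p,v}$) of $[n]$, a handful of ``layer anchor'' elements $\alpha_i,\beta_i$, a ``seed'' element that enforces the starting value $1$ at layer $p$, and a single ``intersection witness'' element $w$. For each $v\in[n]$, player $i$ on chain $A$ contributes one set $A_{i,v}$ covering the anchor $\alpha_i$, the in-port $a_{i-1,v}$, and the out-ports $\set{a_{i,u} : u\in f_i(v)}$, so that selecting $A_{i,v}$ ``declares'' that $v$ is alive at layer $i-1$ and propagates $f_i(v)$ to layer $i$. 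The gadget for chain $B$ is analogous. Finally, $w$ is linked to every $v\in[n]$ via a single extra set that is available only when $v$ is simultaneously an out-port of chain $A$ and of chain $B$ at layer $0$. Each player produces $O(n)$ sets, so $m = O(pn) = \tldO(n)$ and $\card{\GSet} = \tldO(n)$.

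\medskip\noindent\textbf{Correctness and the communication-to-streaming step.}
The anchors force any feasible cover to include at least one set per layer per chain, and the in-port/out-port pairing then forces these $2p$ sets to trace a legitimate evaluation of the two chains from the seed $1$. Consequently the optimum has size $2p+1$ in the YES case (the witness $w$ is absorbed for free by a single extra set tied to the common layer-$0$ value) and $2p+2$ in the NO case, so an exact set cover algorithm decides the underlying \prob{Intersection Set Chasing} instance. A streaming algorithm with $r$ passes and $s$ bits of memory, run on the concatenated stream $\sS_1,\ldots,\sS_{2p}$, yields a $(2p,r)$-round communication protocol of total cost $O(2p\cdot r\cdot s)$ since each player only needs to relay its current memory state once per pass \cite{gm-tlbsp-08}. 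Setting $r = p-1$ and $p = 1/(2\delta)$, the Guruswami--Onak bound then gives $s = \tldOmega(n^{1+\delta}/p^{O(1)}) = \tldOmega(mn^\delta)$ when $m = O(n)$.

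\medskip\noindent\textbf{Main obstacle.}
The delicate engineering step, and the focus of the proof, is arranging the gadget so that three properties hold simultaneously: (i) each player produces only $O(n)$ sets, so that $m = O(n)$ after summing over all $2p$ players; (ii) any inconsistent choice of a layer-$i$ set strictly increases the cover size, so that the optimum genuinely computes the chain outputs rather than taking shortcuts; and (iii) the one-set saving on $w$ occurs exactly when the two outputs intersect, never otherwise. Properties (i) and (ii) pull in opposite directions because large $f_i$-images tempt a cover to bundle many layer-$i$ elements into a single ``wrong'' set; this is why the anchors and out-ports must be balanced so that such shortcuts are always more expensive than the intended trajectory. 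Once this rigidity is established, the rest of the argument is routine bookkeeping.
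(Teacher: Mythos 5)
Your high-level plan (reduce \prob{Intersection Set Chasing} to a $2p$-player communication \prob{Set Cover}, then pass to the streaming model via the standard simulation and invoke the Guruswami--Onak bound with $p=\Theta(1/\delta)$) is exactly the paper's route. The gap is in the gadget, and it is not cosmetic.

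You claim the optimum is $2p+1$ (YES) versus $2p+2$ (NO). That cannot hold for the gadget you describe. Your ground set contains $2(p+1)n$ port elements $a_{i,v},b_{i,v}$, and the only sets that touch them are the $A_{i,v}$ (and $B_{i,v}$), each of which covers a single in-port $a_{i-1,v}$ together with the out-ports $\{a_{i,u}: u \in f_i(v)\}$. In the hard \prob{Intersection Set Chasing} instances produced by Guruswami--Onak (\lemref{elpc-reduction-isc}), each $f_i$ is a union of $t$ (near-)injective pointer functions with $t \le n^{\delta}$, so $|f_i(v)| \le t \ll n$ and every set covers $O(t)$ port elements. Any feasible cover must therefore contain $\Omega(pn/t)=\omega(p)$ sets just to touch every port, so the optimum is $\Theta(pn)$ regardless of the ISC answer, and your ``$2p+1$ vs.\ $2p+2$'' analysis --- and hence the decidability of ISC from an exact \prob{Set Cover} oracle --- collapses. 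Your own ``Main obstacle'' paragraph notices the tension between small $m$ and rigidity, but does not resolve it; saying ``once this rigidity is established, the rest is bookkeeping'' is precisely the step that is missing.

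The fix is what the paper does: add, for each port, a cheap filler set (the $R_i^j$ and $T_i^j$ sets, each covering exactly the \texttt{in}/\texttt{out} pair of one vertex). Then every feasible cover has size at least $(2p+1)n+1$ (\lemref{optimal-lower-bound}), the YES case achieves exactly this (\lemref{optimal-solution-size}), and the NO case forces $(2p+1)n+2$ (\lemref{optimal-cover}). With fillers the ``intersection saves one set'' phenomenon is genuine: a common value at the merged middle layer lets one $R$/$T$ filler be dropped, and the induction in \lemref{optimal-cover} shows this is the only way to save a set. The remaining arithmetic ($|\GSet|,|\sS|=O(np)$, then $p-1 = 1/(2\delta)-1$ passes and $\Omega(n^{1+1/(2p)}/p^{O(1)}\log^{3/2}n)$ communication give $\tldOmega(mn^{\delta})$ space for $m=O(n)$) matches what you wrote and is correct once the gadget is repaired.
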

Consider an instance $\mathsf{ISC}$ of the communication
\prob{Intersection Set Chasing($n,p$)}. We construct an instance of
the communication \prob{Set Cover($\GSet,\sS, 2p$)} problem such that
solving \prob{Set Cover($\GSet,\sS$)} optimally determines whether the
output of $\mathsf{ISC}$ is $1$ or not.

The instance $\mathsf{ISC}$ consists of $2p$ players. Each player
$1, \cdots, p$ has a function $f_i:[n] \rightarrow 2^{[n]}$ and each
player $p+1, \cdots, 2p$ has a function $f'_i:[n] \rightarrow 2^{[n]}$
(see \figref{set-chasing}). In $\mathsf{ISC}$, each function $f_i$ is
shown by a set of vertices $v_i^1, \cdots, v_i^n$ and
$v_{i+1}^1, \cdots, v_{i+1}^n$ such that there is a directed edge from
$v_{i+1}^{j}$ to $v_{i}^\ell$ if and only if $\ell \in
f_i(j)$. Similarly, each function $f'_i$ is denoted by a set of
vertices $u_i^1, \cdots, u_i^n$ and $u_{i+1}^1, \cdots, u_{i+1}^n$
such that there is a directed edge from $u_{i+1}^{j}$ to $u_{i}^\ell$
if and only if $\ell \in f'_i(j)$ (see \figref{reduction}(a) and
\figref{reduction}(b)).

In the corresponding communication \probSetCover instance of
$\mathsf{ISC}$, we add two elements ${\tt in}(v_{i}^j)$ and
${\tt out}(v_{i}^j)$ per each vertex $v_i^{j}$ where
$i\leq p+1, j\leq n$. We also add two elements ${\tt in}(u_{i}^j)$ and
${\tt out}(u_{i}^j)$ per each vertex $u_i^{j}$ where
$i\leq p+1, j\leq n$. In addition to these elements, for each player
$i$, we add an element $e_{i}$ (see \figref{reduction}(c) and
\figref{reduction}(d)).

Next, we define a collection of sets in the corresponding \prob{Set
   Cover} instance of $\mathsf{ISC}$. For each player $P_i$, where
$1\leq i\leq p$, we add a single set $S_{i}^j$ containing
${\tt out}(v_{i+1}^j)$ and ${\tt in}(v_i^\ell)$ for all out-going
edges $(v_{i+1}^j, v_i^\ell)$. Moreover, all $S_{i}^j$ sets contain
the element $e_i$.  Next, for each vertex $v_i^j$ we add a set $R_i^j$
that contains the two corresponding elements of $v_i^j$,
${\tt in}(v_{i}^j)$ and ${\tt out}(v_{i}^j)$. In
\figref{reduction}(c), the red rectangles denote $R$-type sets and
the curves denote $S$-type sets for the first half of the players.

Similarly to the sets corresponding to players $1$ to $p$, for each
player $P_{p+i}$ where $1\leq i \leq p$, we add a set $S_{p+i}^j$
containing ${\tt in}(u_i^j)$ and ${\tt out}(u_{i+1}^\ell)$ for all
in-coming edges $(u_{i+1}^\ell, u_{i}^j)$ of $u_i^j$ (denoting
$f'^{-1}_i(j)$).  The set $S_{p+i}^j$ contains the element $e_{p+i}$
too. Next, for each vertex $u_i^j$ we add a set $T_{p+i}^j$ that
contains the two corresponding elements of $u_i^j$,
${\tt in}(u_{i}^j)$ and ${\tt out}(u_{i}^j)$.  In
\figref{reduction}(d), the red rectangles denote $T$-type sets and
the curves denote $S$-type sets for the second half of the players.

At the end, we merge $v_1^{i}$s and $u_{1}^{i}$s as shown in
\figref{middle}. After merging the corresponding sets of $v_1^j$s
($R_1^1, \cdots, R_1^{n}$) and the corresponding sets of $u_1^j$s
($T_1^1, \cdots,$ $T_1^{n}$), we call the merged sets
$T_{1}^1, \cdots, T_{1}^n$.
\begin{figure}[htb]
    \centering
    \includegraphics[height=1.8in]{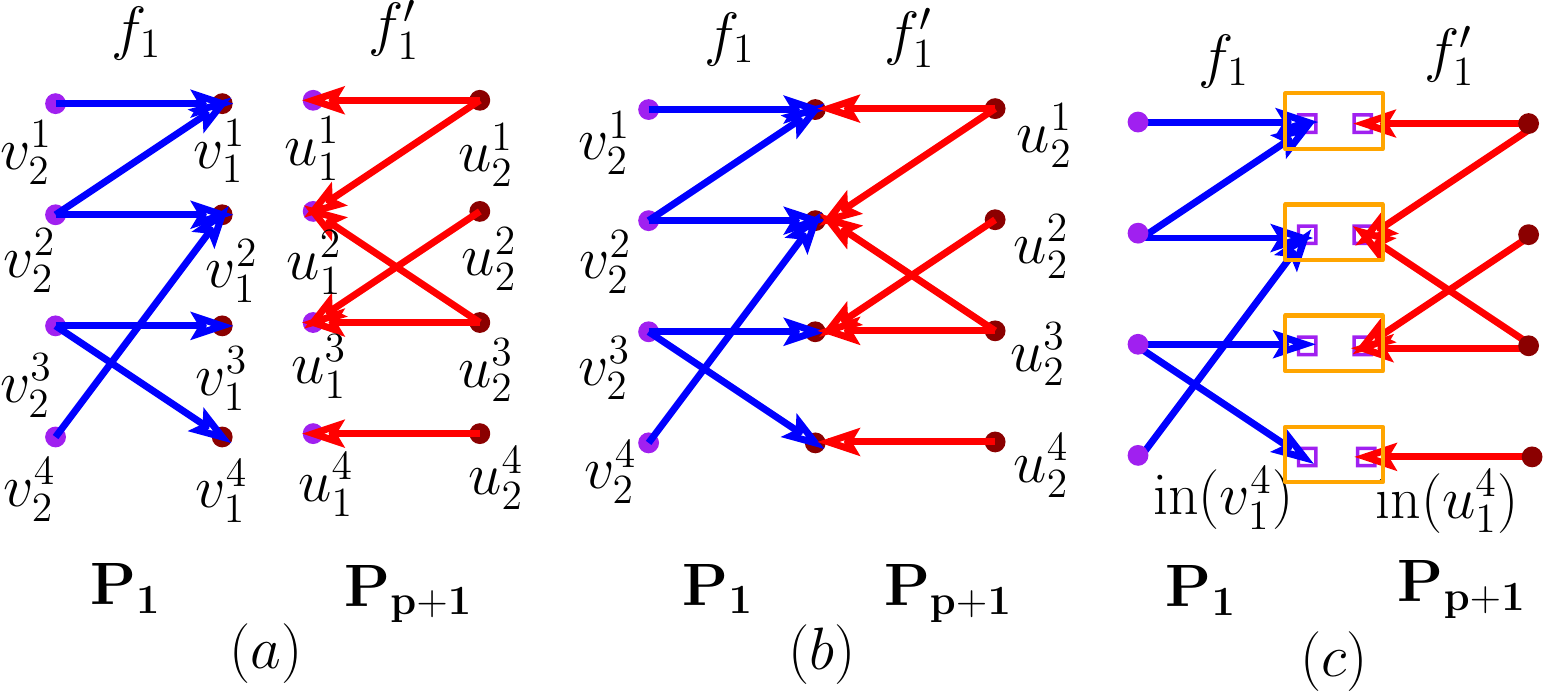}
    \caption{{In (b) two \prob{Set Chasing} instances merge in
          their first set of vertices and (c) shows the corresponding
          gadgets of these merged vertices in the communication
          \probSetCover.}}
    \figlab{middle}
\end{figure}
\begin{figure}
    \centering {\includegraphics[height=1.75in]{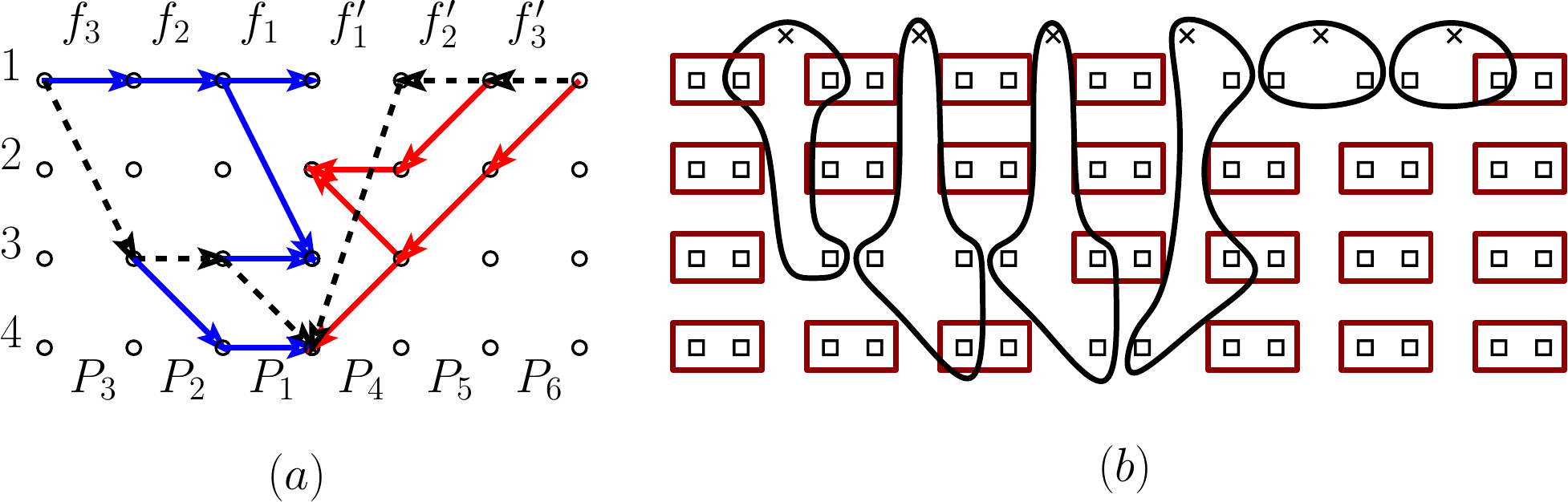}}
    \caption{{In $(a)$, path $Q$ is shown with black dashed arcs
          and $(b)$ shows the corresponding cover of path $Q$.}}
    \figlab{optimal}
\end{figure}

The main claim is that if the solution of $\mathsf{ISC}$ is $1$ then
the size of an optimal solution of its corresponding \probSetCover
instance $\mathsf{SC}$ is $(2p+1)n+1$; otherwise, it is $(2p+1)n+2$.
% First we show that any optimal solution of $\mathsf{SC}$ is of size
% at least $(2p+1)n+1$.
\begin{lemma}
    \lemlab{optimal-lower-bound} %
    The size of any feasible solution of $\mathsf{SC}$ is at least
    $(2p+1)n +1$.
\end{lemma}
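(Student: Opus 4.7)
The plan is to prove the bound by combining two ``unique-cover'' sum-inequalities with one extra constraint coming from a player-element whose covering $S$-set does not participate in either sum.

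I would begin by identifying the forced sets. Three families of elements have a unique covering option: the merged element $\mathtt{out}(v_1^j) = \mathtt{out}(u_1^j)$ is covered only by the merged set $T_1^j$; $\mathtt{in}(v_{p+1}^j)$ is covered only by $R_{p+1}^j$; and $\mathtt{in}(u_{p+1}^j)$ is covered only by $T_{2p+1}^j$. Hence every feasible cover must contain all $3n$ such sets. Next, on the $v$-side, for each $i\in\{2,\dots,p+1\}$ and $j\in[n]$, the element $\mathtt{out}(v_i^j)$ is covered only by $R_i^j$ or $S_{i-1}^j$. Summing these $pn$ binary inequalities and absorbing the $n$ forced values $\rho_1^j=1$ yields
\begin{equation*}
R + S_v \;\geq\; (p+1)n,
\end{equation*}
where $R$ counts all selected $R$-type sets (including the merged layer) and $S_v=\sum_{i=1}^{p}\sum_j s_i^j$ counts the $S$-sets belonging to the ``first-half'' players $1,\dots,p$.

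Dually, on the $u$-side, for each $i\in\{2,\dots,p\}$ and $j\in[n]$ the element $\mathtt{in}(u_i^j)$ is covered only by $T_{p+i}^j$ or $S_{p+i}^j$. Summing these $(p-1)n$ inequalities and absorbing the forced values $\tau_{p+1}^j=1$ gives
\begin{equation*}
T + S_u^{\textup{int}} \;\geq\; pn,
\end{equation*}
where $T$ counts the non-merged $T$-type sets and $S_u^{\textup{int}}=\sum_{i=p+2}^{2p}\sum_j s_i^j$ counts only the $S$-sets of players $p+2,\dots,2p$. The crucial observation, which produces the extra $+1$, is that the ``middle'' player $p+1$ is absent from both sums: $S_v$ stops at player $p$, while $S_u^{\textup{int}}$ starts at player $p+2$. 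However, the element $e_{p+1}$ lies only in sets of the form $S_{p+1}^j$, so any feasible cover must satisfy $\sum_j s_{p+1}^j \geq 1$.

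Combining the three ingredients, the total cover size $N$ satisfies
\begin{equation*}
N \;=\; (R + S_v) + (T + S_u^{\textup{int}}) + \sum_j s_{p+1}^j \;\geq\; (p+1)n + pn + 1 \;=\; (2p+1)n + 1,
\end{equation*}
which is the desired bound. The main point to keep straight is the \emph{asymmetric} choice of unique-cover constraints on the two halves---uniqueness of the out-element on the $v$-side and uniqueness of the in-element on the $u$-side---since this asymmetry is precisely what leaves the middle player outside both sums. Beyond that, the argument is just careful bookkeeping of which indicator variables appear in which sum.
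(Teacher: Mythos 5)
Your argument has a genuine gap, and it traces back to the first bullet in your ``forced sets'' step. You claim the merged element $\mathtt{out}(v_1^j)=\mathtt{out}(u_1^j)$ is covered \emph{only} by the merged set $T_1^j$, and therefore every $T_1^j$ must appear in the cover. That cannot be correct: in the matching upper bound (\lemref{optimal-solution-size}), the feasible cover of size $(2p+1)n+1$ explicitly omits $T_1^{j_1}$ and substitutes $S_{p+1}^{j_1}$, so the two elements belonging to the merged vertex must each be coverable by an $S$-set as well. The consistent reading of the merge is that the merged vertex contributes only two elements, $\mathtt{in}(v_1^j)$ (covered by $T_1^j$ or by some $S_1^\ell$) and $\mathtt{in}(u_1^j)$ (covered by $T_1^j$ or by $S_{p+1}^j$); no element of the merged vertex is uniquely in $T_1^j$, and the merged sets are not forced.

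Once that forcing is dropped, your bookkeeping gives only $R+S_v\geq pn$, so the total you can certify is $2pn+1$, which is $n$ short. The missing $n$ in the paper's argument comes from the $u$-side group $i=1$: since $\mathtt{in}(u_1^j)$ is covered only by $T_1^j$ or $S_{p+1}^j$, one gets $\sum_j(\tau_1^j+s_{p+1}^j)\geq n$, and this group is what accounts for $T_1$ and $S_{p+1}$ jointly. Your accounting excludes this group and instead places $T_1$ on the $v$-side via the (false) forcing and $S_{p+1}$ alone via $e_{p+1}\geq 1$, which loses $n-1$. Relatedly, your choice of $e_{p+1}$ to produce the ``$+1$'' does not work: $S_{p+1}^j$ lives in a group whose other member ($T_1^j$) is not forced, so $s_{p+1}\geq 1$ gives nothing beyond the group bound of $n$. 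The paper instead uses $e_p$: $S_p^j$ lives in the group $\{R_{p+1}^j,S_p^j\}_j$, and since all $R_{p+1}^j$ are forced by $\mathtt{in}(v_{p+1}^j)$, the additional $s_p\geq 1$ pushes that group's total to $n+1$. That asymmetric pairing (forced $R_{p+1}$ with $e_p$) is where the $+1$ genuinely comes from, and it is the step your proposal replaces with an incorrect forcing claim.
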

\begin{proof}
    For each player $i$ ($1\leq i \leq p$), since
    ${\tt out}(v_{i+1}^{j})$s are only covered by $R_{i+1}^j$ and
    $S_{i}^j$, at least $n$ sets are required to cover
    ${\tt out}(v_{i+1}^{1}), \cdots, {\tt out}(v_{i+1}^{n})$. Moreover
    for player $P_p$, since ${\tt in}(v_{p+1}^{j})$s are only covered
    by $R_{p+1}^j$ and $e_{p}$ is only covered by $S_{p}^{1}$, all
    $n+1$ sets $R_{p+1}^1, \cdots, R_{p+1}^n, S_p^1$ must be selected
    in any feasible solution of $\mathsf{SC}$.

    Similarly for each player $p+i$ ($1 \leq i \leq p$), since
    in($u_i^j$)s are only covered by $T_{i}^j$ and $S_{p+i}^{j}$, at
    least $n$ sets are required to cover
    ${\tt in}(u_{i}^{1}), \cdots, {\tt in}(u_{i}^{n})$.  Moreover,
    considering $u_{p+1}^1, \cdots, u_{p+1}^n$, since
    ${\tt in}(u_{p+1}^{j})$ is only covered by $T_{p+1}^j$, all $n$
    sets $T_{p+1}^1, \cdots, T_{p+1}^n$ must be selected in any
    feasible solution of $\mathsf{SC}$.

    All together, at least $(2p+1)n + 1$ sets should be selected in
    any feasible solution of $\mathsf{SC}$.
\end{proof}
%Next we show that if the solution of $\mathsf{ISC}$ is $1$ then the
%size of an optimal solution of $\mathsf{SC}$ is exactly $(2p+1)n+1$.
%The proof of following lemmas are given in \secref{omitted-proofs-multipass}.
\begin{lemma}
    \lemlab{optimal-solution-size} %
    Suppose that the solution of $\mathsf{ISC}$ is $1$. Then the size
    of an optimal solution of its corresponding \probSetCover
    instance is exactly $(2p+1)n+1$.
\end{lemma}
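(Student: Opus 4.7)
The plan is to build an explicit cover of size $(2p+1)n+1$, which matches the lower bound from \lemref{optimal-lower-bound}. First, I would decode $\mathsf{ISC}=1$ into paths: there must exist sequences $1=q_p,q_{p-1},\ldots,q_0=j^\ast$ with $q_{i-1}\in f_i(q_i)$ and $1=r_p,r_{p-1},\ldots,r_0=j^\ast$ with $r_{i-1}\in f'_i(r_i)$. These correspond to chasing paths in the $v$-side and $u$-side instances that meet at the merged vertex $v_1^{j^\ast}=u_1^{j^\ast}$; concatenating them gives a single path $Q$ of length $2p$ running from $v_{p+1}^1$ through the merged bottom to $u_{p+1}^1$.

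Second, I would put into the cover: (i) the forced top/bottom sets $R_{p+1}^j$, $T_{p+1}^j$, and the merged $T_1^j$ for every $j\in[n]$ (respectively covering the uniquely-covered elements ${\tt in}(v_{p+1}^j)$, ${\tt in}(u_{p+1}^j)$, and ${\tt out}(v_1^j)={\tt out}(u_1^j)$); (ii) the off-path middle sets $R_i^j$ and $T_i^j$ for $i\in\{2,\ldots,p\}$ with $j$ not on the corresponding path; and (iii) the path $S$-sets, namely $S_i^{q_i}$ for $i=1,\ldots,p$ together with $S_{p+i}^{r_{i-1}}$ for $i=1,\ldots,p$, which are precisely the sets associated to the edges of $Q$.

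Third, I would verify coverage element-by-element. Every non-path middle vertex has both of its elements inside its own $R$/$T$ set from (ii). At a middle path vertex $v_i^{q_{i-1}}$ (with $2\le i\le p$), the defining condition $q_{i-1}\in f_i(q_i)$ places ${\tt in}(v_i^{q_{i-1}})$ inside $S_i^{q_i}$, and by construction $S_{i-1}^{q_{i-1}}$ contains ${\tt out}(v_i^{q_{i-1}})={\tt out}(v_{(i-1)+1}^{q_{i-1}})$; the analogous argument covers each $u$-side path vertex using $r_{i-1}\in f'_i(r_i)$. The top and bottom boundary path elements are then handled by the forced sets in (i), and every distinguished element $e_i$ lies in its corresponding path $S$-set.

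The main obstacle will be the concluding counting step. The key consequence of $\mathsf{ISC}=1$—beyond merely providing paths—is that the two paths meet at a single merged vertex, so the combined path $Q$ is connected through the merged instance. This lets me arrange the bottom $S$-sets so that the shared element ${\tt in}(v_1^{j^\ast})={\tt in}(u_1^{j^\ast})$ is covered in a way that one set which would be needed in the disjoint case becomes redundant, saving exactly one set compared with naively summing groups (i)--(iii). Making this rigorous will require a careful case analysis at the merged bottom level, enumerating every element there and confirming that the connectedness of $Q$ permits dropping exactly one set while preserving full coverage. Once done, summing the contributions yields $(2p+1)n+1$, and in combination with \lemref{optimal-lower-bound} this is the optimum. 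The same accounting shows why this saving is unavailable when $\mathsf{ISC}=0$—the two chasing paths would then remain disjoint—so the jump from $(2p+1)n+1$ to $(2p+1)n+2$ cleanly separates the two cases and drives the reduction.
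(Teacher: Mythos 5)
Your overall plan is the same as the paper's: decode $\mathsf{ISC}=1$ into a single path $Q$ through the merged instance, construct a cover consisting of forced boundary $R$/$T$ sets, off‑path $R$/$T$ sets, and one path $S$‑set per edge of $Q$, count, and invoke \lemref{optimal-lower-bound} for the matching lower bound. Your path sets $S_i^{q_i}$ and $S_{p+i}^{r_{i-1}}$ coincide exactly with the paper's $S_p^1, S_{i-1}^{j_i}, S_{p+1}^{j_1}, S_{p+i}^{\ell_i}$, and your level‑by‑level coverage check at the middle levels (using $q_{i-1}\in f_i(q_i)$ and $r_{i-1}\in f'_i(r_i)$) is correct and, if anything, more explicit than what the paper writes.

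The gap is in your treatment of the merged bottom level, and it is not just a deferred "counting step" — it stems from a concrete misreading of the gadget. You include the merged $T_1^j$ for \emph{all} $j\in[n]$ and justify this by claiming $T_1^j$ uniquely covers an element "${\tt out}(v_1^j)={\tt out}(u_1^j)$." That claim cannot be right: if some element at the merged level were covered \emph{only} by $T_1^j$, then all $n$ sets $T_1^1,\ldots,T_1^n$ would be forced, your total of $3n + 2(p-1)(n-1) + 2p = (2p+1)n+2$ would be best possible, and the lemma would be false. In fact the two elements contributed by the merged vertex $v_1^{j^*}=u_1^{j^*}$ are exactly the ones the two bottom $S$‑sets reach: ${\tt in}(v_1^{j^*})$ lies in $S_1^{q_1}$ (since $q_0=j^*\in f_1(q_1)$), and ${\tt in}(u_1^{j^*})$ lies in $S_{p+1}^{r_0}$ (since $r_0=j^*$). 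This is why the paper's cover simply \emph{omits} $T_1^{j_1}$ and only picks $T_1^j$ for $j\neq j_1$; dropping $T_1^{j^*}$ from your list is the single adjustment needed to land on $(2p+1)n+1$. So the "saving of exactly one set" you anticipate is correct, but it does not come from "arranging the bottom $S$‑sets" — it comes from recognizing that $T_1^{j^*}$ was never forced in the first place. Once you correct the description of which elements the merged $T_1^j$ uniquely covers, your construction and verification become the paper's argument.
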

%% Move to appendix
%\iffalse
\begin{proof}
    By \lemref{optimal-lower-bound}, the size of an optimal
    solution of $\mathsf{S}$ is at least $(2p+1)n+1$. Here we prove
    that $(2p+1)n+1$ sets suffice when the solution of $\mathsf{ISC}$
    is $1$.  Let
    $Q= v_{p+1}^1, v_{p}^{j_p},\dots, v_{2}^{j_2}, v_{1}^{j_1},
    u_{1}^{\ell_1}, u_{2}^{\ell_2}, \dots,u_{p}^{\ell_p}, u_{p+1}^1$
    be a path in $\mathsf{ISC}$ such that $j_1 = \ell_1$ (since the
    solution of $\mathsf{ISC}$ is $1$ such a path exists). The
    corresponding solution to $Q$ can be constructed as follows (See
    \figref{optimal}):
    % \begin{singlespacing}
    \begin{itemize}
        \setlength{\itemsep}{1pt} \setlength{\parskip}{0pt}
        \setlength{\parsep}{0pt}
        \item{Pick $S_{p}^1$ and all $R_{p+1}^{j}$s ($n+1$ sets).}
        \item{For each $v_{i}^{j_i}$ in $Q$ where $1<i\leq p$, pick
           the set $S_{i-1}^{j_i}$ in the solution. Moreover, for each
           such $i$ pick all sets $R_{i}^j$ where $j\neq j_i$
           ($n(p-1)$ sets).}
        \item{For $v_{1}^{j_1}$ (or $u_{1}^{\ell_1}$), pick the set
           $S_{p+1}^{j_1}$. Moreover, pick all sets $T_{1}^j$ where
           $j\neq j_1$ ($n$ sets).}
        \item{For each $u_{i}^{\ell_i}$ in $Q$ where $1<i\leq p$, pick
           the set $S_{p+i}^{\ell_i}$ in the solution. Moreover, for
           each such $i$ pick all sets $T_{i}^\ell$ where
           $\ell \neq \ell_i$} ($n(p-1)$ sets).
        \item{ Pick all $T_{p+1}^{j}$s ($n$ sets).}
    \end{itemize}
    % \end{singlespacing}
    It is straightforward to see that the solution constructed above
    is a feasible solution.
\end{proof}
%\fi
%Next we show that if the size of an optimal cover of $\mathsf{S}$ is
%$(2p+1)n+1$ then the solution of $\mathsf{ISC}$ is $1$.
\begin{lemma}
    \lemlab{optimal-cover} %
    Suppose that the size of an optimal solution of the corresponding
    \probSetCover instance of $\mathsf{ISC}$, $\mathsf{SC}$, is
    $(2p+1)n+1$. Then the solution of $\mathsf{ISC}$ is $1$.
\end{lemma}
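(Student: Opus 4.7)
My plan is to show that any feasible cover $\sI$ of $\mathsf{SC}$ of size exactly $(2p+1)n+1$ encodes a certificate for $\mathsf{ISC}=1$. The starting point is that Lemma~\lemref{optimal-lower-bound} becomes tight: every $R_{p+1}^j$ and every $T_{p+1}^j$ lies in $\sI$, $S_p^1\in\sI$, and because $|\sI|$ equals the lower bound exactly, there is no slack in any per-player budget. Concretely, for each $i\in\{1,\dots,p-1\}$ and $j\in[n]$ exactly one of $\{R_{i+1}^j,S_i^j\}$ is in $\sI$ (otherwise $\mathrm{out}(v_{i+1}^j)$ is uncovered or the total exceeds $(2p+1)n+1$), and symmetrically exactly one of $\{T_i^j,S_{p+i}^j\}$ for each $i\in\{1,\dots,p\}$. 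Write $A_i=\{j:S_i^j\in\sI\}$ (with $A_p=\{1\}$) and $B_i=\{j:S_{p+i}^j\in\sI\}$; coverage of each $e_i$ forces $A_i,B_i$ nonempty in the relevant ranges.

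I would then extract the $v$-side chain from in-element coverage: for $i\in\{2,\dots,p\}$ and $j\in A_{i-1}$, per-slot tightness gives $R_i^j\notin\sI$, so $\mathrm{in}(v_i^j)$ must be covered by some $S_i^k\in\sI$ with $j\in f_i(k)$. This yields $A_{i-1}\subseteq\vec{f_i}(A_i)$, and iterating from $A_p=\{1\}$ puts $A_1\subseteq\vec{f_2}\circ\cdots\circ\vec{f_p}(\{1\})$. The merging gadget at the bottom layer, which identifies $v_1^j$ with $u_1^j$, then contributes the merged-layer constraint $B_1\subseteq\vec{f_1}(A_1)$, so every element of $B_1$ lies in the layer-$1$ $v$-side set-chasing image, i.e., is reachable from $v_{p+1}^1$ at the merged layer.

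The mirror argument on the $u$-side uses the forced $T_{p+1}^j$'s in place of the $R_{p+1}^j$'s, together with a tightness-induced anchor $B_p\cap f'_p(1)\neq\emptyset$ (the ``$+1$'' analog of $S_p^1$ is absorbed into player $2p$'s otherwise-flexible $n$-slot budget, and tracking the cover size forces some index of $f'_p(1)$ into $B_p$). Out-element coverage at each $u_i^j$ with $j\in B_i$ demands some $S_{p+(i-1)}^k\in\sI$ with $k\in f'_{i-1}(j)$, so every element of $B_i$ has a child (via $f'_{i-1}$) in $B_{i-1}$. Starting from $\ell_p\in B_p\cap f'_p(1)$ and unrolling produces $\ell_{i-1}\in f'_{i-1}(\ell_i)\cap B_{i-1}$ for $i=2,\dots,p$ and finally $\ell_1\in B_1$ that is reachable from $u_{p+1}^1$. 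Since $\ell_1\in B_1\subseteq\vec{f_1}(A_1)$ is also reachable from $v_{p+1}^1$, the two set-chasing images intersect at $\ell_1$ and $\mathsf{ISC}=1$. The main obstacle I anticipate is the cleanest articulation of this $u$-side anchor, since---unlike the $v$-side, where $A_p=\{1\}$ falls out immediately from the forced set $S_p^1$---on the $u$-side the analogous forcing sits inside player $2p$'s flexible budget and must be extracted from how tightness in $(2p+1)n+1$ interacts with the forced $T_{p+1}^j$'s.
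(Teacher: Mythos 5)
Your approach mirrors the paper's: the exact bound $(2p+1)n+1$ forces per-slot tightness on every group (exactly one of $R_{i+1}^j,S_i^j$, exactly one of $T_i^j,S_{p+i}^j$), you anchor the $v$-side at $A_p=\{1\}$ via the forced $S_p^1$, and the two chains $A_{i-1}\subseteq\vec{f_i}(A_i)$ and ``every vertex of $B_i$ has a child in $B_{i-1}$'' are exactly the paper's two inductions (there phrased as ``$S_i^j\in\opt$ implies $v_{i+1}^j$ is reachable'' on the $v$-side, and an induction tracking a shared witness index $r$ on the $u$-side). Your $A_i/B_i$ packaging is a clean restatement of the same argument.

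The genuine gap is the $u$-side anchor, precisely where you flag it, and your proposed mechanism does not supply it. Per-slot tightness on $\{T_p^j\}_j\cup\{S_{2p}^j\}_j$ together with $e_{2p}$ yields only $B_p\neq\emptyset$; nothing in the budget accounting couples $B_p$ to $f'_p(1)$. All of $T_{p+1}^1,\ldots,T_{p+1}^n$ are forced independently (to cover the $\mathrm{in}(u_{p+1}^j)$'s), so $\mathrm{out}(u_{p+1}^1)$ is covered by the forced $T_{p+1}^1$ regardless of which $S_{2p}^j$ the cover picks, and the total $(2p+1)n+1$ is attainable whether or not that $j$ lies in $f'_p(1)$. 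There is no ``tightness-induced'' forcing of $B_p\cap f'_p(1)\neq\emptyset$. The paper closes this step not by budget accounting but by invoking a structural guarantee on the construction itself --- ``the way we constructed the instance guarantees that all sets $S_{2p}^1,\ldots,S_{2p}^n$ contain $\mathrm{out}(u_{p+1}^1)$'' --- which makes whichever $j\in B_p$ is chosen automatically compatible with the start vertex $u_{p+1}^1$. Your proof omits this property and tries to extract the anchor from tightness alone, so the chain $\ell_p\in B_p\cap f'_p(1)$ is unjustified. If you import the paper's construction-level guarantee on the player-$2p$ sets, the rest of your argument closes correctly.
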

%% Move to appendix
%\iffalse
\begin{proof}
    As we proved earlier in \lemref{optimal-lower-bound}, any
    feasible solution of $\mathsf{SC}$ picks
    $R_{p+1}^1, \cdots, R_{p+1}^n, S_p^1$ and $T_{p+1}^1, \cdots,$
    $T_{p+1}^n$. Moreover, we proved that for each $1\leq i<p$, at
    least $n$ sets should be selected from
    $R_{i+1}^1, \cdots, R_{i+1}^n, S_i^1,$ $\cdots, S_i^n$. Similarly,
    for each $1\leq i \leq p$, at least $n$ sets should be selected
    from $T_{i}^1, \cdots, T_{i}^n$, $S_{p+i}^1, \cdots, S_{p+1}^n$.
    Thus if a feasible solution of $\mathsf{SC}$, $\opt$, is of size
    $(2p+1)n+1$, it has exactly $n$ sets from each specified group.

    Next we consider the first half of the players and second half of
    the players separately. Consider $i$ such that $1\leq i<p$. Let
    $S_i^{j_1}, \cdots, S_i^{j_k}$ be the sets picked in the optimal
    solution (because of $e_{i}$ there should be at least one set of
    form $S_i^j$ in $\opt$). Since each ${\tt out}(v_{i+1}^j)$ is only
    covered by $S_i^j$ and $R_{i+1}^j$, for all
    $j\notin\set{j_1,\dots,j_k}$, $R_{i+1}^j$ should be selected in
    $\opt$. Moreover, for all $j \in \set{j_1, \cdots, j_k}$,
    $R_{i+1}^j$ should not be contained in $\opt$ (otherwise the size
    of $\opt$ would be larger than $(2p+1)n +1$). Consider
    $j\in \set{j_1,\dots, j_k}$. Since $R_{i+1}^j$ is not in $\opt$,
    there should be a set $S_{i+1}^{\ell}$ selected in $\opt$ such
    that ${\tt in}(v_{i+1}^j)$ is contained in $S_{i+1}^{\ell}$. Thus
    by considering $S_{i}$s in a decreasing order and using induction,
    if $S_{i}^j$ is in $\opt$ then $v_{i+1}^j$ is reachable form
    $v_{p+1}^1$.

    Next consider a set $S_{p+i}^j$ that is selected in $\opt$
    ($1\leq i\leq p$). By similar argument, $T_{i}^j$ is not in $\opt$
    and there exists a set $S_{p+i-1}^\ell$ (or $S_1^\ell$ if $i=1$)
    in $\opt$ such that ${\tt out}(u_{i}^j)$ is contained in
    $S_{p+i-1}^\ell$.  Let
    $u_{i+1}^{\ell_1}, \cdots, u_{i+1}^{\ell_k}$ be the set of
    vertices whose corresponding ${\tt out}$ elements are in
    $S_{p+i}^j$. Then by induction, there exists an index $r$ such
    that $v_1^r$ is reachable from $v_{p+1}^1$ and $u_{1}^r$ is also
    reachable from all $u_{i+1}^{\ell_1}, \cdots, u_{i+1}^{\ell_k}$.
    Moreover, the way we constructed the instance $\mathsf{SC}$
    guarantees that all sets $S_{2p}^1, \cdots, S_{2p}^n$ contains
    ${\tt out}(u_{p+1}^1)$. Hence if the size of an optimal solution
    of $\mathsf{SC}$ is $(2p+1)n +1$ then the solution of
    $\mathsf{ISC}$ is $1$.
\end{proof}
%\fi%
\begin{corollary}%
    \corlab{cor:equivalence}%
    \prob{Intersection Set Chasing($n,p$)} returns $1$ if and only if
    the size of optimal solution of its corresponding \prob{Set Cover}
    instance (as described here) is $(2p+1)n +1$.
\end{corollary}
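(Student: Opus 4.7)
The plan is very short: this corollary is essentially a repackaging of the two previously established lemmas. First I would observe that by \lemref{optimal-lower-bound}, any feasible cover of $\mathsf{SC}$ already requires at least $(2p+1)n+1$ sets, so $(2p+1)n+1$ is the smallest possible optimum value in our reduction regardless of the \prob{ISC} answer. Consequently, the optimum is either exactly $(2p+1)n+1$ or strictly larger.

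For the forward implication, assume \prob{Intersection Set Chasing}$(n,p)$ outputs $1$. Then \lemref{optimal-solution-size} exhibits a feasible cover of size exactly $(2p+1)n+1$, which combined with the matching lower bound of \lemref{optimal-lower-bound} shows that the optimum is $(2p+1)n+1$.

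For the backward implication, assume the optimum of $\mathsf{SC}$ has size $(2p+1)n+1$. Then \lemref{optimal-cover} applies directly and yields that the answer to \prob{ISC} is $1$. Stringing these two implications together gives the claimed if-and-only-if.

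The only subtlety worth flagging is that the corollary is stated as a sharp equivalence with the specific value $(2p+1)n+1$ (rather than merely "small" versus "large"); the role of \lemref{optimal-lower-bound} is exactly to rule out any optimum strictly smaller than this threshold, which is what lets us convert the inequality-style statements of Lemmas \ref{lemma:optimal-solution-size} and \ref{lemma:optimal-cover} into an equality-style characterization. There is no real obstacle: the hard work was done in the three preceding lemmas, and the corollary is essentially a one-line consequence of their conjunction.
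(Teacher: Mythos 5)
Your proof is correct and follows exactly the paper's intended route: the corollary is an immediate consequence of \lemref{optimal-lower-bound}, \lemref{optimal-solution-size}, and \lemref{optimal-cover}, with the lower bound ruling out anything below $(2p+1)n+1$ and the other two lemmas supplying the two directions of the equivalence.
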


%Next we prove the main theorem of this section,
%\thmref{reduction}.

\begin{observation}%
    \obslab{obr:stream-communication-set-cover}%
    Any streaming algorithm for \probSetCover, $\sI$,
    that in $\ell$ passes solves the problem optimally with a probability of error
    ${\tt err}$ and consumes $s$ memory space, solves the %s -> O(s)%
    corresponding communication \probSetCover problem in $\ell$
    rounds using $O(s\ell^2)$ bits of communication with probability
    error ${\tt err}$.
\end{observation}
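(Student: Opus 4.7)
The plan is to convert the streaming algorithm $\sI$ into a communication protocol by a direct pass-by-pass simulation. The communication \probSetCover is a $(p, p-1)$-communication problem whose player count $p$ must match the round structure: in the reduction of \thmref{reduction}, there are $2p = O(\ell)$ players speaking in $\ell$ rounds. First, I would fix a canonical stream order that respects the player indexing: the stream $\sigma$ is obtained by concatenating the sets in $\sS_1$, then $\sS_2$, and so on up through $\sS_p$, with each player's own sets listed in an arbitrary but fixed order.

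Next, I would simulate $\sI$ on $\sigma$ one pass per round. In round $r$, Player $P_1$ loads the memory state $M_{r-1}$ of $\sI$ at the end of pass $r-1$ (with $M_0$ being the publicly known initial state), runs $\sI$ on her block $\sS_1$, and transmits the resulting state, which has size at most $s$, to $P_2$. Player $P_2$ resumes the simulation on $\sS_2$ and forwards the updated state to $P_3$, and so on until $P_p$. After finishing her block, $P_p$ either forwards the updated state to $P_1$ to start the next round, or (if $r = \ell$) outputs whatever answer $\sI$ would have produced. Because $\sI$ is a streaming algorithm, its full internal state between passes is by assumption contained in at most $s$ bits, so the transmitted memory is all that is needed to faithfully resume the computation across a player boundary.

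Counting bits: in each of the $\ell$ rounds there are $O(p)$ transmissions of size $O(s)$, for a total of $O(sp\ell) = O(s\ell^2)$ bits, using $p = O(\ell)$. The error probability is preserved because the simulated execution is a bit-exact run of $\sI$ on the deterministic stream $\sigma$, and private randomness can be handled in the standard way by letting each player sample the random bits $\sI$ uses while processing her own block. The main (mild) subtlety is simply to align the fixed speaking order $P_1, \ldots, P_p$ of the communication model with the left-to-right simulation order, so that no player needs to send more than one memory state per round; once that is done the bound follows immediately.
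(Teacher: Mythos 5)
Your proof is correct and follows essentially the same pass-by-pass simulation as the paper: each player in turn runs $\sI$ over her own block of sets and forwards the $O(s)$-bit memory state, giving $O(\ell)$ transmissions per round over $\ell$ rounds for a total of $O(s\ell^2)$ bits. You spell out two details the paper leaves implicit (the fixed stream order aligning with the speaking order, and the handling of private randomness), but these are refinements of the same argument, not a different route.
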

\begin{proof}
    Starting from player $P_1$, each player runs $\sI$ over its input
    sets and once $P_i$ is done with its input, she sends the working
    memory of $\sI$ publicly to other players. Then next player starts
    the same routine using the state of the working memory received
    from the previous player. Since $\sI$ solves the \probSetCover
    instance optimally after $\ell$ passes using $O(s)$ space with
    probability error ${\tt err}$, applying $\sI$ as a black box we
    can solve $\mathsf{P}$ in $\ell$ rounds using $O(s\ell^2)$ bits of
    communication with probability error ${\tt err}$.
\end{proof}

%% Move to appendix
%\iffalse
\begin{proofof} {\thmref{reduction}:}
    By \obsref{obr:stream-communication-set-cover}, any $\ell$-round
    $O(s)$-space algorithm that solves streaming \prob{Set
       Cover}($\GSet, \sS$) optimally can be used to solve the
    communication \prob{Set Cover($\GSet, \sS, p$)} problem in $\ell$
    rounds using $O(s\ell^2)$ bits of communication. Moreover, by
    \corref{cor:equivalence}, we can decide the solution of the
    communication \prob{Intersection Set Chasing($n, p$)} by solving
    its corresponding communication \prob{Set Cover problem}. Note
    that while working with the corresponding \probSetCover instance
    of \prob{Intersection Set Chasing($n,p$)}, all players know the
    collection of elements $\GSet$ and each player can construct its
    collection of sets $\sS_i$ using $f_i$ (or $f'_i$).

    However, by a result of \cite{go-slbmg-13}, we know that any
    protocol that solves the communication \prob{Intersection Set
       Chasing($n, p$)} problem with probability of error less than
    $1/10$, requires
    $\Omega({n^{1+1/(2p)} \over p^{16} \log^{3/2}n})$ bits of
    communication.  Since in the corresponding \probSetCover
    instance of the communication \prob{Intersection Set
       Chasing($n,p$)}, $\card{\GSet} = (2p+1)\times 2n + 2p = O(np)$
    and $\card{\sS} \leq (2p+1)n +2pn = O(np)$, any $(p-1$)-pass
    streaming algorithm that solves the \probSetCover problem
    optimally with a probability of error at most $1/10$, requires
    $\Omega({n^{1+1/(2p)} \over p^{18} \log^{3/2}n})$ bits of
    communication. Then using
    \obsref{obr:stream-communication-set-cover}, since
    $\delta \geq {\log \log n \over \log n}$, any
    $({1 \over 2\delta} -1)$-pass streaming
    algorithm%\xxx{So is it 1/(2delta) or 1/delta like in the theorem ?}
    of \probSetCover that finds an optimal solution with error
    probability less than $1/10$, requires
    $\tldOmega(\card{\sS} \cdot \card{\GSet}^{\delta})$ space.
\end{proofof}
%\fi

%%%%%%%%%%%%%%%%%%%%%%%%%
%%           Sparse Lower Bound             %%%%
%%%%%%%%%%%%%%%%%%%%%%%%%
\section{Lower Bound%
   %\InPODS{:} \INP{for Sparse Set Cover in} %
   for Sparse Set Cover in Multiple Passes}%
\seclab{sparse-lower-bound}

In this part we give a stronger lower bound for the instances of
the streaming \probSetCover problem with sparse input sets. An 
instance of the \probSetCover
problem is \prob{$s$-Sparse Set Cover}, if for each set $\range \in \sS$ we
have $\card{\range} \leq s$.  We can us the same reduction approach
described earlier in \secref{lower-bound-multipass} to show that any
$({1/2\delta}-1)$-pass streaming algorithm for \prob{$s$-Sparse
   Set Cover} requires $\Omega(\card{\sS}s)$ memory space if
$s < \card{\GSet}^{\delta}$ and $\sS = O(\GSet)$. To prove this, we need to explain more
details of the approach of \cite{go-slbmg-13} on the lower bound of
the communication \prob{Intersection Set Chasing} problem. They first
obtained a lower bound for \prob{Equal Pointer Chasing($n,p$)} problem
in which two instances of the communication \prob{Pointer
   Chasing($n,p$)} are given and the goal is to decide whether these
two instances point to a same value or not;
$f_p(\cdots f_1(1)\cdots) = f'_p(\cdots f'_1(1)\cdots)$.
\begin{definition}[$r$-non-injective functions]
    A function $f:[n] \rightarrow [n]$ is called
    $r$-\emph{non}-\emph{injective} if there exists $A\subseteq [n]$
    of size at least $r$ and $b\in [n]$ such that for all $a\in A$,
    $f(a)=b$.
\end{definition}
\begin{definition}[\prob{Pointer Chasing} Problem] 
	\prob{Pointer Chasing($n,p$)} is a $(p,$ $p-1)$
    communication problem in which the player $i$ has a function
    $f_i: [n] \rightarrow[n]$ and the goal is to compute
    $f_1(f_2(\cdots
    f_p(1)\cdots))$. %In \prob{Limited Pointer Chasing($n,p,r$)}, if there exists $f_i$ and $A$ such that $\card{A} \geq r$ and $f_{i}(a)=b$ for all $a\in A$ then the output is $1$. Otherwise, the output is the same as the value in \prob{Pointer Chasing($n,p,r$)}
\end{definition}
\begin{definition}[\prob{Equal Limited Pointer
       Chasing} Problem] \prob{Equal Pointer
       Chasing($n,p$)} is a $(2p,p-1)$ communication problem in which
    the first $p$ players have an instance of the \prob{Pointer
       Chasing($n,p$)} problem and the other $p$ players have another
    instance of the \prob{Pointer Chasing($n,p$)} problem. The output
    of the \prob{Equal Pointer Chasing($n,p$)} is $1$ if the solutions
    of the two instances of \prob{Pointer Chasing($n,p$)} have the
    same value and $0$ otherwise. Furthermore in another variant of pointer chasing problem, 
    \prob{Equal Limited
       Pointer Chasing($n,p,r$)}, if there exists $r$-non-injective
    function $f_i$, then the output is $1$. Otherwise, the output is
    the same as the value in \prob{Equal Pointer Chasing($n,p$)}.
\end{definition}
For a boolean communication problem \prob{P}, ${\sf OR}_t$(\prob{P})
is defined to be ${\sf OR}$ of $t$ instances of \prob{P} and the
output of ${\sf OR}_t$(\prob{P}) is ${\tt true}$ if and only if the
output of any of the $t$ instances is ${\tt true}$. Using a direct sum
argument, \cite{go-slbmg-13} showed that the communication complexity
of ${\sf OR}_t$(\prob{Equal Limited Pointer Chasing($n,p,r$)}) is $t$
times the communication complexity of \prob{Equal Limited Pointer
   Chasing($n,p,r$)}. %Applying following lemmas, they obtained a lower bound for communication complexity of the \prob{Intersection Set Chasing} problem mentioned earlier in this paper.
% The following lemma is from \cite{go-slbmg-13}.
\begin{lemma}[\cite{go-slbmg-13}]%
    \lemlab{elpc-lower-bound}%
    Let $n$, $p$, $t$ and $r$ be positive integers such that
    $n \geq 5p$, $t\leq {n\over 4}$ and $r=O(\log n)$. Then the amount
    of bits of communication to solve ${\sf OR}_t$(\prob{Equal Limited
       Pointer Chasing($n,p,r$)}) with error probability less than
    $1/3$ is $\Omega({tn \over p^{16}\log n}) - O(pt^2)$.
\end{lemma}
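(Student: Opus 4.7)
The plan is to prove the bound in two stages: first establish a single-instance communication lower bound for \prob{Equal Limited Pointer Chasing($n,p,r$)}, and then amplify it by a direct-sum argument to an ${\sf OR}$ over $t$ instances. The single-instance target is $\Omega(n/(p^{16}\log n))$ bits for any $(p-1)$-round protocol of constant error; once this is in hand, the ${\sf OR}_t$ lifts by a factor of $t$, modulo an additive $O(pt^2)$ slack absorbing the overhead of the reduction.

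For the single-instance bound, I would use round elimination in the tradition of Nisan--Wigderson for \prob{Pointer Chasing}, adapted to the two-chain equality variant. The $r$-non-injectivity restriction is what makes the analysis go through: when every $f_i$ and $f'_i$ is $r$-injective, under a suitably chosen hard product distribution each intermediate pointer is close to uniform on $[n]$, so detecting whether the two chains coincide forces transmission of $\tldOmega(n)$ bits of information about the internal structure of the functions. A round-elimination step converts a $(p-1)$-round protocol of communication $c$ into a $(p-2)$-round protocol on a slightly shrunken but still hard sub-problem with communication $c - \Omega(n/p^{O(1)})$; iterating $p-1$ times and paying the polynomial-in-$p$ loss per round yields the claimed $\Omega(n/(p^{16}\log n))$ bound.

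For the ${\sf OR}_t$ amplification, I would route through internal information complexity. A direct-sum theorem says that the information complexity of any protocol for the $t$-fold ${\sf OR}$ is at least $t$ times the information complexity of a single-instance protocol, up to lower-order terms, by embedding a single hard instance into a uniformly random coordinate while drawing the other $t-1$ coordinates independently from a NO-distribution that makes their outputs $0$ with high probability. Combining this with the standard inequality that communication upper-bounds information cost, and with the single-instance bound above, produces the $\Omega(tn/(p^{16}\log n))$ term; the $-O(pt^2)$ slack absorbs the cost of coordinating the $t$ copies and the union bound over $t$ simulated sub-protocols.

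The main obstacle is the single-instance round-elimination argument, specifically controlling the error accumulation across $p-1$ rounds while preserving near-uniformity of the intermediate pointers. Classical pointer-chasing lower bounds have trouble when the target is a derived predicate (here, equality of two terminal pointers) rather than the pointer itself, because a clever protocol could exploit collisions between the two chains to short-cut the computation; enforcing $r$-non-injectivity removes this loophole by forcing enough entropy at every level, but quantifying the entropy loss per round so that the total blow-up is only polynomial in $p$ --- rather than exponential --- is the delicate heart of the argument and where I would expect to concentrate the most effort.
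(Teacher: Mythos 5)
This lemma is stated in the paper with an explicit citation to Guruswami and Onak~\cite{go-slbmg-13}; the paper does not prove it, so there is no in-paper argument to compare against. The only hint the paper gives about the underlying proof is the sentence preceding the lemma, which says the ${\sf OR}_t$ amplification was obtained ``using a direct sum argument.'' Your two-stage outline --- a single-instance round-elimination lower bound for \prob{Equal Limited Pointer Chasing} followed by an information-complexity direct sum to lift it to ${\sf OR}_t$ --- is consistent with that hint and is a reasonable reconstruction of the structure of the cited result.

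That said, what you have written is a plan, not a proof, and the plan defers exactly the step that constitutes the entire content of the Guruswami--Onak lemma. You correctly flag that the hard part is the single-instance round-elimination bound for a \emph{derived} predicate (equality of two terminal pointers) rather than for the pointer value itself, and that $r$-non-injectivity is the mechanism that blocks short-cutting via collisions; but you then assert, rather than establish, that a per-round loss polynomial in $p$ can be maintained, that intermediate pointers stay near-uniform under the hard distribution, and that the accumulated error across $p-1$ eliminations stays controlled. None of these are routine: the $p^{16}$ in the bound signals that the per-round bookkeeping is delicate, and the interaction between the two chains (unlike classical single-chain pointer jumping) is precisely where a naive round-elimination argument would break. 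Similarly, the explanation of the $-O(pt^2)$ additive slack is vague --- ``coordinating the $t$ copies and a union bound'' does not pin down the actual source, which has to be traced through the specific hard distribution and embedding used in the direct sum (e.g., the probability that the ${\sf OR}$ is trivialized by the dummy coordinates, or the cost of conditioning in the information-cost decomposition). As a high-level roadmap your proposal is sound and aligned with the literature, but as a substitute for the cited proof it has the central lemma hollowed out.
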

\begin{lemma}[\cite{go-slbmg-13}]%
    \lemlab{elpc-reduction-isc}%
    Let $n$, $p$, $t$ and $r$ be positive integers such that
    $t^{2p}r^{p-1} < {n/10}$. Then if there is a protocol that
    solves \prob{Intersection Set Chasing($n,p$)} with probability of
    error less than $1/10$ using $C$ bits of communication, there is a
    protocol that solves ${\sf OR}_t$(\prob{Equal Limited Pointer
       Chasing($n,p,r$)}) with probability of error at most $2/10$
    using $C+2p$ bits of communication.
\end{lemma}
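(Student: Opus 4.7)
The plan is to reduce ${\sf OR}_t$(\prob{Equal Limited Pointer Chasing}$(n,p,r)$) to \prob{Intersection Set Chasing}$(n,p)$ via a public-randomness embedding of the $t$ pointer-chasing instances into a single set-chasing instance, so that an ISC protocol can be used as a subroutine. Using public randomness, sample uniformly random bijections $\pi_i^j : [n] \to [n]$ for each level $i \in \{0,\ldots,p-1\}$ and each instance $j \in [t]$, shared by all $2p$ players of the ISC instance. Each ISC player packages its $t$ pointer-chasing functions into a single set-valued function; player $i$ of the first half defines
\[
F_i(x) = \left\{\pi_{i-1}^j\bigl(f_i^j((\pi_i^j)^{-1}(x))\bigr) : j \in [t]\right\},
\]
with the seed rule $F_p(\{1\}) = \{\pi_{p-1}^j(f_p^j(1)) : j \in [t]\}$; the second-half players define $F'_i$ analogously from the $(f')^j_i$. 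In addition, each of the $2p$ ISC players broadcasts a single bit indicating whether any of its $t$ local functions is $r$-non-injective, for $2p$ extra bits of communication.

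For completeness, if some instance $j$ satisfies $y^j := f_1^j(\cdots f_p^j(1)\cdots) = (f')_1^j(\cdots (f')_p^j(1)\cdots) =: (y')^j$, then by construction $\pi_0^j(y^j)$ lies in both final sets, so the ISC sub-protocol returns YES; and if some $f_i^j$ or $(f')_i^j$ is $r$-non-injective, the corresponding alert bit fires and the reduction outputs $1$ directly. For soundness, assume no instance has equality and no function is $r$-non-injective, so collisions in the two ISC sets can only arise from accidents of the random bijections. A level-by-level analysis will show that the number of ``spurious'' threads at level $i$ (elements of the current set not of the form $\pi_i^j(x_i^j)$ for the legitimate instance-$j$ state $x_i^j$) grows by a factor of at most $r$ per step, using the non-$r$-non-injectivity hypothesis to bound how many pre-images each random collision can introduce at each application of some $F_i$. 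Consequently, the total number of (Alice-thread, Bob-thread) pairs at level $0$ is at most $t^{2p} r^{p-1}$, each pair colliding with probability at most $1/n$ over the random bijections; a union bound gives a false-positive probability at most $t^{2p} r^{p-1}/n < 1/10$ by the hypothesis. Combined with the $1/10$ error of the ISC sub-protocol, the overall error is at most $2/10$.

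The main obstacle is the soundness analysis: rigorously tracking the proliferation of spurious threads through $p$ iterations of set chasing under random embeddings, and showing that the growth factor per level is at most $r$ precisely because no function is $r$-non-injective. Without this hypothesis, a single early-level random collision could branch exponentially and drown out the legitimate signal, which is exactly why the ``limited'' variant of pointer chasing (where $r$-non-injectivity forces the answer to be $1$) is needed for the reduction to go through. This careful level-by-level accounting is what forces the tight numerical condition $t^{2p} r^{p-1} < n/10$ in the hypothesis of the lemma.
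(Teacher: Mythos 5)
This lemma is cited from Guruswami--Onak \cite{go-slbmg-13} and the paper does not prove it; the only guidance the paper gives is the footnote in \secref{sparse-lower-bound}, which describes the reduction as overlaying the $t$ pointer-chasing instances via randomly chosen permutations $\pi_{i,j} \circ f_{i,j} \circ \pi_{i+1,j}^{-1}$. Your construction (shared random bijections $\pi_i^j$, packaging the $t$ functions into one set-valued function per player, broadcasting $2p$ non-injectivity flag bits) matches that hint, and your completeness argument is correct: a common value $y^j=(y')^j$ survives because both halves apply the same $\pi_0^j$ at level $0$, and the flag bits handle the $r$-non-injective escape clause of the ``Limited'' variant.

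The gap is in the soundness analysis, and it is not just that it is unfinished --- the mechanism you describe is the wrong one. You claim the set of \emph{spurious threads} grows by a factor of at most $r$ per level. In fact it grows by a factor of $t$: each application of $F_i$ to a single element produces one image per instance $j'\in[t]$, and all but the ``diagonal'' $j'=j$ image of a legitimate element is spurious. Thus $|A_i|\le t^{p-i}$ and the final sets each have size at most $t^p$; the constraint $r$ has nothing to do with how fast the sets grow. Where $r$-non-injectivity actually enters is in bounding the \emph{collision probability} between Alice's and Bob's threads \emph{within the same instance $j$}: because both halves share $\pi_i^j$, a same-instance spurious collision at level $i-1$ requires two distinct elements $z_A\in A_i$, $z_B\in B_i$ with $f_i^j\bigl((\pi_i^j)^{-1}(z_A)\bigr)=(f')_i^j\bigl((\pi_i^j)^{-1}(z_B)\bigr)$, and it is the bound ``every fiber has size $<r$'' that makes this a low-probability event over the random bijection --- roughly $r/n$ per pair rather than $1/n$, and this extra $r$ factor accumulates over $p-1$ levels of same-instance accounting. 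Conflating this probability-per-level factor with a multiplicative blowup of the thread \emph{count} is precisely the kind of error that would give the wrong exponent, and your derivation of $t^{2p}r^{p-1}$ is asserted rather than obtained from a consistent accounting. Absent a careful treatment of same-instance collisions under shared bijections, the soundness claim does not stand on its own, and the ``proof'' remains a plausible outline rather than an argument.
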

 
\noindent 
Consider an instance of ${\sf OR}_t$(\prob{Equal Limited Pointer
   Chasing}($n,p,r$)) in which $t\leq n^{\delta}, r=\log(n), p={1\over 2\delta} -1$ where 
   ${1\over \delta} = o(\log n)$. By \lemref{elpc-lower-bound}, the required amount of bits of 
   communication to solve the instance with constant success probability is $\tldOmega(tn)$. 
   Then,applying \lemref{elpc-reduction-isc}, to solve the corresponding 
   \prob{Intersection Set Chasing}, $\tldOmega(tn)$ bits of communication
   is required.
    
In the reduction from ${\sf OR}_t$(\prob{Equal Limited Pointer Chasing($n,p,r$)}) to \prob{Intersection Set Chasing($n,p$)} (proof
of \lemref{elpc-reduction-isc}), the $r$-non-injective property is
preserved. In other words, in the corresponding \prob{Intersection Set
   Chasing} instance each player's functions
$f_i:[n] \rightarrow 2^{[n]}$ is union of $t$ $r$-non-injective
functions $f_i(a):=f_{i,1}(a)\cup \cdots \cup f_{i,t}(a)$\footnote{The
   \prob{Intersection Set Chasing} instance is obtained by overlaying
   the $t$ instances of \prob{Equal Pointer Chasing($n,p,r$)}. To be
   more precise, the function of player $i$ in instance $j$ is
   $\pi_{i,j} \circ f_{i,j} \circ \pi^{-1}_{i+1,j}$ ($\pi$ are
   randomly chosen permutation functions) and then stack the functions
   on top of each other.}.  Given that none of the $f_{i,j}$ functions
%in the instance of the \prob{Intersection Set Chasing($n,p$)} problem,
%$\mathsf{ISC}$, constructed from ${\sf OR}_t$(\prob{Equal Limited
   %Pointer Chasing($n,p,r$)}) 
   is $r$-non-injective, the corresponding
\probSetCover instance will have sets of size at
most $rt$ ($S$-type sets are of size at most $t$ for $1\leq i\leq p$
and of size at most $rt$ for $p+1\leq i\leq 2p$).  Since
$r = O(\log n)$, the corresponding \probSetCover instance is 
$\tldO(t)$-sparse. As we showed earlier in the
reduction from \prob{Intersection Set Chasing} to \probSetCover, the
number of elements (and sets) in the corresponding \probSetCover instance is $O(np)$. 
Thus we have the following result for \prob{$s$-Sparse Set Cover} problem.
\begin{theorem} %
    \thmlab{sparse-set-cover}%
    For $s \leq \card{\GSet}^{\delta}$, any streaming algorithm that
    solves \prob{$s$-Sparse Set Cover($\GSet,\sS$)} optimally with
    probability of error less than $1/10$ in $({1 \over 2\delta}-1)$
    passes requires $\tldOmega(\card{\sS}s)$ memory space for $\sS = O(\GSet)$.
\end{theorem}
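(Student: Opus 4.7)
The plan is to adapt the reduction chain of \thmref{reduction} but instantiate it with parameters that force the constructed \probSetCover sets to be small. Concretely, I would set $t \leq n^{\delta}$, $r = c\log n$ for a suitable constant, and $p = \frac{1}{2\delta}-1$, with $\frac{1}{\delta} = o(\log n)$ so that the hypothesis $t^{2p}r^{p-1} < n/10$ of \lemref{elpc-reduction-isc} holds. Applying \lemref{elpc-lower-bound} first gives a $\tldOmega(tn)$ communication lower bound for ${\sf OR}_t$(\prob{Equal Limited Pointer Chasing}$(n,p,r)$), and \lemref{elpc-reduction-isc} then transfers this into a $\tldOmega(tn)$ communication lower bound for \prob{Intersection Set Chasing}$(n,p)$.

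Next I would rerun the gadget reduction from \secref{lower-bound-multipass} on such an \prob{Intersection Set Chasing} input, while tracking the cardinality of every set produced. The key point, highlighted before the theorem, is that in the output of \lemref{elpc-reduction-isc} each player's function $f_i:[n]\to 2^{[n]}$ decomposes as $f_i=f_{i,1}\cup\cdots\cup f_{i,t}$ where no $f_{i,j}$ is $r$-non-injective (otherwise the underlying ${\sf OR}_t$ instance is trivially satisfied). Consequently every preimage $f_i^{-1}(\ell)$ has size at most $rt$ and every image $f_i(j)$ has size at most $t$, so the $S$-type sets in the reduction have cardinality at most $rt+1$, while the $R$- and $T$-type sets have cardinality $2$. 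Hence the produced \probSetCover instance is $s$-sparse for $s=\tldO(t)\leq \card{\GSet}^{\delta}$, with $\card{\GSet}=O(np)$ and $\card{\sS}=O(np)$, so in particular $\card{\sS}=O(\card{\GSet})$.

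By \corref{cor:equivalence}, optimally solving this \probSetCover instance decides \prob{Intersection Set Chasing}$(n,p)$, so any communication protocol for the former must use $\tldOmega(tn)=\tldOmega(\card{\sS}\cdot s)$ bits after absorbing $\polylog(n)$ factors. Finally I would invoke \obsref{obr:stream-communication-set-cover} to convert any $(\tfrac{1}{2\delta}-1)$-pass streaming algorithm using $S$ space into a $(\tfrac{1}{2\delta}-1)$-round protocol using $O(Sp^2)$ bits; since $p=O(1/\delta)=o(\log n)$, the $p^2$ factor is absorbed into the $\tldOmega$ and the lower bound $S=\tldOmega(\card{\sS}s)$ follows.

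The main technical obstacle is the structural claim that the specific reduction of \lemref{elpc-reduction-isc} actually produces \prob{Intersection Set Chasing} functions that decompose into $t$ pieces, none of which is $r$-non-injective, i.e., that the random permutations and the overlay construction genuinely preserve the per-component preimage bound from the input \prob{Equal Limited Pointer Chasing} instances. This is exactly the invariant sketched in the paragraph preceding the theorem and must be spelled out carefully; once it is in place, the propagation of parameters $t,r,p,\delta$ through the gadget construction is a routine bookkeeping exercise.
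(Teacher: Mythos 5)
Your proposal is correct and follows essentially the same route as the paper: instantiate Lemmas~\ref{lemma:elpc-lower-bound} and~\ref{lemma:elpc-reduction-isc} with $t\leq n^{\delta}$, $r=O(\log n)$, $p=\tfrac{1}{2\delta}-1$, observe that the overlay construction keeps each $f_i$ a union of $t$ pieces none of which is $r$-non-injective so that all $S$-type sets in the gadget have size $O(rt)=\tldO(t)$, and then push through Corollary~\ref{cor:equivalence} and Observation~\ref{observation:obr:stream-communication-set-cover}. The only discrepancy is a harmless bookkeeping slip (the $S$-type sets have size at most $rt+2$, not $rt+1$, once you count both the anchor element and $e_i$), which does not affect the $\tldO(t)$-sparsity conclusion.
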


% \newpage%
\hypersetup{linkcolor=black}%
\hypersetup{urlcolor=black}%
\bibliographystyle{alpha}%
{%\small
   \bibliography{set_cover}%
}
%\bibliography{shortcuts,geometry}

\end{document}